\def\ps@pprintTitle{%
   \let\@oddhead\@empty
   \let\@evenhead\@empty
   \def\@oddfoot{\reset@font\hfil\thepage\hfil}
   \let\@evenfoot\@oddfoot
}
\newtheorem{theorem}{Theorem}
\newtheorem{lemma}[theorem]{Lemma}
\newtheorem{proposition}[theorem]{Proposition}
\newtheorem{corollary}[theorem]{Corollary}
\newtheorem{example}[theorem]{Example}
\newcommand{\rank}{{\mathrm{rank}}}
\newcommand{\tr}{{\mathrm{Tr}}}
\newcommand{\gf}{{\mathrm{GF}}}
\newcommand{\PG}{{\mathrm{PG}}}
\newcommand{\cB}{{\mathcal{B}}}
\newcommand{\C}{{\mathcal{C}}}
\newcommand{\CE}{{\mathcal{C}_{\{3,5\}}}}
\newcommand{\bD}{{\mathbb{D}}}
\newcommand{\PGL}{{\mathrm{PGL}}}
\newcommand{\tabincell}[2]{\begin{tabular}{@{}#1@{}}#2\end{tabular}}
\begin{document}

\begin{frontmatter}




\title{  The Projective General Linear Group $\PGL_2(\gf(2^m))$ and Linear Codes of Length $2^m+1$}

\tnotetext[fn1]{
C. Ding's research was supported by the Hong Kong Research Grants Council,
Proj. No. 16300418. C. Tang's research was supported by The National Natural Science Foundation of China (Grant No.
11871058) and China West Normal University (14E013, CXTD2014-4 and the Meritocracy Research
Funds).
}

\author[cding]{Cunsheng Ding}
\ead{cding@ust.hk}

\author[cmt]{Chunming Tang}
\ead{tangchunmingmath@163.com}

\author[Tonchev]{Vladimir D. Tonchev}
\ead{tonchev@mtu.edu}

\address[cding]{Department of Computer Science and Engineering, The Hong Kong University of Science and Technology, Clear Water Bay, Kowloon, Hong Kong, China}

\address[cmt]{School of Mathematics and Information, China West Normal University, Nanchong, Sichuan,  637002, China; and also  
Department of Computer Science and Engineering, The Hong Kong University of Science and Technology, Clear Water Bay, Kowloon, Hong Kong, China}

\address[Tonchev]{Department of Mathematical Sciences, Michigan Technological University, Houghton, Michigan 49931, USA}



\begin{abstract}
The projective general linear group $\PGL_2(\gf(2^m))$ 
acts as a $3$-transitive permutation group on the  set of
points of the projective line.
 The first objective of this paper is to prove that
 all linear codes over $\gf(2^h)$ that are invariant under 
$\PGL_2(\gf(2^m))$ are trivial codes:  the repetition code, the whole space 
$\gf(2^h)^{2^m+1}$, and their dual codes.
As an application of this result, the $2$-ranks of the (0,1)-incidence matrices of all
$3$-$(q+1,k,\lambda)$ designs that are invariant under  $\PGL_2(\gf(2^m))$  
are determined. 
The second objective is to present two infinite families of cyclic codes over $\gf(2^m)$ 
such that the set of the supports of all codewords of any fixed nonzero weight 
is invariant under $\PGL_2(\gf(2^m))$, therefore, the codewords
 of any nonzero weight  support a 3-design.
A code from the first  family has parameters $[q+1,q-3,4]_q$, where $q=2^m$, 
and $m\ge 4$ is even. 
The exact number of the codewords of minimum weight is determined, and
the codewords of minimum weight support a 3-$(q+1,4,2)$ 
design.
A code from the second  family has parameters $[q+1,4,q-4]_q$, $q=2^m$, $m\ge 4$ even,
and the minimum weight codewords support a 3-$(q +1,q-4,(q-4)(q-5)(q-6)/60)$ design, whose complementary 3-$(q +1, 5, 1)$ design is isomorphic
to the Witt spherical geometry  with these parameters.
A lower bound on the dimension of a linear code over  $\gf(q)$ 
 that can support
a  3-$(q +1,q-4,(q-4)(q-5)(q-6)/60)$ design
 is proved, and it is shown that the designs supported by the
 codewords of minimum weight in the codes from
the second family of codes meet this bound.

\end{abstract}

\begin{keyword}
Cyclic code  \sep linear code \sep $t$-design 
\sep  projective general linear group \sep automorphism group.

\MSC  05B05 \sep 51E10 \sep 94B15

\end{keyword}

\end{frontmatter}


\section{Introduction}

A $t$-$(\nu,k,\lambda)$ design is an incidence structure $(X, \mathcal B)$, where
$X$ is a set of $\nu$ points and $\mathcal B$ a set of  $b$ $k$-subsets of $X$ called blocks,
such that any $t$ points are contained in exactly $\lambda$ blocks, where $\lambda >0$.
A $t$-design is a $t$-$(\nu,k,\lambda)$ design for some parameters $\nu,k,\lambda$.
A $t$-$(v,k,\lambda)$ design is also an $s$-$(v,k,\lambda_s)$ design for every $0\le s < t$,
where 
\[  \lambda_s =\frac{ {v-s \choose t-s}}{{k-s ] \choose t-s}}\lambda. \]
In particular, the number of blocks is equal to
\[ b=\lambda_0 = \frac{{v \choose t}}{{ k \choose t}}\lambda. \]

The incidence matrix $A=(a_{i,j})$ of a design $\bD$ is a (0,1)-matrix with rows
indexed by the blocks, and columns indexed by the points of $\bD$, where $a_{i,j}=1$
if the $j$th point belongs to the $i$th block, and $a_{i,j}=0$ otherwise.
If $q$ is a prime power, the $q$-rank of $\bD$ (or $\rank_{q}\bD$) is defined as
the rank of its incidence matrix $A$ over a finite field $\gf(q)$ of order $q$:
 $\rank_{q}\bD = \rank_{q}A$.
Equivalently, the $q$-rank of a design is the dimension of the linear $q$-ary code
spanned by the rows of its (0,1)-incidence matrix.

A {\it generalized } incidence matrix of a design $\bD$ over a finite field $\gf(q)$,
or shortly, an  $\gf(q)$-incidence matrix of $\bD$, is any matrix obtained by replacing
the nonzero entries of the (0,1)-incidence matrix of $\bD$ with arbitrary nonzero
elements of $\gf(q)$. The {\it dimension} of a $t$-$(v,k,\lambda)$
design $\bD$ over $\gf(q)$ (or the $q$-dimension of $\bD$, or $\dim_{q}\bD$),
is defined in \cite{Tdim} as the minimum among the dimensions of all linear codes
of length $v$ over $\gf(q)$ that contain the blocks of $\bD$ among the supports of codewords 
of weight $w$. Equivalently, the $q$-dimension of $\bD$ is equal to 
\[ \dim_{q}\bD = \min \ \rank_{q}M, \]
where $M$ runs over the set of all $(q-1)^{bk}$ generalized  $\gf(q)$-incidence
matrices of $\bD$, and $b$ is the number of blocks.
Clearly, $\dim_{q}\bD \le \rank_{q}\bD$. For example, if $\bD$ is the 4-$(11,5,1)$ design
supported by the codewords of minimum weight in the ternary Golay code of length 11 and dimension 6,
$\dim_{3}\bD=6$, while $\rank_{3}\bD=11$.
A generalization of this definition  of the $q$-dimension of a design 
 is given in \cite{JT13}.

The importance of interactions between groups, linear codes and $t$-designs has been 
well recognized for decades.  
For example, Assmus and Mattson  \cite{AM69} pointed out in 1969 that
 $5$-designs arise from certain extremal self-dual codes, including the
extended Golay codes that are closely  related to the 5-transitive Mathieu groups. 
Linear codes that are invariant under  groups acting on the  set of code coordinates  
have found important applications  for the construction of combinatorial $t$-designs. 
Examples of such codes are the Golay codes, the quadratic-residue codes,
and the affine-invariant codes \cite[Chapter 6]{Dingbook18}. 
  
  This paper presents a number of new results about $3$-designs arising from linear codes 
associated  with the projective general linear group $\PGL_2(\gf(2^m))$.

Let $\PGL_2(\gf(q))$ be the projective general linear group acting
as a permuttion group on the set of points of the projective line $\PG(1,q)$ 
over a finite field $\gf(q)$ with $q$ elements.
Every vector in the $(q+1)$-dimensional vector space $\gf (r)^{q+1}$  can be written as 
$(c_{x})_{x\in \PG(1,q)}$, where $c_x \in \gf(r)$ and $r$ is a prime power. In other words, the coordinates of the vectors in $\gf (r)^{q+1}$ can be indexed by the points  
in $\PG(1,q)$.  
Consider the induced action of $\PGL_2(\gf(q))$ on $\gf (r)^{q+1}$ by the left translation:
$$ 
\pi:  (c_{x})_{x\in \PG(1,q)}  \mapsto (c_{\pi (x)})_{x\in \PG(1,q)}, 
$$ 
where $(c_{x})_{x\in \PG(1,q)} \in \gf(r)^{q+1}$ and $\pi \in  \PGL_2(\gf(q))$.
Let $\mathcal C$ be a linear code of length $q+1$ over $\gf(r)$.
We say that  $\mathcal C$  is 
\emph{invariant under $\PGL_2(\gf(q))$} if each element of $\PGL_2(\gf(q))$ 
carries each codeword of $\mathcal C$ into a codeword of $\mathcal C$.
In other words, $\mathcal C$ is invariant under $\PGL_2(\gf(q))$ 
 if $\mathcal C$ admits $\PGL_2(\gf(q))$ as a subgroup of the permutation automorphism group of $\C$.
 For a codeword $\mathbf c =(c_x)_{x\in \PG(1,q)}$ in $\mathcal C$, the \emph{support} of  
 $\mathbf c$
is defined as
\begin{align*}
\mathrm{Supp}(\mathbf c) = \{x \in \PG(1,q): c_x \neq 0 \}.
\end{align*}
Let $A_{w}(\mathcal  C)= | \left \{\mathbf c \in \mathcal C: wt(\mathbf{c})=w \right \}|$ and
$\mathcal B_{w}(\mathcal C)
=\{   \mathrm{Supp}(\mathbf c): wt(\mathbf{c})=w
~\text{and}~\mathbf{c}\in \mathcal{C}\}$, where $wt(\mathbf{c})$ denotes the Hamming weight of $\mathbf c$.
$\mathcal B_{w}(\mathcal C)$ is said to be invariant under $\PGL_2(\gf(q))$ if the support
$\mathrm{Supp} \left ( (c_{\pi(x)})_{x\in \PG(1,q)} \right )$ belongs to 
$\mathcal B_{w}(\mathcal C)$
for every $\pi \in \PGL_2(\gf(q))$ and any codeword $(c_{x})_{x\in \PG(1,q)}$ of weight $w$ in 
$\mathcal C$.
It is easily seen that if $\mathcal C$ is invariant under $\PGL_2(\gf(q))$, then so is 
$\mathcal B_{w}(\mathcal C)$ for each $w$.
Moreover, if $\mathcal B_{w}(\mathcal C)$ is invariant under $\PGL_2(\gf(q))$, then 
$\left ( \PG(1,q),  \mathcal B_{w}(\mathcal C) \right)$
holds a $3$-design provided $A_w  (\mathcal C)\neq 0$,
since the action of $\PGL_2(\gf(q))$ on $\PG(1,q)$ is $3$-transitive (see 
\cite[Propositions 4.6 and 4.8]{BJL} or \cite[Proposition 1.27]{Tonchevhb}).
For more related results on linear codes and $t$-designs, 
we refer the reader to \cite{DWF,Dingbook18}. 

The first objective of this paper is to investigate the possible parameters of  linear codes
that are invariant under  $\PGL_2(\gf(q))$. 
We focus on the case when $q$ and $r$ are powers of $2$.
We prove in  Section 4, Theorem \ref{thm:code-PGL-4}, that the only linear codes of length 
$2^m+1$ over $\gf(2^h)$ that
 are invariant under $\PGL_2(\gf(2^m))$  are  trivial codes: the zero code, the whole space 
 $\gf(2^h)^{2^m+1}$, the repetition code, and its dual code. 
 As an application of this result, the $2$-ranks of the (0,1)-incidence matrices of all
$3$-$(q+1,k,\lambda)$ designs that are invariant under  $\PGL_2(\gf(2^m))$  
are determined,   and it is proved in Theorem \ref{thm;2-rank}
that any such design has  2-rank equal to $q+1$ if the block size $k$ is odd,
and  $q$ if $k$ is even.
 
The second objective of this paper  is to investigate the question whether
there are  any nontrivial  linear codes of length $2^m+1$ over  $\gf(2^m)$, such that
the set of  the supports of all codewords of any fixed  nonzero weight is invariant under  
$\PGL_2(\gf(2^m))$. In Section 5, we answer this question in the affirmative 
by presenting two infinite families of cyclic codes 
of length $2^m+1$ over $\gf(2^m)$, such that the set of the supports of the codewords of any fixed weight is invariant  under $\PGL_{2}(\gf(2^m))$, therefore, the codewords
 of any nonzero weight  support a 3-design.
These codes are obtained as subfield subcodes and trace codes of certain cyclic codes
 over $\gf(2^{2m})$ and their dual codes (Theorems \ref{thm:dual-C-3-5} and
 \ref{thm:C-3-5}).

A code from the first  family has parameters $[q+1,q-3,4]_q$, where $q=2^m$, 
and $m\ge 4$ is even. 
The exact number of the codewords of minimum weight is determined, and
the codewords of minimum weight support a 3-$(q+1,4,2)$ 
design.
To the best  knowledge of the authors, this is the first infinite family of 
linear codes that support an infinite family of $3$-$(v, 4, 2)$ designs.
 The codewords of every other nonzero weight also support  3-designs.
 
 A code from the second  family has parameters $[q+1,4,q-4]$, $q=2^m$, 
 $m\ge 4$ even.
 The exact number of the codewords of minimum weight is determined, and
 the minimum weight codewords support a 3-$(2^m +1,q-4,\lambda)$ design   
 with
 \[ \lambda=\frac{(q-4)(q-5)(q-6)}{60}, \]
  whose complementary 3-$(q +1, 5, 1)$ design is shown to be isomorphic
to the Witt spherical geometry with these parameters. 
 In Section \ref{Sec6}, a lower bound on the $q$-dimension of
a 3-$(q+1,(q-4),(q-4)(q-5)(q-6)/60)$ design
 is proved in Theorem \ref{bound}, and it  is shown
that  the infinite family of 3-designs  described in Theorem \ref{thm:10-5} meet this bound.


\section{Preliminaries}

\subsection{Group actions and $t$-designs}

A \emph{permutation group} is a subgroup of the \emph{symmetric group} $\mathrm{Sym}(X)$, where $X$ is a finite set.
More generally, an \emph{action} $\sigma$ of a finite group $G$ on a set $X$ is a homomorphism $\sigma$ from $G$ to $\mathrm{Sym}(X)$.
We denote the image $\sigma(g)(x)$ of $x\in X$ under $g\in G$ by $g(x)$ when no confusion can arise.
The \emph{$G$-orbit} of $x\in X$ is $\mathrm{Orb}_{x}=\{g(x): g \in G\}$.
The \emph{stabilizer} of $x$ is $\mathrm{Stab}_x=\{g \in G: g(x)=x\}$. The length of the 
orbit of $x$ is given by
\[ \left | \mathrm{Orb}_{x} \right | = \left | G \right | / \left | \mathrm{Stab}_x \right |.\]

One criterion to measure the level of symmetry is the \emph{degree} of \emph{transitivity} 
and \emph{homogeneousity} of the group. 
Recall that a group $G$ acting on a set $X$ is  \emph{$t$-transitive} (resp., 
\emph{$t$-homogeneous}) if
for any two ordered $t$-tuples $(x_1, \cdots, x_t), (x_1', \cdots, x_t')$
of  distinct elements from $X$ (resp., two unordered $t$-subsets 
$\{x_1, \cdots, x_t\}$, $\{x_1', \cdots, x_t'\}$ of $X$) there is some $g\in G$ such that 
$\left (x_1', \cdots, x_t' \right )= \left (g(x_1), \cdots, g(x_t) \right )$
(resp., $\left \{x_1', \cdots, x_t' \right \}=\left \{g(x_1), \cdots, g(x_t) \right \}$).

We recall a well-known general fact (see, e.g.  \cite[Proposition 4.6]{BJL}), that  for a $t$-homogeneous group $G$ on a finite set $X$ with
$ |X|=\nu$  and a subset $B$ of $X$ with $|B| =k >t$, the pair 
$(X, \mathrm{Orb}_{B})$ is a 
$t$-$(\nu, k , \lambda)$ design,
where $\mathrm{Orb}_{B}$ is the set of images of $B$ under
the group $G$, 
$\lambda=\frac{\binom{k}{t} |G| }{\binom{\nu}{t} | \mathrm{Stab}_{B}|}$ and 
$\mathrm{Stab}_{B}$ is the setwise stabilizer of
$B$ in $X$. Let $\binom{X}{k}$ be the set of subsets of $X$ consisting of $k$ elements.
A nonempty subset $\mathcal B$ of $\binom{X}{k}$ is called \emph{invariant} under
 $G$ if $\mathrm{Orb}_{B} \subseteq \mathcal B$ for any $B \in \mathcal B$.
If this is the case, it means that the pair $(X, \mathcal B)$ is a $t$-$(\nu, k , \lambda)$ design
admitting $G$ as an automorphism group for some $\lambda$. For some recent works on  $t$-designs from group actions,  
we refer the reader to \cite{Tang,XLW}.

\subsection{Projective general linear groups of degree two}
The \emph{projective linear group $\PGL_2(\gf(q))$ of degree two} is defined as the group
of invertible $2\times 2$ matrices with entries in $\gf(q)$,
 modulo the scalar matrices, 
 $\begin{bmatrix}
a & 0\\
0 &  a
\end{bmatrix}$, where $a\in \gf(q)^*$.
Note that the group $\PGL_2(\gf(q))$ is generated by the matrices
$\begin{bmatrix}
a & 0\\
0 &  1
\end{bmatrix}$,
$\begin{bmatrix}
1 & b\\
0 &  1
\end{bmatrix}$
and
$\begin{bmatrix}
0 & 1\\
1 &  0
\end{bmatrix}$, where $a\in \gf(q)^*$ and $b\in \gf(q)$.

Here the following convention for the action of $\PGL_2(\gf(q))$ on the projective line 
$\mathrm{PG}(1,q)$
is used. A matrix 
$\begin{bmatrix}
a & b\\
c &  d
\end{bmatrix}
\in \PGL_2(\gf(q))$ acts on $\mathrm{PG}(1,q)$ by
\begin{eqnarray}\label{eq:action-PGL(2,q)}
\begin{array}{c}
(x_0 : x_1) \mapsto \begin{bmatrix}
a & b\\
c &  d
\end{bmatrix} (x_0 : x_1) =   (a x_0 +b x_1 : c x_0 +d x_1),
\end{array}
\end{eqnarray}
or, via the usual identification of $\gf(q) \cup \{ \infty\}$ with  $\mathrm{PG}(1,q)$, by linear fractional transformation
\begin{eqnarray}\label{eq:action-infty}
\begin{array}{c}
 x \mapsto \frac{a x +b  }{c x +d}.
\end{array}
\end{eqnarray}
This is an action on the left, i.e., for $\pi_1, \pi_2 \in  \PGL_2(\gf(q))$
and $x \in  \PG(1,q)$ the following holds: $\pi_1 (\pi_2(x)) = (\pi_1 \pi_2)(x)$.
The action of $\PGL_2(\gf(q))$ on $\PG(1,q)$ defined in (\ref{eq:action-infty}) 
is sharply $3$-transitive, i.e.,
for any distinct $a, b, c \in \gf(q) \cup \{\infty\}$ there is $\pi \in \PGL_2(\gf(q))$
taking $\infty$ to $a$, $0$ to $b$,  and $1$ to $c$.
 In fact, $\pi$ is uniquely determined and it equals
\[ \pi= \begin{bmatrix}
 a(b-c) & b(c-a)\\
 b-c & c-a
 \end{bmatrix}.
\]
Thus, $\PGL_2(\gf(q))$ is in one-to-one correspondence with the set of ordered triples 
$(a,b,c)$ of distinct elements in $\gf(q) \cup \{\infty\}$, and in particular
\begin{eqnarray}\label{eq:cardinality of PGL(2,q)}
\begin{array}{c}
| \PGL_2(\gf(q))| = (q+1)q(q-1).
\end{array}
\end{eqnarray}

Two subgroups $H_1$ and $H_2$ of a group $G$ are said to be \emph{conjugate} if there is a $g \in G$ such 
that $gH_1g^{-1}=H_2$. It is easily seen that this conjugate relation is an equivalence relation on the set of all subgroups 
of $G$, and is called the conjugacy.  The conjugacy classification of subgroups of 
$\PGL_2(\gf(2^m))$ is well known \cite{Dickson01}. Table \ref{tab:subgroups}
 specifies all the subgroups of $\PGL_2(\gf(2^m))$ up to conjugacy.

 \begin{table}[!htbp]
\centering
\begin{tabular}{|c|c|c|c|}
\hline
Type & Maximal order & \tabincell{c}{Number of \\ conjugacy classes} & Condition\\
\hline
\hline
$2$-group & $2^m$ &  $-$ & $-$\\
\hline
Frobenius & $2^m(2^m-1)$ &  $-$ & $-$\\
\hline
Cyclic & $2^m-1$ &  one & $-$\\
\hline
Cyclic & $2^m+1$ &  one & $-$\\
\hline
Dihedral & $2(2^m-1)$ &  one  & $-$ \\
\hline
Dihedral & $2(2^m+1)$ &  one  & $-$ \\
\hline
$\PGL_2(\gf(2^{m'}))$ & $2^{m'}(4^{m'}-1)$ &  one & $m'| m$\\
\hline
$A_4$ & 12 &  $-$  & $2|m$\\
\hline
 $A_5$ & 60 &  $-$ &  $2|m$\\
\hline
\end{tabular}
\caption{Subgroups of $\PGL_2(\gf(2^m))$}\label{tab:subgroups}
\end{table}

We recall here the classification of sharply $3$-transitive finite permutation groups on finite sets of odd cardinality (see for instance \cite{Passman68}).

\begin{theorem}\label{thm:sharply-3-transitive}
Let $G$ be a sharply $3$-transitive permutation group on the finite set $X$ of odd cardinality.
Then it is possible to identify the elements of $X$ with the points of the projective line 
$\PG(1,2^m)$
in such a way that $G = \PGL_2(\gf(2^m))$ holds.
\end{theorem}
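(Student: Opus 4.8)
The plan is to prove Theorem~\ref{thm:sharply-3-transitive} by reducing it to the structure theory already recorded in the excerpt, in particular to the fact (equation~\eqref{eq:cardinality of PGL(2,q)}) that a sharply $3$-transitive group on $n$ points has order $n(n-1)(n-2)$, together with the abstract classification of sharply $3$-transitive groups in terms of nearfields. First I would fix notation: let $G$ act sharply $3$-transitively on $X$ with $|X|=n$ odd, so $|G|=n(n-1)(n-2)$, and pick three distinct points $\infty,0,1\in X$. Let $H=\Stab_{\infty}$ and $K=\Stab_{\infty}\cap\Stab_{0}$; then $H$ acts sharply $2$-transitively on $X\setminus\{\infty\}$ (a set of even size $n-1$), and $K$ acts sharply transitively (regularly) on $X\setminus\{\infty,0\}$.

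The core step is the classical argument of Zassenhaus: a sharply $2$-transitive group gives rise to a \emph{nearfield}. Concretely, identify $X\setminus\{\infty\}$ with a set $F$ containing $0$ and $1$; the regular normal subgroup $N\trianglelefteq H$ (which exists because $n-1$ is even, so a Sylow-type/Frobenius-kernel argument forces $H$ to be a Frobenius group with kernel $N$ of order $n-1$) supplies the additive group $(F,+)$, and the point stabilizer $K$ supplies the multiplicative loop $(F^{*},\cdot)$; sharp $2$-transitivity translates exactly into the near-field axioms. Because $n-1$ is even, the additive group $N$ has even order, and one shows it is an elementary abelian $2$-group: indeed in a finite near-field the additive group is elementary abelian of prime-power order, so $n-1=2^{m}$ for some $m$, hence $n=2^{m}+1$. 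At this point the finite near-fields of characteristic $2$ must be identified; by the Zassenhaus classification of finite near-fields, the only near-field whose additive group is elementary abelian $2$-group and which is associated with a sharply $3$-transitive action (equivalently, with a $2$-dimensional module structure forced by the third point $1$) is the Galois field $\gf(2^{m})$ itself — the seven exceptional Dickson near-fields all have odd characteristic, so they are excluded. Therefore the near-field is $\gf(2^{m})$, and the sharply $3$-transitive group it determines is, by construction, $\PGL_2(\gf(2^{m}))$ acting on $\gf(2^{m})\cup\{\infty\}=\PG(1,2^{m})$; matching $\infty,0,1$ with our chosen triple yields $G=\PGL_2(\gf(2^{m}))$ under this identification.

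The main obstacle is the near-field classification itself: verifying that sharp $2$-transitivity of $H$ forces the existence of the regular normal subgroup $N$ and that $(F,+,\cdot)$ is a genuine near-field requires the Frobenius-group machinery (a transitive group in which only the identity fixes two points has a regular normal subgroup), and then ruling out the Dickson exceptions in characteristic $2$ uses the detailed list of the seven sporadic near-fields — all of orders $5^{2},7^{2},11^{2},23^{2},29^{2},59^{2}$, i.e., odd — so none survives the hypothesis that $|X|$ is odd, equivalently that the near-field has characteristic $2$. I would state this as a citation to \cite{Passman68} (or Zassenhaus) rather than reproving it, since the excerpt explicitly invokes that reference; the self-contained part of the write-up is then just the translation between ``sharply $3$-transitive action on an odd set'' and ``near-field of characteristic $2$,'' plus the observation that such a near-field is necessarily a field. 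A minor point to handle carefully is the case $m=1$, where $n=3$ and $\PGL_2(\gf(2))\cong\Sym(3)$ acts sharply $3$-transitively on $3$ points, so the statement holds trivially there and no degenerate exceptions arise.
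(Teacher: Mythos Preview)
The paper does not prove this theorem at all: it is stated as a classical background result and attributed to \cite{Passman68} with the phrase ``We recall here the classification\ldots''. So there is no ``paper's own proof'' to compare against; your write-up goes well beyond what the paper does.

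Your outline is the standard Zassenhaus route and is correct in its overall shape: pass to the one-point stabilizer $H$, recognise it as a sharply $2$-transitive (Frobenius) group, extract a near-field structure on $X\setminus\{\infty\}$, deduce characteristic $2$ from $|X|$ odd, and conclude $G\cong\PGL_2(\gf(2^m))$. One point deserves tightening. You write that ``the seven exceptional Dickson near-fields all have odd characteristic, so they are excluded,'' as if this alone forces the near-field to be $\gf(2^m)$. But there \emph{do} exist non-field Dickson near-fields in characteristic $2$ (e.g.\ of order $4^3$); what is true, and what you actually need, is that the only finite near-fields \emph{admitting the extra involutory structure required to extend the sharply $2$-transitive action to a sharply $3$-transitive one} (equivalently, the only KT-fields, in Karzel's terminology) in characteristic $2$ are the Galois fields. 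Zassenhaus's classification of finite sharply $3$-transitive groups gives exactly two families: $\PGL_2(\gf(q))$ for all $q$, and the twisted groups $M(q)$ which exist only for odd $q$. Your parenthetical ``associated with a sharply $3$-transitive action'' shows you sense this distinction, but the justification you give (ruling out only the seven Zassenhaus exceptional near-fields) is not the right one; cite the full Zassenhaus classification of sharply $3$-transitive groups rather than the near-field list.
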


\subsection{Linear codes and cyclic codes}

Let $\gf(r)$ be the finite field with $r$ elements.
An \emph{$[n,k]_{r}$ linear code} $\mathcal C$ is a $k$-dimensional vector subspace of $\gf(r)^{n}$.
If it has minimum distance $d$ it is also called an $[n, k, d]_r$ code.
The \emph{dual code} $\mathcal C^{\perp}$ of $\mathcal C$ is the set of vectors orthogonal to all codewords of $\mathcal C$:
\[ \mathcal C^{\perp} = \{\mathbf w \in \gf(r)^{n} :  \langle \mathbf c, \mathbf w \rangle =0 \text{ for all } \mathbf c \in \mathcal C\}, \]
where $ \langle \mathbf c, \mathbf w \rangle$ is the usual Euclidean inner product of $\mathbf c$ and $ \mathbf w $.
Let $\mathbf{a}=(a_0, \cdots, a_{n-1}) \in \left ( \gf (r)^* \right )^n$.
Here and subsequently, $\mathbf{a} \cdot \mathcal C$ stands for the linear code
$\left \{ (a_0 c_0, \cdots, a_{n-1}c_{n-1}): ( c_0, \cdots, c_{n-1}) \in \mathcal C \right \}$.
It is a simple matter to check that
\begin{eqnarray}\label{eq:a-code-dual}
\left ( \mathbf{a} \cdot \mathcal C \right )^{\perp} =  \mathbf{a}^{-1} \cdot  \mathcal C ^{\perp},
\end{eqnarray}
where $\mathbf{a}^{-1}=(a_0^{-1}, \cdots, a_{n-1}^{-1})$.

There are two classical ways to construct a code over $\gf(r)$ from a given code over $\gf(r^h)$.
Let $\mathcal C$ be a code of length $n$ over $\gf(r^h)$. Then the \emph{subfield subcode }
$\mathcal C |_{\gf(r)}$ equals $\mathcal C \cap \gf(r)^n  $, the set of those codewords of $\mathcal C$ all of whose coordinate entries belong to the subfield $\gf(r)$.
The \emph{trace code} of $\mathcal C$ is given by
\[  \tr_{r^h/r} (\mathcal C) =\left \{ \left (\tr_{r^h/r}(c_0), \cdots,  \tr_{r^h/r}(c_{n-1})\right ): (c_0, \cdots, c_{n-1}) \in \mathcal C  \right \}, \]
where $\tr_{r^h/r}$ denotes the trace function from  $\gf(r^h)$ to $\gf(r)$. A celebrated result of Delsarte \cite{Del75} states
that the subfield code $ \left . \mathcal C^{\perp} \right |_{\gf(r)}$ and the trace code $ \tr_{r^h/r} (\mathcal C)$ are duals of each other, namely, 
\begin{eqnarray}\label{eq:subfield-trace}
\left ( \tr_{r^h/r} (\mathcal C) \right )^{\perp}= \left . \mathcal C^{\perp} \right |_{\gf(r)}.
\end{eqnarray}

Conversely, given a linear code $\mathcal  C$ of length $n$ and dimension $k$ over $\gf(r)$, we define
 a linear code $  \gf(r^h) \otimes \mathcal C$ over $\gf(r^h)$ by 
\begin{eqnarray}
 \gf(r^h) \otimes \mathcal C=\left \{\sum_{i=1}^{k} a_i \mathbf{c}_i: (a_1, a_2, \ldots, a_k) \in \gf(r^h)^k \right \},
\end{eqnarray} 
where $\left \{ \mathbf{c}_1, \mathbf{c}_2, \ldots, \mathbf{c}_k \right \}$ is a basis of $\mathcal C$ over $\gf(r)$. 
This code is independent of the choice of the basis $\left \{ \mathbf{c}_1, \mathbf{c}_2, \ldots, \mathbf{c}_k \right \}$ 
of $\C$, is called the \emph{lifted code} of $\mathcal C$ to $\gf(r^h)$. Clearly, $\gf(r^h) \otimes \mathcal C$ 
and $\C$ have the same length, dimension and minimum distance, but different weight distributions.  
A trivial verification shows that if $(c_0, \cdots, c_{n-1}) \in \gf(r^h) \otimes \mathcal C $,
then  $(c_0^r, \cdots, c_{n-1}^r) \in \gf(r^h) \otimes \mathcal C $. Applying  \cite[Lemma 7]{GP10}, one has
\[  \tr_{r^h/r} \left (  \gf(r^h) \otimes \mathcal C \right ) =\left . \left (  \gf(r^h) \otimes \mathcal C  \right ) \right |_{\gf(r)}. \]

Let $n$ be a positive integer with $\gcd(n,r)=1$.
The \emph{order} $\mathrm{ord}_{n}(r)$ of $r$ modulo $n$ is the smallest positive integer $h$ such that $r^h \equiv  1 \pmod{n}$.
Let $\mathbb Z_n$ denote the ring of residue classes of integers modulo $n$.
The $r$-cyclotomic coset of $e\in \mathbb Z_n$ is the set $[e]_{(r,n)}= \{ r^ie: 0 \le i \le \mathrm{ord}_n(r)-1\}$.
Then any two $r$-cyclotomic cosets are either equal or disjoint.
A subset $E$ of $\mathbb Z_n$ is called $r$-invariant if
the set $\{re: e \in E\}$ equals $E$, that is, $E$ is the union of some $r$-cyclotomic cosets.
A subset $\widetilde{E}=\{e_1, \cdots, e_t\}$ of an $r$-invariant set $E$ is called a complete set of representatives of $r$-cyclotomic cosets of $E$
if $[e_1]_{(r,n)}, \cdots, [e_t]_{(r,n)}$  are pairwise distinct and $E= \cup_{i=1}^{t} [e_i]_{(r,n)}$.

An $[n, k]_r$ code  $\mathcal C$ is \emph{cyclic} if $(c_0, c_1, \cdots, c_{n-1}) \in \mathcal  C$ implies that $(c_{n-1}, c_0, ..., c_{n-2}) \in  \mathcal C$.
Let $\gamma$ be a primitive $n$-th root of unity in $\gf(r^h)$, where $h=\mathrm{ord}_{n}(r)$.
It is known \cite{HP03} that any $r$-ary cyclic code of length $n$ with $\gcd(n,r)=1$ has a simple description by means of the trace function.

\begin{theorem}\label{thm:cyclic-trance}
Let $\mathcal C$ be an $[n,k]_r$ cyclic code with $\gcd(n,r)=1$ and $\gamma$ be a primitive
$n$-th root of unity in $\gf(r^h)$, where $h=\mathrm{ord}_{n}(r)$.
Then there exists a unique $r$-invariant set $E \subseteq \mathbb Z_n$ such that
\begin{eqnarray*}
\mathcal C= \left \{  \left (  \sum_{i=1}^{t} \tr_{r^{h_i}/r} \left ( a_i \gamma^{e_ij} \right ) \right )_{j=0}^{n-1}: a_i \in \gf\left (r^{h_i} \right )\right \},
\end{eqnarray*}
where $\{e_1, \cdots, e_t\}$ is any complete set of representatives of $r$-cyclotomic cosets of $E$ and $h_i=| [e_i]_{(r,n)}|$.
Moreover, $k=|E| = \sum_{i=1}^t h_i$.
\end{theorem}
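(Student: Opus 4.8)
The plan is to prove this via the discrete Fourier transform (Mattson--Solomon) description of cyclic codes, after first recording the classical correspondence between cyclic codes and defining sets.

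First I would identify $\gf(r)^n$ with the quotient ring $R=\gf(r)[x]/(x^n-1)$ in the usual way, so that cyclic codes of length $n$ are exactly the ideals of $R$. Since $\gcd(n,r)=1$, the polynomial $x^n-1$ has $n$ distinct roots $\gamma^{e}$, $e\in\mathbb Z_n$, in $\gf(r^h)$, and over $\gf(r)$ it factors into distinct monic irreducibles, one for each $r$-cyclotomic coset of $\mathbb Z_n$. Hence each ideal $\mathcal C$ is determined by, and determines, its \emph{defining set}, i.e.\ the set of $e$ at which every codeword vanishes (under a fixed sign convention for the transform below). Because $\mathcal C$ consists of polynomials with coefficients in $\gf(r)$, raising to the $r$-th power and using $c(\gamma^{e})^r=c(\gamma^{re})$ shows this set is $r$-invariant; writing it as $\mathbb Z_n\setminus E$ defines an $r$-invariant $E$, and $|E|=n-\deg g=k$, where $g$ is the generator polynomial. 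Uniqueness of $E$ is immediate, since $E$ is recovered from $\mathcal C$ as the set of $e$ for which some codeword has nonzero $e$-th transform coordinate.

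Next I would set up the transform itself. As $n$ is invertible in $\gf(r)$ and the Vandermonde matrix $(\gamma^{ej})_{e,j}$ is nonsingular over $\gf(r^h)$, every $c=(c_j)_{j=0}^{n-1}\in\gf(r)^n$ has a unique expansion $c_j=\sum_{e\in\mathbb Z_n}A_e\gamma^{ej}$ with $A_e\in\gf(r^h)$; one checks that $c\in\gf(r)^n$ is equivalent to the Frobenius relations $A_{re}=A_e^{\,r}$ for all $e$, while $c\in\mathcal C$ forces $A_e=0$ for $e\notin E$. Fixing a complete set $\{e_1,\dots,e_t\}$ of coset representatives of $E$ with $h_i=|[e_i]_{(r,n)}|$, the relations $A_{r^\ell e_i}=A_{e_i}^{\,r^\ell}$ show that the whole sequence $(A_e)$ is determined by the values $A_{e_i}$, and consistency after $h_i$ steps forces $A_{e_i}\in\gf(r^{h_i})$, any such choice being admissible. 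Grouping the expansion by cosets and using that $A_{e_i}\gamma^{e_ij}$ lies in $\gf(r^{h_i})$ --- because $\gamma^{e_i}$ has order $n/\gcd(n,e_i)$ and $h_i$ is exactly $\ord_{n/\gcd(n,e_i)}(r)$ --- each coset contributes $\sum_{\ell=0}^{h_i-1}\bigl(A_{e_i}\gamma^{e_ij}\bigr)^{r^\ell}=\tr_{r^{h_i}/r}\bigl(A_{e_i}\gamma^{e_ij}\bigr)$ to $c_j$. Absorbing the scalar $n^{-1}\in\gf(r)$ into $a_i:=n^{-1}A_{e_i}$ (which commutes with the trace) yields the displayed formula, and $a_i\mapsto c$ is a $\gf(r)$-linear bijection from $\bigoplus_{i}\gf(r^{h_i})$ onto $\mathcal C$, so $k=\sum_i h_i=|E|$. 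Independence of the choice of representatives follows since replacing $e_i$ by $r^\ell e_i$ and $a_i$ by $a_i^{\,r^\ell}$ leaves each summand $\tr_{r^{h_i}/r}(a_i\gamma^{e_ij})$ unchanged, as $\ell$ merely permutes the exponents in the trace.

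The routine parts are the existence of $\gamma$ and the invertibility of the transform, together with the ideal/defining-set dictionary. The one place requiring care is the field-of-definition bookkeeping: verifying that the Frobenius constraint cuts out exactly a copy of $\gf(r^{h_i})$ on the $i$-th coset, with $\gamma^{e_i}$ (hence $\gamma^{e_ij}$) genuinely lying in that field, so that the traces in the statement are well defined and the dimension count $k=\sum_i h_i$ comes out with no over- or under-counting. Everything else is standard finite-field linear algebra.
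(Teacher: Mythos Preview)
The paper does not actually prove this theorem: it is stated as a known result with a citation to Huffman--Pless \cite{HP03}, so there is no ``paper's own proof'' to compare against. Your Mattson--Solomon/DFT argument is the standard one and is essentially correct.

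One small slip: in your chosen convention $c_j=\sum_{e}A_e\gamma^{ej}$ there is no factor $n^{-1}$ to absorb; that scalar lives in the inverse transform $A_e=n^{-1}\sum_j c_j\gamma^{-ej}$, which you never need to write down. Simply set $a_i:=A_{e_i}$ and the displayed formula drops out directly. Everything else --- the Frobenius constraint $A_{re}=A_e^{\,r}$ forcing $A_{e_i}\in\gf(r^{h_i})$, the verification that $\gamma^{e_i}\in\gf(r^{h_i})$ via $h_i=\ord_{n/\gcd(n,e_i)}(r)$, and the coset-grouping into traces --- is correct as stated.
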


Theorem \ref{thm:cyclic-trance} states that there is a one-to-one correspondence between cyclic linear codes
over $\gf(r)$ with length $n$ and $r$-invariant subsets of 
$\mathbb Z_n$ with respect to a fixed $n$-th root of 
unity $\gamma$.
We will call the set $E$ in Theorem \ref{thm:cyclic-trance}  the \emph{cyclicity-defining set} of $\mathcal C$ with respect to $\gamma$.

The following corollary is an immediate consequence of Theorem \ref{thm:cyclic-trance}.

\begin{corollary}\label{cor:cyclic-extension}
Let $n$ be a positive integer such that $\gcd(n,r)=1$.
 Let $\mathcal C$ be an $[n,k]_r$ cyclic code with cyclicity-defining set $E$ and $\gf(r^{\ell}) \otimes \mathcal C$ be the lifted code of $\C$ to $\gf(r^{\ell})$.
Then $\gf(r^{\ell}) \otimes \mathcal C$ is an $[n,k]_{r^{\ell}}$ cyclic code defined by the cyclicity-defining set $E$ of $\mathcal C$. In particular, 
\begin{eqnarray*}
\gf(r^{h}) \otimes  \mathcal C= \left \{  \left (  \sum_{e\in E} a_e \gamma^{je} \right )_{j=0}^{n-1}: a_e \in \gf\left (r^{h} \right )\right \},
\end{eqnarray*}
where $h=\mathrm{ord}_n(r)$ and $\gamma$ is a primitive $n$-th root of unity in $\gf(r^h)$.
\end{corollary}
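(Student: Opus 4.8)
The plan is to establish the displayed ``in particular'' identity first and then deduce the first assertion from it. Write $\mathcal C$ in the trace form of Theorem~\ref{thm:cyclic-trance}, fix for each $i\in\{1,\dots,t\}$ a $\gf(r)$-basis $b_{i,1},\dots,b_{i,h_i}$ of $\gf(r^{h_i})$, and set $\mathbf{c}_{i,s}=\bigl(\tr_{r^{h_i}/r}(b_{i,s}\gamma^{e_ij})\bigr)_{j=0}^{n-1}$. Because the trace is $\gf(r)$-linear, every codeword of $\mathcal C$ is a $\gf(r)$-combination of the $\mathbf{c}_{i,s}$; there are $\sum_i h_i=k=\dim_r\mathcal C$ of them, so they form a $\gf(r)$-basis of $\mathcal C$, and $\gf(r^h)\otimes\mathcal C$ is exactly their $\gf(r^h)$-span.

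Next I would expand $\tr_{r^{h_i}/r}(b_{i,s}\gamma^{e_ij})=\sum_{u=0}^{h_i-1}b_{i,s}^{\,r^{u}}\gamma^{(e_ir^{u}\bmod n)\,j}$ and use that $u\mapsto e_ir^{u}\bmod n$ maps $\{0,\dots,h_i-1\}$ bijectively onto the $r$-cyclotomic coset $[e_i]_{(r,n)}\subseteq E$. Hence each $\mathbf{c}_{i,s}$, and therefore every element of $\gf(r^h)\otimes\mathcal C$, can be written as $\bigl(\sum_{e\in E}a_e\gamma^{ej}\bigr)_{j=0}^{n-1}$ with $a_e\in\gf(r^h)$; i.e.\ $\gf(r^h)\otimes\mathcal C\subseteq\mathcal D$, where $\mathcal D$ denotes the code on the right-hand side of the displayed formula. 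Conversely $\mathcal D$ is the image of a $\gf(r^h)$-linear map from $\gf(r^h)^{|E|}$, so $\dim_{r^h}\mathcal D\le|E|=k$, whereas $\dim_{r^h}(\gf(r^h)\otimes\mathcal C)=k$ because lifting preserves the dimension (as observed just after the definition of the lifted code); hence the inclusion is an equality. This proves the ``in particular'' part, and the same reasoning applies verbatim over any ground field $\gf(r^\ell)$, with $\gamma$ replaced by a primitive $n$-th root of unity over it.

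For the first assertion, $\gf(r^\ell)\otimes\mathcal C$ is cyclic because a cyclic shift is a coordinate permutation, hence $\gf(r^\ell)$-linear, and carries $\mathcal C$ into $\mathcal C$, so it carries the $\gf(r^\ell)$-span of the $\mathbf{c}_{i,s}$ into itself; and it is an $[n,k]_{r^\ell}$ code because lifting preserves length and dimension. Let $E'$ be its cyclicity-defining set. From $\ell\cdot\ord_n(r^\ell)=\lcm(\ell,h)$ we see that $K:=\gf(r^{\lcm(\ell,h)})$ is the smallest extension of $\gf(r^\ell)$ containing a primitive $n$-th root of unity, and it contains $\gf(r^h)\ni\gamma$, so $\gamma$ may be used as that primitive root. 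Applying the identity just proved over the ground field $\gf(r^\ell)$ gives
\[
K\otimes\bigl(\gf(r^\ell)\otimes\mathcal C\bigr)=\Bigl\{\Bigl(\textstyle\sum_{e\in E'}a_e\gamma^{ej}\Bigr)_{j=0}^{n-1}:a_e\in K\Bigr\};
\]
on the other hand, lifting is transitive (a $\gf(r)$-basis of $\mathcal C$ is a basis of each of these lifts), so the same code equals $K\otimes(\gf(r^h)\otimes\mathcal C)$, and since $\gf(r^h)\otimes\mathcal C=\mathcal D$ has $\gf(r^h)$-basis $\{(\gamma^{ej})_{j=0}^{n-1}:e\in E\}$, this equals $\bigl\{\bigl(\sum_{e\in E}a_e\gamma^{ej}\bigr)_{j=0}^{n-1}:a_e\in K\bigr\}$. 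Since the vectors $(\gamma^{ej})_{j=0}^{n-1}$, $e\in\mathbb{Z}_n$, are $K$-linearly independent (the matrix $(\gamma^{ej})$ is Vandermonde in the distinct powers $\gamma^{0},\dots,\gamma^{n-1}$), comparing the two descriptions forces $E'=E$; the case $\ell=h$ then recovers the displayed formula.

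The crux of the argument, and the reason for routing everything through the frequency description $\mathcal D$ together with the transitivity of lifting, is that the subfields $\gf(r^{h_i})$ in the trace form of $\mathcal C$ need not be comparable with $\gf(r^\ell)$, so one cannot directly pull the scalars $A_{i,s}\in\gf(r^\ell)$ inside $\tr_{r^{h_i}/r}$. The remaining care lies in the bookkeeping of primitive $n$-th roots of unity along the tower $\gf(r)\subseteq\gf(r^h)\subseteq K\supseteq\gf(r^\ell)$, which is governed by the identity $\ell\cdot\ord_n(r^\ell)=\lcm(\ell,\ord_n(r))$.
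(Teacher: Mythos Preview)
Your argument is correct. The paper does not actually write out a proof: it simply declares the corollary ``an immediate consequence of Theorem~\ref{thm:cyclic-trance}''. Your write-up is a careful unpacking of that immediacy---choosing an explicit $\gf(r)$-basis of $\mathcal C$ via the trace description, expanding each trace to land in the frequency code $\mathcal D$, matching dimensions, and then for general $\ell$ passing to the common overfield $K=\gf(r^{\lcm(\ell,h)})$ and using Vandermonde independence of the $(\gamma^{ej})_j$ to force $E'=E$. There is no methodological divergence to discuss; you have simply supplied the details the paper omits, including the field-tower bookkeeping (in particular the point $h_i\mid h$, needed so that $b_{i,s}^{r^u}\in\gf(r^h)$, is implicit but true since $e_ir^h\equiv e_i\pmod n$).
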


Since the set $E$ also defines the code $\gf(r^{\ell}) \otimes \mathcal C$ in Corollary \ref{cor:cyclic-extension}, 
the set $E$ is also called the cyclicity-defining set of the lifted code  $\gf(r^{\ell}) \otimes \mathcal C$.

Let $n$ be a positive integer with $\gcd(n,r)=1$ and $h=\mathrm{ord}_n(r)$.
 Let $U_n$ be the cyclic multiplicative group of all $n$-th roots of unity in $\gf(r^h)$.
By polynomial interpolation, every function $f$ from $U_{n}$ to $\gf(r)$  has a unique \emph{univariate polynomial expansion} of the form
\[ f(u)= \sum_{i=0}^{n-1} a_i u^i,\]
where $a_j \in \gf(r^h)$, $u \in U_n$.

As a direct result of Theorem \ref{thm:cyclic-trance}, we have the following conclusion concerning cyclicity-defining sets of cyclic codes.

\begin{corollary}\label{cor:a_e-neq-0-E}
Let $n$ be a positive integer with $\gcd(n,r)=1$, $h=\mathrm{ord}_n(r)$ and $\gamma$ a primitive $n$-th root of unity in $\gf(r^h)$.
Let $\mathcal C$ be an $[n,k]_r$ cyclic code with cyclicity-defining set $E$. Let $f(u)=\sum_{i=0}^{n-1} a_i u^i \in \gf(r^h) [u]$.
If $\left (f(\gamma^j) \right )_{j=0}^{n-1} \in \mathcal C$ and $a_i\neq 0$, then $i\in E$.
\end{corollary}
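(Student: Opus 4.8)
The plan is to identify the codeword $\left(f(\gamma^j)\right)_{j=0}^{n-1}$ with its univariate polynomial expansion and to compare that expansion with the description of the code furnished by Corollary~\ref{cor:cyclic-extension}. First I would note that $\mathcal C \subseteq \gf(r^h) \otimes \mathcal C$: picking a basis $\{\mathbf c_1,\dots,\mathbf c_k\}$ of $\mathcal C$ over $\gf(r)$ and letting the scalars in the definition of the lifted code range only over $\gf(r) \subseteq \gf(r^h)$ recovers every codeword of $\mathcal C$. Hence the hypothesis $\left(f(\gamma^j)\right)_{j=0}^{n-1} \in \mathcal C$ implies $\left(f(\gamma^j)\right)_{j=0}^{n-1} \in \gf(r^h) \otimes \mathcal C$, and Corollary~\ref{cor:cyclic-extension} then supplies coefficients $b_e \in \gf(r^h)$, indexed by $e \in E$, with
\[
f(\gamma^j) = \sum_{e\in E} b_e \gamma^{je} \qquad \text{for all } j = 0, 1, \dots, n-1.
\]

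Next I would put $g(u) = \sum_{e\in E} b_e u^e \in \gf(r^h)[u]$. Since $E \subseteq \mathbb{Z}_n$, every exponent occurring in $g$ lies in $\{0,1,\dots,n-1\}$, so $f$ and $g$ are both polynomials of degree at most $n-1$ over the field $\gf(r^h)$. Because $\gamma$ is a primitive $n$-th root of unity, the elements $\gamma^0, \gamma^1, \dots, \gamma^{n-1}$ are pairwise distinct, and the displayed identity says that $f-g$ vanishes at all $n$ of them. A nonzero polynomial of degree at most $n-1$ over a field has at most $n-1$ roots — equivalently, the relevant Vandermonde matrix is invertible, which is precisely the uniqueness of the univariate polynomial expansion recalled above — so $f = g$ in $\gf(r^h)[u]$. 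Comparing the coefficient of $u^i$ on the two sides yields $a_i = b_i$ when $i \in E$ and $a_i = 0$ when $i \in \{0,\dots,n-1\}\setminus E$; in particular $a_i \neq 0$ forces $i \in E$, which is the claim.

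There is essentially no obstacle here: the only point needing a little care is the bookkeeping that every exponent appearing in $g$ is a reduced representative in $\{0,\dots,n-1\}$, so that comparing $f$ and $g$ as polynomials of degree $<n$ is legitimate — this is exactly guaranteed by the convention $E \subseteq \mathbb{Z}_n$. As an alternative to invoking Corollary~\ref{cor:cyclic-extension}, one can argue directly from Theorem~\ref{thm:cyclic-trance}: expanding each term $\tr_{r^{h_i}/r}(a_i' \gamma^{e_i j}) = \sum_{\ell=0}^{h_i-1} (a_i')^{r^\ell}\gamma^{(e_i r^\ell)j}$ and noting that the exponents $e_i r^\ell \bmod n$, as $\ell$ varies, sweep out precisely the cyclotomic coset $[e_i]_{(r,n)}$, whose union over $i$ is $E$, produces the same representation $f(\gamma^j)=\sum_{e\in E} b_e\gamma^{je}$, after which the interpolation argument concludes identically.
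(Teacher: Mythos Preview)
Your proof is correct and is exactly the argument the paper has in mind: the paper simply states that the corollary is ``a direct result of Theorem~\ref{thm:cyclic-trance}'' and gives no further details, so your write-up (via Corollary~\ref{cor:cyclic-extension}, with the alternative expansion directly from Theorem~\ref{thm:cyclic-trance}) is precisely the intended unpacking. The one-line idea --- embed into the lifted code, then use uniqueness of the degree-$<n$ polynomial expansion at the $n$ points $\gamma^0,\dots,\gamma^{n-1}$ --- matches the paper's implicit reasoning.
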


\section{Another representation of the action of $\PGL_2(\gf(2^m))$ on the projective line $\PG(1,2^m)$}

In this section we give another representation of the action of $\PGL_2(\gf(2^m))$ on the projective line $\PG(1,2^m)$.
This new representation will play an important role in Sections \ref{sec:codes-PGL} and \ref{sec:supports-PGL}.

Let $U_{q+1}$  be the subset of the projective line $\PG(1,q^2)=\gf(q^2) \cup \{\infty\}$ consisting of all the $(q+1)$-th roots of unity. Denote by $\mathrm{Stab}_{U_{q+1}}$ the setwise stabilizer of $U_{q+1}$ under the action of 
$\PGL_2(\gf(q^2))$ on $\PG(1,q^2)$.

\begin{proposition}\label{prop:three-types}
Let $q=2^m$. Then the setwise stabilizer $\mathrm{Stab}_{U_{q+1}}$ of $U_{q+1}$ consists of
the following three types of linear fractional transformations:
\begin{enumerate}
\item[(I)] $u \mapsto u_0 u$, where $u_0 \in U_{q+1}$;
\item[(II)] $u \mapsto u_0 u^{-1}$, where $u_0 \in U_{q+1}$;
\item[(III)] $u \mapsto \frac{u+c^q u_0}{c u +u_0}$, where $u_0 \in U_{q+1}$ and $c \in \gf(q^2)^* \setminus U_{q+1}$.
\end{enumerate}

\end{proposition}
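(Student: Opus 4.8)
The plan is to determine the full setwise stabilizer of $U_{q+1}$ inside $\PGL_2(\gf(q^2))$ by a counting argument combined with explicit verification. First I would verify that each of the three listed families actually stabilizes $U_{q+1}$: for type (I), if $u^{q+1}=1$ and $u_0^{q+1}=1$ then $(u_0u)^{q+1}=1$; for type (II), $(u_0u^{-1})^{q+1}=u_0^{q+1}u^{-(q+1)}=1$; for type (III), one computes directly that if $u\in U_{q+1}$, i.e. $u^q=u^{-1}$, then the image $v=\frac{u+c^qu_0}{cu+u_0}$ satisfies $v^{q+1}=1$. The key identity here is that raising $v$ to the $q$-th power and using $u^q=u^{-1}$, $(u_0)^q=u_0^{-1}$ gives $v^q = \frac{u^{-1}+c u_0^{-1}}{c^q u^{-1}+u_0^{-1}} = \frac{u_0+cu}{c^q u_0+u} \cdot \frac{u_0 u}{u_0 u}$, which after simplification equals $v^{-1}$; one also checks that the determinant $u_0 - c^{q+1}u_0$ is nonzero precisely because $c^{q+1}\neq 1$, i.e. $c\notin U_{q+1}$ (and $c\neq 0$), so that the transformation is a genuine element of $\PGL_2(\gf(q^2))$.

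Next I would count. The transformations of type (I) and (II) together number $2(q+1)$ and form a dihedral group fixing $U_{q+1}$ setwise (type (I) being the cyclic rotations, type (II) the "inversions"). For type (III), the pair $(u_0,c)$ ranges over $U_{q+1}\times(\gf(q^2)^*\setminus U_{q+1})$, but different pairs may give the same linear fractional transformation; I expect each transformation to arise from exactly $q+1$ choices after accounting for the projective scaling — more precisely, replacing $(u_0,c)$ by $(\lambda u_0, \lambda c)$ for a suitable $\lambda$, or noting the matrix $\begin{bmatrix} 1 & c^q u_0 \\ c & u_0\end{bmatrix}$ is defined only up to scalars — so the count of distinct type (III) transformations is $(q+1)\cdot\frac{|\gf(q^2)^*|-|U_{q+1}|}{q+1}$ adjusted appropriately; a careful bookkeeping should give $(q+1)(q-2)$ of them if we parametrize by $c$ up to the relation $c\sim u_0^{-1}c$... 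I would instead organize the count as follows: the three types together should yield $|\mathrm{Stab}_{U_{q+1}}| = (q+1)q(q-1)/\big[\text{index}\big]$, where by the orbit–stabilizer theorem applied to the action of $\PGL_2(\gf(q^2))$ on the set of $(q+1)$-subsets that are cosets/orbits of the norm-one subgroup, $|\mathrm{Stab}_{U_{q+1}}|$ divides $|\PGL_2(\gf(q^2))| = (q^2+1)q^2(q^2-1)$.

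The cleanest route — and the one I would actually write up — is to compute $|\mathrm{Stab}_{U_{q+1}}|$ first, then show the listed transformations already exhaust that many distinct elements. Since $\PGL_2(\gf(q^2))$ acts sharply $3$-transitively on $\PG(1,q^2)$ (which has $q^2+1$ points), an element of the stabilizer is determined by where it sends three chosen points of $U_{q+1}$; one shows that the image triple must again be an ordered triple of distinct points of $U_{q+1}$ lying "in the same cross-ratio class," and counts the admissible image triples. Concretely, fix $1,\gamma,\gamma^2\in U_{q+1}$ for a generator $\gamma$; an element of $\mathrm{Stab}_{U_{q+1}}$ sends these to some ordered triple of distinct elements of $U_{q+1}$, and the number of such triples that are actually realized equals $|\mathrm{Stab}_{U_{q+1}}|$ by sharp $3$-transitivity. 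Using the classification of subgroups in Table~\ref{tab:subgroups} applied to $\PGL_2(\gf(q^2))$ (noting $q+1 = 2^m+1$ is odd), a subgroup containing the dihedral group of order $2(q+1)$ and acting on the $q+1$ points of $U_{q+1}$ is squeezed between that dihedral group and something of order $(q+1)q(q-1)$; but an element not permuting $U_{q+1}$ "rigidly" must move some point of $U_{q+1}$ off $U_{q+1}$ unless it has the stated form. I expect the main obstacle to be precisely this last uniqueness/exhaustiveness step: showing there are no stabilizing transformations beyond types (I)–(III). The way I would close it: write a general $\pi = \begin{bmatrix} a & b \\ c & d\end{bmatrix}$ with $\pi(U_{q+1}) = U_{q+1}$; the condition that $\pi(u)^{q+1}=1$ for all $q+1$ values $u\in U_{q+1}$ forces a polynomial identity. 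Clearing denominators, $\big(au+b\big)^{q+1} = \big(cu+d\big)^{q+1}$ must hold for all $u\in U_{q+1}$; expanding using $u^q = u^{-1}$ turns each side into a Laurent polynomial in $u$ of degree spread $2$, so this becomes an identity among three coefficients, i.e. $3$ equations in $a,b,c,d$. Solving these (the $u^1$, $u^0$, $u^{-1}$ coefficients) yields exactly the constraints: either $c=0$ (leading to types (I)/(II) after normalizing) or $c\neq 0$ and then $a = 1$, $b = c^q d$ up to scaling with $d \in U_{q+1}$, which is type (III) with $u_0 = d$. This reduces the whole proposition to a short finite computation with the three coefficient equations, and I would present only that computation in detail.
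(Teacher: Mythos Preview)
Your final approach---clearing denominators in $(au+b)^{q+1}=(cu+d)^{q+1}$, using $u^q=u^{-1}$ to reduce to a Laurent polynomial with only three distinct monomials, and then reading off coefficient equations---is exactly what the paper does. The long detour through counting and orbit--stabilizer is unnecessary and the paper does not use it; you can delete that discussion entirely.

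One genuine slip to fix in your case analysis: the dichotomy ``$c=0$ versus $c\neq 0$'' does not separate the three types correctly. If $c=0$ the map is affine, $u\mapsto (au+b)/d$, and the coefficient equations force $b=0$, giving only type~(I); type~(II), $u\mapsto u_0 u^{-1}$, has $a=d=0$ and $c\neq 0$. The paper's case split is instead on the pair $(a,b)$: $b=0$ forces $c=0$ and gives type~(I); $a=0$ forces $d=0$ and gives type~(II); $ab\neq 0$ lets you normalize $a=1$ and solve to get $b=c^qd$, $d^{q+1}=1$, $c^{q+1}\neq 1$, which is type~(III). Reorganize your final paragraph along these lines and the proof is complete.
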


\begin{proof}
First, the transformations listed in (I)-(III)  are easily seen to belong to the stabilizer $\mathrm{Stab}_{U_{q+1}}$.

Conversely, let $\pi$ be a translation in $ \PGL_2(\gf(q^2))$ given by $\frac{a x+b}{cx+d}$, where $a,b,c,d \in \gf(q^2)$ and $ad+bc\neq 0$.
Then $\pi \in \mathrm{Stab}_{U_{q+1}}$ if and only if the following holds
\begin{eqnarray}\label{eq:(q+1)/(q+1)}
\left ( \frac{a u+b}{cu+d} \right )^{q+1}=1, \text{ for all } u \in U_{q+1}.
\end{eqnarray}
Multiplying both sides of (\ref{eq:(q+1)/(q+1)}) by $(cu+d)^{q+1}$ yields
\begin{eqnarray*}
(a^{q+1}+c^{q+1})u^{q+1}+(a^qb+c^qd)u^q+(ab^q+cd^q)u+(b^{q+1}+d^{q+1})=0.
\end{eqnarray*}
Substituting $u^{-1}$  for $u^q$ in the equation above yields 
\begin{eqnarray}\label{eq:quadratic-eq-3}
(ab^q+cd^q)u^2+(a^{q+1}+b^{q+1}+c^{q+1}+d^{q+1})u+(a^qb+c^qd)=0.
\end{eqnarray}
Since the quadratic equation in (\ref{eq:quadratic-eq-3}) has at least $q+1$ roots: $u\in U_{q+1}$, all its coefficients must be zero, that is
\begin{eqnarray}\label{eq:degree--q+1}
\left \{
\begin{array}{cl}
 ab^q+cd^q & =0, \\
 a^qb+c^qd & =0, \\
 a^{q+1}+b^{q+1}+c^{q+1}+d^{q+1} &=0. 
\end{array}
\right . 
\end{eqnarray}

We investigate the following three cases for (\ref{eq:degree--q+1}).

If $b=0$, (\ref{eq:degree--q+1}) clearly forces $c=0$. Thus $\pi =u_0 x$ for some $u_0 \in U_{q+1}$.

If $a=0$,  (\ref{eq:degree--q+1}) clearly forces $d=0$. Thus $\pi =u_0 x^{-1}$ for some $u_0 \in U_{q+1}$.

If $ab \neq 0$, we can certainly assume that $a=1$, because $\frac{ax+b}{cx+d}$ and $\frac{x+a^{-1}b}{a^{-1}cx+a^{-1}d}$
determine the same translation. Substituting $1$ for $a$ in (\ref{eq:degree--q+1}) we conclude that
\begin{eqnarray*}
\left \{
\begin{array}{cl}
 b+c^qd & =0\\
 1+b^{q+1}+c^{q+1}+d^{q+1} &=0
\end{array}
\right . .
\end{eqnarray*}
This gives $b=c^q d$ and $(c^{q+1}+1) (c^{q+1}+1)=0$.
If $c^{q+1}=1$, we would have $ad+bc=0$, a contradiction.
 It follows that
$b=c^q d$, $d^{q+1}=1$ and $c\in \gf(q^2)^* \setminus U_{q+1}$ from $ad+bc \neq 0$.
This completes the proof.

\end{proof}

The following result follows from Proposition \ref{prop:three-types} directly.

\begin{corollary}\label{cor:generators-three-types}
Let $q=2^m$. Then the setwise stabilizer $\mathrm{Stab}_{U_{q+1}}$ of $U_{q+1}$ is generated by
the following three types of linear fractional transformations:
\begin{enumerate}
\item[(I)] $u \mapsto u_0 u$, where $u_0 \in U_{q+1}$;
\item[(II)] $u \mapsto u^{-1}$;
\item[(III)] $u \mapsto \frac{u+c^q }{c u +1}$, where  $c \in \gf(q^2)^* \setminus U_{q+1}$.
\end{enumerate}

\end{corollary}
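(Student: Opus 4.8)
The plan is to deduce the corollary directly from Proposition \ref{prop:three-types} by exhibiting every element of $\mathrm{Stab}_{U_{q+1}}$ as a product of the three listed generators. First I would note that the generators (I), (II), (III) of the corollary are themselves members of $\mathrm{Stab}_{U_{q+1}}$: the maps $u \mapsto u_0 u$ are precisely the type (I) transformations of Proposition \ref{prop:three-types}, the map $u \mapsto u^{-1}$ is the type (II) transformation of Proposition \ref{prop:three-types} with $u_0 = 1$, and $u \mapsto \frac{u + c^q}{cu + 1}$ is the type (III) transformation of Proposition \ref{prop:three-types} with $u_0 = 1$. Hence the subgroup $H \le \PGL_2(\gf(q^2))$ they generate satisfies $H \subseteq \mathrm{Stab}_{U_{q+1}}$, and since Proposition \ref{prop:three-types} already enumerates all of $\mathrm{Stab}_{U_{q+1}}$, it remains only to show that each of the three types listed there lies in $H$.

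The type (I) elements are generators, so there is nothing to do. For a type (II) element $u \mapsto u_0 u^{-1}$ with $u_0 \in U_{q+1}$, I would write it as the composition of the inversion $u \mapsto u^{-1}$ (generator (II)) followed by the scaling $v \mapsto u_0 v$ (generator (I) with parameter $u_0 \in U_{q+1}$). For a type (III) element $u \mapsto \frac{u + c^q u_0}{cu + u_0}$ with $u_0 \in U_{q+1}$ and $c \in \gf(q^2)^* \setminus U_{q+1}$, I would precompose with the scaling $u \mapsto u_0^{-1} u$ (generator (I) with parameter $u_0^{-1} \in U_{q+1}$) and then clear the common factor $u_0$ from numerator and denominator, leaving generator (III) with the same parameter $c$. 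Concretely, the identity to verify is
\[
\left( t \mapsto \frac{t + c^q}{ct + 1} \right) \circ \left( u \mapsto u_0^{-1} u \right)
= \left( u \mapsto \frac{u + c^q u_0}{cu + u_0} \right),
\]
which is immediate upon multiplying numerator and denominator of the left-hand side by $u_0$. Composing appropriately then exhibits the type (III) map as an element of $H$, completing the inclusion $\mathrm{Stab}_{U_{q+1}} \subseteq H$ and hence the proof.

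I do not expect any genuine obstacle here; the corollary is essentially a repackaging of Proposition \ref{prop:three-types}. The only points requiring a little care are bookkeeping: that the parameter of the inversion-free generators stays in the prescribed set ($u_0^{-1} \in U_{q+1}$ whenever $u_0 \in U_{q+1}$, and the condition $c \in \gf(q^2)^* \setminus U_{q+1}$ is untouched since $c$ is unchanged in the decomposition), and that compositions are taken in the correct order with respect to the left-action convention fixed just after (\ref{eq:action-infty}).
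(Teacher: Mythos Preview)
Your proposal is correct and is exactly the natural unpacking of what the paper intends: the paper simply states that the corollary ``follows from Proposition~\ref{prop:three-types} directly'' without writing out the decompositions, and your explicit factorizations of the type (II) and type (III) elements of Proposition~\ref{prop:three-types} in terms of the listed generators are precisely what makes that claim work.
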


The following proposition shows that the action of  $\mathrm{Stab}_{U_{q+1}}$ on $U_{q+1}$
and the action of $\PGL_2(\gf(2^m))$ on $\PG(1,2^m)$ are equivalent.

\begin{proposition}\label{prop:Stab-PGL}
Let $q=2^m$ and $\mathrm{Stab}_{U_{q+1}}$   the setwise stabilizer of $U_{q+1}$.
Then $\mathrm{Stab}_{U_{q+1}}$  is conjugate in $\PGL_2(\gf(2^{2m}))$ to the group $\PGL_2(\gf(2^m))$,
and its action on $U_{q+1}$ is equivalent to the action of $\PGL_2(\gf(2^m))$ on $\PG(1,2^m)$.

\end{proposition}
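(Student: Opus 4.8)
The plan is to exhibit a single explicit element of $\PGL_2(\gf(2^{2m}))$ that conjugates the standard copy of $\PGL_2(\gf(2^m))$ onto $\mathrm{Stab}_{U_{q+1}}$ and at the same time intertwines the two actions. Fix any $\delta\in\gf(q^2)\setminus\gf(q)$ and let $\mu\in\PGL_2(\gf(q^2))$ be the linear fractional transformation $x\mapsto\frac{x+\delta}{x+\delta^q}$; its defining matrix is invertible because $\delta\notin\gf(q)$. First I would verify that $\mu$ maps $\PG(1,q)=\gf(q)\cup\{\infty\}$ onto $U_{q+1}$: for $x\in\gf(q)$ one has $x+\delta^q=(x+\delta)^q$ and $x+\delta\neq 0$, so $\mu(x)=(x+\delta)^{1-q}$, whence $\mu(x)^{q+1}=(x+\delta)^{1-q^2}=1$; moreover $\mu(\infty)=1\in U_{q+1}$. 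Since $\mu$ permutes $\PG(1,q^2)$ and $|\PG(1,q)|=q+1=|U_{q+1}|$, the restriction $\mu|_{\PG(1,q)}\colon\PG(1,q)\to U_{q+1}$ is a bijection.

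Consequently, conjugation by $\mu$ carries the setwise stabilizer $S$ of $\PG(1,q)$ inside $\PGL_2(\gf(q^2))$ onto $\mathrm{Stab}_{U_{q+1}}$, and the bijection $\mu|_{\PG(1,q)}$ together with the isomorphism $g\mapsto\mu g\mu^{-1}$ forms an equivalence between the action of $S$ on $\PG(1,q)$ and that of $\mathrm{Stab}_{U_{q+1}}$ on $U_{q+1}$. It then remains to identify $S$ with $\PGL_2(\gf(q))$. The inclusion $\PGL_2(\gf(q))\subseteq S$ is clear. For the converse, let $g\in S$; then $g(0),g(1),g(\infty)$ are three distinct points of $\PG(1,q)$, so by the sharp $3$-transitivity of $\PGL_2(\gf(q))$ on $\PG(1,q)$ there is $g_0\in\PGL_2(\gf(q))$ agreeing with $g$ at $0,1,\infty$; since $\PGL_2(\gf(q^2))$ is itself sharply $3$-transitive on $\PG(1,q^2)$, the element of $\PGL_2(\gf(q^2))$ realizing this triple is unique, so $g=g_0\in\PGL_2(\gf(q))$. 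Hence $\mathrm{Stab}_{U_{q+1}}=\mu\,\PGL_2(\gf(q))\,\mu^{-1}$ is conjugate in $\PGL_2(\gf(2^{2m}))$ to $\PGL_2(\gf(2^m))$, and the two actions are equivalent via $\mu$, which is exactly the assertion.

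The one step that needs genuine care is the identity $S=\PGL_2(\gf(q))$: with $\PGL$ replaced by $\PGaL$ the analogous statement fails, so the argument must really use that $g$ lies in $\PGL_2(\gf(q^2))$ and that a linear fractional transformation is pinned down by its values at three points; everything else is routine. A by-product is $|\mathrm{Stab}_{U_{q+1}}|=(q+1)q(q-1)$, which matches the direct count of the transformations of types (I)--(III) in Proposition \ref{prop:three-types}. As an alternative that avoids $\mu$ altogether, one could instead compute $|\mathrm{Stab}_{U_{q+1}}|=(q+1)q(q-1)$ from Proposition \ref{prop:three-types}, check that $\mathrm{Stab}_{U_{q+1}}$ acts $3$-transitively (hence, by this order, sharply $3$-transitively) on the set $U_{q+1}$ of odd cardinality $q+1$, and then invoke Theorem \ref{thm:sharply-3-transitive} together with the conjugacy classification in Table \ref{tab:subgroups} to pin it down as $\PGL_2(\gf(2^m))$.
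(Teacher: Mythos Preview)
Your argument is correct and takes a genuinely different route from the paper's. The paper does not exhibit any conjugating element. Instead it argues indirectly: it first shows that $\mathrm{Stab}_{U_{q+1}}$ acts $3$-transitively on $U_{q+1}$ (by taking any two triples in $U_{q+1}$, finding the unique element of $\PGL_2(\gf(q^2))$ carrying one to the other, and observing via the calculation in Proposition~\ref{prop:three-types} that this element must lie in $\mathrm{Stab}_{U_{q+1}}$), hence sharply $3$-transitively; it then invokes the classification of sharply $3$-transitive permutation groups on sets of odd cardinality (Theorem~\ref{thm:sharply-3-transitive}) to get the equivalence of actions, and finally appeals to Dickson's conjugacy classification (Table~\ref{tab:subgroups}) to obtain conjugacy inside $\PGL_2(\gf(2^{2m}))$. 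This is precisely the ``alternative'' you sketch in your last paragraph.

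Your main argument is more elementary and more constructive: the explicit $\mu$ simultaneously proves conjugacy and equivalence of actions without any appeal to Theorem~\ref{thm:sharply-3-transitive} or Table~\ref{tab:subgroups}. The key step you flag, namely $S=\PGL_2(\gf(q))$, is a standard fact whose proof via sharp $3$-transitivity you give cleanly. The paper's approach, by contrast, leans on heavier machinery but stays closer to the internal logic of the section (it reuses the computation behind Proposition~\ref{prop:three-types}). Both are valid; yours is self-contained and gives an explicit intertwiner, which can be useful elsewhere.
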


\begin{proof}
We begin by proving
that the group $\mathrm{Stab}_{U_{q+1}}$ acts sharply $3$-transitively on $U_{q+1}$.
It suffices to show that this action is $3$-transitive as $\PGL_2(\gf(2^{2m}))$ acts sharply $3$-transitively on $\PG(1,2^{2m})$.
Let $(u_1, u_2, u_3), (u_1', u_2', u_3')$ be any two $3$-tuples
of distinct elements from $U_{q+1}$. Since the action of  $\PGL_2(\gf(2^{2m}))$ on
$\PG(1,2^{2m})$ is $3$-transitive, there exists a linear fractional transformation $\frac{a x +b}{cx +d} \in \PGL_2(\gf(2^{2m}))$
such that
\[\frac{a u_i +b}{cu_i +d}= u_i' \text{ for } i=1,2,3.\]
This gives
\[ \left ( \frac{a u_i +b}{cu_i +d} \right )^{q+1}= 1, \text{ where } i=1,2,3.\]

Using a similar argument to Proposition \ref{prop:three-types}, we can prove that
the transformation given by $u \mapsto \frac{au+b}{cu+d}$
belongs to $\mathrm{Stab}_{U_{q+1}}$, which says that the action of $\mathrm{Stab}_{U_{q+1}}$ on $U_{q+1}$ is $3$-transitive.
Therefore the action of $\mathrm{Stab}_{U_{q+1}}$ on $U_{q+1}$ is equivalent to
the action of $\PGL_2(\gf(2^m))$ on $\PG(1,2^m)$ by Theorem \ref{thm:sharply-3-transitive}.
Now Table \ref{tab:subgroups} shows that $\mathrm{Stab}_{U_{q+1}}$  is conjugate to the subgroup $\PGL_2(\gf(2^m))$ in $\PGL_2(\gf(2^{2m}))$.
This completes the proof.
\end{proof}

\section{Linear codes  invariant under  $\PGL_2(\gf(2^m))$}\label{sec:codes-PGL}

The main objective of this section is to classify all linear codes over $\gf(2^h)$ of length $2^m+1$ that are invariant under 
$\PGL_2(\gf(2^m))$.
As an immediate application,
 we derive
the $2$-rank of the incidence matrices of $t$-$(2^m+1, k, \lambda)$ designs that are invariant under $\PGL_2(\gf(2^m))$.

Let $\mathcal C$ be a $[2^m+1, k]_{2^h}$ linear code.
We can regard $U_{2^m+1}$ as the set of the coordinate positions of $\mathcal C$ and write the codeword of $\mathcal C$ as $\left ( c_u\right )_{u\in U_{2^m+1}}$.
Then the set of coordinate positions of $\mathcal C$ could be endowed with the
action of $\mathrm{Stab}_{U_{2^m+1}}$. According to Proposition 
\ref{prop:Stab-PGL},
we only need to find  all linear codes over $\gf(2^h)$ of length $2^m+1$ which are invariant under $\mathrm{Stab}_{U_{2^m+1}}$.

The following lemma gives the polynomial expansion of the linear fractional transformation 
$\frac{u+c^q}{cu+1}$, where $c \in \gf(q^2)^* \setminus U_{q+1}$.
\begin{lemma}\label{lem:frac-poly}
Let $q=2^m$ with $m \ge 2$  and $c \in \gf(q^2)^* \setminus U_{q+1}$.
Then for any $u\in U_{q+1}$,  the following holds
\begin{eqnarray*}
\frac{u+c^q}{cu+1} =\sum_{i=1}^{q} c^{i-1} u^i.
\end{eqnarray*}

\end{lemma}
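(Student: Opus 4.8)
The goal is to prove the identity $\frac{u+c^q}{cu+1}=\sum_{i=1}^{q}c^{i-1}u^i$ for all $u\in U_{q+1}$, where $q=2^m$ and $c\in\gf(q^2)^*\setminus U_{q+1}$. The plan is to clear denominators and verify the resulting polynomial identity holds on all $q+1$ points of $U_{q+1}$, exploiting that $u^{q+1}=1$ there.

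First I would multiply the right-hand side by $(cu+1)$ and expand:
\[
(cu+1)\sum_{i=1}^{q}c^{i-1}u^i=\sum_{i=1}^{q}c^{i}u^{i+1}+\sum_{i=1}^{q}c^{i-1}u^i.
\]
Reindexing the first sum by $j=i+1$ gives $\sum_{j=2}^{q+1}c^{j-1}u^j$, and adding the second sum (indexed $i=1,\dots,q$) produces a telescoping-style cancellation over characteristic $2$: all middle terms $c^{j-1}u^j$ for $2\le j\le q$ appear twice and vanish, leaving $c^{q}u^{q+1}+u$. Since $u\in U_{q+1}$ we have $u^{q+1}=1$, so this equals $c^{q}+u=u+c^{q}$, which is precisely the numerator of the left-hand side. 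Hence $(cu+1)\sum_{i=1}^{q}c^{i-1}u^i=u+c^q$ for every $u\in U_{q+1}$, and dividing by $cu+1$ gives the claim — provided $cu+1\ne 0$.

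The one genuine point to check is that $cu+1\neq 0$ for all $u\in U_{q+1}$, so that the division is legitimate and the transformation $u\mapsto\frac{u+c^q}{cu+1}$ is actually defined on $U_{q+1}$. If $cu+1=0$ for some $u\in U_{q+1}$, then $c=u^{-1}$, so $c^{q+1}=u^{-(q+1)}=1$, i.e.\ $c\in U_{q+1}$, contradicting the hypothesis $c\in\gf(q^2)^*\setminus U_{q+1}$. Thus the denominator never vanishes on $U_{q+1}$ and the identity is valid as stated. This is the only mildly delicate step; the rest is the routine characteristic-$2$ telescoping computation above, and I expect no real obstacle beyond keeping the reindexing bookkeeping straight.
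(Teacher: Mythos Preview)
Your proof is correct and follows essentially the same route as the paper: both arguments amount to the finite geometric series identity $\sum_{i=1}^{q}c^{i-1}u^i=\frac{u(1+(cu)^q)}{1+cu}$ in characteristic $2$, combined with $u^{q+1}=1$. Your explicit check that $cu+1\neq 0$ on $U_{q+1}$ is a welcome clarification that the paper leaves implicit.
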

\begin{proof}
An easy computation shows that
\begin{eqnarray*}
\begin{array}{rl}
\sum_{i=1}^{q} c^{i-1} u^i & = \frac{1+(cu)^{q}}{1+cu} u\\
 & = \frac{u+c^qu^{q+1}}{1+cu} \\
 &= \frac{u+c^q}{cu+1},
\end{array}
\end{eqnarray*}
which completes the proof.

\end{proof}

The following lemma expresses the coefficients of the polynomial expansion of a function $f$ over $U_{q+1}$
in terms of the sums  over $U_{q+1}$ of the product function of $f$ and  the power functions $u^j$.
\begin{lemma}\label{lem:ai=sum-f}
Let $f$ be a function from $U_{q+1}$ to $\gf(q^{2h})$ with $h\ge 1$. Let 
$\sum_{i=0}^{q} a_i u^i$
be the polynomial expansion of $f$, where $a_i \in \gf(q^{2h})$. Then 
$a_i=\sum_{u \in U_{q+1}} f(u) u^{-i}$, where $0 \le i \le q$.
\end{lemma}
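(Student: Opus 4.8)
The plan is to invert the finite Fourier-type transform attached to the group $U_{q+1}$ of $(q+1)$-st roots of unity. The key point is the character-sum identity: for any integer $j$ with $q+1 \nmid j$ one has $\sum_{u\in U_{q+1}} u^j = 0$, while $\sum_{u\in U_{q+1}} u^j = q+1$ when $q+1 \mid j$; note that $q+1$ is odd, so $q+1$ is a unit in $\gf(q^{2h})$ and this normalizing factor is harmless. This is the single fact that makes everything work, and verifying it is the only place where a small argument is needed (it follows because $\sum_{u\in U_{q+1}} u^j$, for $1\le j\le q$, is the sum of the roots of $X^{q+1}-1$ up to a permutation, hence equals the relevant elementary symmetric function, which vanishes; over characteristic $2$ the simplest route is to multiply the geometric-series sum by $(u_1-1)$ for a generator $u_1$ of $U_{q+1}$ and observe it telescopes to $u_1^{j(q+1)}-1=0$).

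First I would substitute the given polynomial expansion $f(u)=\sum_{i=0}^{q} a_i u^i$ into the claimed formula and compute directly:
\[
\sum_{u\in U_{q+1}} f(u)\,u^{-i} \;=\; \sum_{u\in U_{q+1}} \sum_{\ell=0}^{q} a_\ell\, u^{\ell-i} \;=\; \sum_{\ell=0}^{q} a_\ell \sum_{u\in U_{q+1}} u^{\ell-i}.
\]
Now for $0\le i,\ell\le q$ we have $-q \le \ell-i \le q$, so $q+1 \mid \ell-i$ forces $\ell = i$ (the only multiple of $q+1$ in that range being $0$). Hence the inner sum is $0$ unless $\ell=i$, in which case it is $q+1$. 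Therefore the double sum collapses to $(q+1)a_i$.

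Second, to conclude $a_i = \sum_{u\in U_{q+1}} f(u)u^{-i}$ rather than $(q+1)a_i$, I would point out that the lemma as stated in the excerpt is using the normalization in which the factor $q+1$ has already been divided out — more precisely, since $\gcd(q+1,2)=1$, the field $\gf(q^{2h})$ has characteristic $2$ and $q+1 \equiv 1 \pmod 2$, so in fact $q+1 = 1$ is false in general, and one actually needs to carry the factor; I would therefore present the computation as showing $\sum_{u} f(u)u^{-i} = (q+1)a_i$ and then note the paper's convention, or (cleaner) simply state and prove the identity with the $q+1$ present and remark that this suffices for all later uses where it is applied modulo such a unit. The only genuine obstacle is bookkeeping: getting the range condition "$q+1 \mid \ell-i \Rightarrow \ell=i$" stated correctly (it is crucial that the expansion has degree at most $q$, matching $|U_{q+1}|-1$, so that the transform is genuinely invertible), and being consistent about the normalizing factor $q+1$ throughout. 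No deep idea is required beyond orthogonality of characters of the cyclic group $U_{q+1}$.
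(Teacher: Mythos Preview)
Your approach is exactly the paper's: substitute the polynomial expansion into $\sum_{u\in U_{q+1}} f(u)\,u^{-i}$, swap the order of summation, and invoke orthogonality of the power sums over $U_{q+1}$. The computation and the range argument ($-q\le \ell-i\le q$ forces $\ell=i$) are fine.

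The one genuine slip is your treatment of the factor $q+1$. You correctly observe that $q+1\equiv 1\pmod 2$, but then assert that ``$q+1=1$ is false in general'' and proceed to worry about carrying a normalizing unit. In fact $q+1=1$ \emph{is} true in the relevant field: since $q=2^m$, the field $\gf(q^{2h})$ has characteristic $2$, and the integer $q+1=2^m+1$ reduces to $1$ there. Hence the paper's power-sum identity reads, on the nose,
\[
\sum_{u\in U_{q+1}} u^{e}
=\begin{cases}
1 & \text{if } (q+1)\mid e,\\
0 & \text{otherwise},
\end{cases}
\]
and your double sum collapses directly to $a_i$, not to $(q+1)a_i$. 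There is no hidden convention and nothing to divide out; the lemma is literally correct as stated. Once you fix this, your proof coincides with the paper's.
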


\begin{proof}
A straightforward computation yields that
\begin{eqnarray}\label{eq:power-sum}
\sum_{u\in U_{q+1}} u^e =
\left \{
\begin{array}{rc}
1, & \text{ if } q+1 \text{ divides } e,\\
0, & \text{ otherwise},
\end{array}
\right .
\end{eqnarray}
where $e$ is an integer.

A standard calculation shows that
\begin{eqnarray*}
\begin{array}{rl}
\sum_{u \in U_{q+1}} f(u) u^{-i}&=  \sum_{u\in U_{q+1}} u^{-i}\sum_{j=0}^q a_j u^j\\
&= \sum_{j=0}^q a_j \sum_{u\in U_{q+1}} u^{j-i}\\
&= a_i,
\end{array}
\end{eqnarray*}
where the last equality comes from (\ref{eq:power-sum}). The desired conclusion then follows.
\end{proof}

The following lemma gives the first two terms of the polynomial
expansion for the function $\left ( \frac{u+c^q}{cu+1} \right )^e$ over $U_{q+1}$.
\begin{lemma}\label{lem:a1-neq-0}
Let $q=2^m$ with $m \ge 2$,  and $c \in \gf(q^2)^* \setminus U_{q+1}$. Let $e$ be an integer such that $1 \le e \le q$.
Let $a_0 + a_1 u + \cdots + a_q u^q$ be the polynomial expansion of
the function from $U_{q+1}$ to $\gf(q^2)$ given by 
$u \mapsto \left ( \frac{u+c^q}{cu+1} \right )^e$. Then $a_0=0$ and $a_1 = c^{q(e-1)}$.

\end{lemma}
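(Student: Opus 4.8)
The plan is to compute the two coefficients directly from the formula $a_i=\sum_{u\in U_{q+1}}f(u)\,u^{-i}$ of Lemma~\ref{lem:ai=sum-f} (applied with $h=1$), where $f(u)=\left(\frac{u+c^q}{cu+1}\right)^e$, and to exploit the fact that the linear fractional transformation $\tau\colon u\mapsto \frac{u+c^q}{cu+1}$ is a permutation of $U_{q+1}$. Two preliminary facts are needed. First, by Corollary~\ref{cor:generators-three-types}, $\tau$ lies in $\Stab_{U_{q+1}}$ (it is of type~(III)), hence permutes $U_{q+1}$; moreover, solving $v=\frac{u+c^q}{cu+1}$ for $u$ and simplifying in characteristic $2$ yields $u=\frac{v+c^q}{cv+1}$, so $\tau$ is an involution. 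Second, the power-sum identity (\ref{eq:power-sum}): $\sum_{u\in U_{q+1}}u^{\ell}$ equals $1$ if $(q+1)\mid\ell$ and $0$ otherwise.

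For $a_0$: reindexing $a_0=\sum_{u\in U_{q+1}}\tau(u)^e$ by the bijection $\tau$ gives $a_0=\sum_{v\in U_{q+1}}v^e$, which vanishes by (\ref{eq:power-sum}) because $1\le e\le q$ forces $(q+1)\nmid e$.

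For $a_1$: using the involution to substitute $u=\tau(v)$ in $a_1=\sum_{u\in U_{q+1}}\tau(u)^e\,u^{-1}$ gives $a_1=\sum_{v\in U_{q+1}}v^e\,\tau(v)^{-1}=\sum_{v\in U_{q+1}}v^e\cdot\frac{cv+1}{v+c^q}$. The key step is to expand $\frac{cv+1}{v+c^q}$ as a polynomial in $v$ over $U_{q+1}$: writing $\frac{cv+1}{v+c^q}=c^{\,1-q}\cdot\frac{v+(c^{-q})^{q}}{c^{-q}v+1}$ and applying Lemma~\ref{lem:frac-poly} with the constant $d=c^{-q}\in\gf(q^2)^{*}\setminus U_{q+1}$ (which lies outside $U_{q+1}$ since $c$ does) yields $\frac{cv+1}{v+c^q}=\sum_{i=1}^{q}c^{\,1-qi}v^{i}$; alternatively one sums the geometric series directly using $v^{q}=v^{-1}$ and $c^{q^{2}}=c$. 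Substituting and interchanging the sums gives $a_1=\sum_{i=1}^{q}c^{\,1-qi}\sum_{v\in U_{q+1}}v^{e+i}$. Since $2\le e+i\le 2q$, identity (\ref{eq:power-sum}) leaves exactly one surviving term, namely $i=q+1-e$, whence $a_1=c^{\,1-q(q+1-e)}=c^{\,q(e-1)}$ after reducing the exponent modulo $q^{2}-1$.

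All the computations are elementary. I expect the only delicate points to be the polynomial expansion of $\frac{cv+1}{v+c^q}$ — where the hypothesis $c\notin U_{q+1}$ is invoked once more and where an error in the exponent would propagate — and the accompanying bookkeeping of exponents modulo $q^{2}-1$; the involution identity $u=\tau(\tau(u))$ should be recorded carefully, but is immediate. Everything else reduces to (\ref{eq:power-sum}) and reindexing.
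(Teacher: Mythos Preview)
Your proof is correct and uses the same ingredients as the paper: the coefficient formula of Lemma~\ref{lem:ai=sum-f}, reindexing over $U_{q+1}$ by a linear fractional bijection, Lemma~\ref{lem:frac-poly}, and the power-sum identity~(\ref{eq:power-sum}). For $a_0$ the two arguments are identical. For $a_1$ there is a minor difference in execution: the paper first substitutes $u\mapsto u^{-1}$ and then $u\mapsto\frac{u+c}{c^{q}u+1}$, arriving at $\sum_{u}u^{-e}\cdot\frac{u+c}{c^{q}u+1}$, which is recognised via Lemma~\ref{lem:ai=sum-f} as the $e$-th coefficient of $\frac{u+c}{c^{q}u+1}$ and read off directly from Lemma~\ref{lem:frac-poly} (with $c$ replaced by $c^{q}$). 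You instead use the involution property of $\tau$ in a single substitution and then expand $\tau(v)^{-1}=\frac{cv+1}{v+c^{q}}$ by rescaling to apply Lemma~\ref{lem:frac-poly} with $d=c^{-q}$, finishing with an explicit term-by-term power sum. Both routes are equally short; the paper's has slightly less exponent bookkeeping, while yours makes the role of the involution more explicit.
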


\begin{proof}
Applying Lemma \ref{lem:ai=sum-f} to the function $f(u)=\left ( \frac{u+c^q}{cu+1} \right )^e$, we obtain
\begin{eqnarray*}
\begin{array}{rll}
a_1 &= \sum_{u \in U_{q+1}} \left ( \frac{u+c^q}{cu+1} \right )^e u^{-1} &\\
&= \sum_{u \in U_{q+1}} \left ( \frac{u^{-1}+c^q}{cu^{-1}+1} \right )^e u & \text{ Substituting } u^{-1} \text{ with } u\\
&= \sum_{u \in U_{q+1}} \left ( \frac{1+c^q u}{c+u} \right )^e u & \\
&= \sum_{u \in U_{q+1}} \left ( \frac{u+c}{c^q u+1} \right )^{-e} u & \\
&= \sum_{u \in U_{q+1}} u^{-e} \left ( \frac{u+c}{c^q u+1} \right ) & \text{ Substituting } \frac{u+c}{c^q u+1}  \text{ with } u \\
&= c^{q(e-1)}, &
\end{array}
\end{eqnarray*}
where the last equality follows from Lemmas \ref{lem:frac-poly} and \ref{lem:ai=sum-f}.

Employing Lemma \ref{lem:ai=sum-f} on $\left ( \frac{u+c^q}{cu+1} \right )^e$ again, we have
\begin{eqnarray*}
\begin{array}{rll}
a_0 &= \sum_{u \in U_{q+1}} \left ( \frac{u+c^q}{cu+1} \right )^e  &\\
&= \sum_{u \in U_{q+1}} u^e  & \text{ Substituting } \frac{u+c^q}{cu+1} \text{ with } u\\
&= 0, &
\end{array}
\end{eqnarray*}
where the last equality follows from (\ref{eq:power-sum}).
This completes the proof.

\end{proof}

Now we are ready to prove the main result of this section.

\begin{theorem}\label{thm:code-PGL-4}
Let $q=2^m$ with $m\ge 2$. If $\mathcal C$ is a linear code over $\gf(2^h)$ of length $2^{m}+1$
that is invariant under the permutation action of $\PGL_2(\gf(2^m))$,
then $\mathcal C$ must be one of the following:
\begin{enumerate}
\item[(I)] the zero code $\mathcal C_0 = \{(0,0, \cdots  ,0)\}$; or
\item[(II)] the whole space $\gf(2^h)^{q+1}$, which is the dual of $\mathcal C_0$; or
\item[(III)] the repetition code $\mathcal C_1 = \{(c,c, \cdots  ,c): c \in \gf(2^h)\}$ of dimension $1$; or
\item[(IV)]  the code $\mathcal C_1^{\perp}$,  given by
\[\mathcal C_1^{\perp}= \left \{ (c_0, \cdots, c_{q})\in \gf(2^h)^{q+1}: c_0+ \cdots +c_{q}=0 \right \}.\]
\end{enumerate}

\end{theorem}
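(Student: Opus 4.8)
The plan is to work with the equivalent action of $\mathrm{Stab}_{U_{q+1}}$ on $U_{q+1}$ (Proposition \ref{prop:Stab-PGL}), so that every codeword $\mathbf c = (c_u)_{u \in U_{q+1}}$ can be identified with a function $f\colon U_{q+1} \to \gf(2^h)$, hence with its univariate polynomial expansion $f(u) = \sum_{i=0}^{q} a_i u^i$ via Lemma \ref{lem:ai=sum-f}. The key observation is that applying the type-(III) generator $u \mapsto \frac{u+c^q}{cu+1}$ to a codeword amounts to substituting this linear fractional transformation into $f$; so if $\mathcal C$ is invariant, then for every $c \in \gf(q^2)^* \setminus U_{q+1}$ the function $u \mapsto f\!\left(\frac{u+c^q}{cu+1}\right)$ also lies in $\mathcal C$. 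By Corollary \ref{cor:a_e-neq-0-E}, the cyclicity-defining set $E$ of the cyclic code generated by the orbit of $\mathbf c$ must contain every exponent $i$ appearing with nonzero coefficient in any of these expansions; the goal is to show that if $E$ contains any $i$ with $1 \le i \le q-1$, then $E$ must be all of $\{0,1,\dots,q\}$, forcing $\mathcal C \in \{\gf(2^h)^{q+1}\}$ or, after intersecting with the condition that $E$ avoid $i=0$, one of the four trivial codes.

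The main steps, in order, would be: (i) Reduce to the cyclic setting. Since type-(I) transformations $u \mapsto u_0 u$ are exactly cyclic shifts of the coordinates, any code invariant under $\mathrm{Stab}_{U_{q+1}}$ is in particular cyclic, so it is determined by an $r$-invariant subset $E \subseteq \Z_{q+1}$ (here $r = 2^h$), and the cyclotomic structure is further constrained by type-(II), $u \mapsto u^{-1}$, which forces $E$ to be closed under $i \mapsto -i \pmod{q+1}$. (ii) Show the "jump" from one intermediate exponent to all of them. Suppose $e \in E$ with $1 \le e \le q-1$; pick a codeword $\mathbf c$ whose expansion has a nonzero coefficient on $u^e$. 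Apply the type-(III) transformation and expand $f\!\left(\frac{u+c^q}{cu+1}\right)$. The cleanest route is to take $\mathbf c$ to be (a nonzero scalar multiple of) the characteristic-type codeword whose expansion is just $u^e$ — i.e. first argue that $E$ being $r$-invariant and $\mathbf a\cdot\mathcal C$-type manipulations let us assume $f(u) = u^e$ is itself (up to the lifted field) realizable — and then invoke Lemma \ref{lem:a1-neq-0}, which tells us the expansion of $u \mapsto \left(\frac{u+c^q}{cu+1}\right)^e$ has $a_0 = 0$ and $a_1 = c^{q(e-1)} \ne 0$. Thus $1 \in E$. (iii) Bootstrap from $1 \in E$. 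Once $1 \in E$, repeat the argument: the expansion of $\frac{u+c^q}{cu+1}$ itself is $\sum_{i=1}^{q} c^{i-1} u^i$ by Lemma \ref{lem:frac-poly}, and since $c$ ranges over $\gf(q^2)^* \setminus U_{q+1}$, the coefficients $c^{i-1}$ are not all forced to vanish for any fixed $i \in \{1,\dots,q\}$; more carefully, by taking $\gf(2^h)$-linear combinations over several values of $c$ (a Vandermonde / Moore-matrix argument on the exponents $c^{0}, c^{1}, \dots, c^{q-1}$) one extracts each monomial $u^i$ for $1 \le i \le q$ individually, so $\{1,2,\dots,q\} \subseteq E$. (iv) Finish. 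The only $r$-invariant subsets of $\Z_{q+1}$ containing $\{1,\dots,q\}$ are $\{1,\dots,q\}$ and $\{0,1,\dots,q\}$; the first gives the cyclic code with all-nonzero-frequencies-except-$0$, which is exactly $\mathcal C_1^{\perp}$ (codewords summing to zero, cf. \eqref{eq:power-sum} with $e$ ranging over $1,\dots,q$), and the second gives the full space $\gf(2^h)^{q+1}$. If instead $E \cap \{1,\dots,q-1\} = \varnothing$, then $E \subseteq \{0, q\}$; but $q \equiv -1 \pmod{q+1}$ lies in the same cyclotomic coset as… one checks the $r$-cyclotomic coset of $q$ modulo $q+1$ and, combined with the $i\mapsto -i$ symmetry from type (II), sees that $E \subseteq \{0\}$, leaving $E = \varnothing$ (the zero code $\mathcal C_0$) or $E = \{0\}$ (the repetition code $\mathcal C_1$, whose codewords are constant functions on $U_{q+1}$). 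These are exactly the four codes (I)--(IV), with the dual pairings $\mathcal C_0 \leftrightarrow \gf(2^h)^{q+1}$ and $\mathcal C_1 \leftrightarrow \mathcal C_1^{\perp}$ following from \eqref{eq:a-code-dual} and \eqref{eq:power-sum}.

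I expect the main obstacle to be step (ii)--(iii): rigorously passing from "$E$ contains some intermediate exponent $e$" to "$E$ contains $1$" and then to "$E$ contains every exponent between $1$ and $q$." One must be careful that a general codeword with a nonzero $u^e$-coefficient also carries other terms, so applying the type-(III) map produces a sum $\sum_i a_i \left(\frac{u+c^q}{cu+1}\right)^i$ whose coefficients mix contributions from all active $i$; the argument has to isolate the top or bottom behavior (using that $a_0 = 0$ and the explicit $a_1 = c^{q(e-1)}$ from Lemma \ref{lem:a1-neq-0} for the pure power) and then run a separation-of-coefficients / linear-independence argument across enough choices of $c \in \gf(q^2)^* \setminus U_{q+1}$. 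Showing that $\{c^{i-1} : c \in \gf(q^2)^*\setminus U_{q+1}\}$ spans enough of $\gf(q^2)$ to recover each monomial — essentially that a small Vandermonde built from such $c$ is nonsingular — is the technical heart, and handling the boundary exponents $i = 0$ and $i = q$ (which behave differently because $u^{q} = u^{-1}$ on $U_{q+1}$ and $u^0$ is the constant) requires the separate bookkeeping via \eqref{eq:power-sum} that I sketched in step (iv).
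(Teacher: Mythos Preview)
Your approach is essentially the paper's, and the outline is correct. The two ``obstacles'' you flag dissolve once you commit fully to the lifted code $\gf(q^{2h})\otimes\mathcal C$ (Corollary~\ref{cor:cyclic-extension}): in the lifted code the pure monomial $(u^e)_{u\in U_{q+1}}$ \emph{is} a codeword for every $e\in E$, so there is no mixing of terms in step~(ii), and Lemma~\ref{lem:a1-neq-0} applies directly to give $1\in E$ via Corollary~\ref{cor:a_e-neq-0-E}. For step~(iii) no Vandermonde is needed: once $1\in E$, the single codeword $\bigl(\tfrac{u+c^q}{cu+1}\bigr)_{u\in U_{q+1}}=\bigl(\sum_{i=1}^{q}c^{i-1}u^i\bigr)_{u\in U_{q+1}}$ from Lemma~\ref{lem:frac-poly} has every coefficient $c^{i-1}\neq 0$, so Corollary~\ref{cor:a_e-neq-0-E} immediately yields $\{1,\dots,q\}\subseteq E$. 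Finally, Lemma~\ref{lem:a1-neq-0} already covers the full range $1\le e\le q$, so your separate treatment of $e=q$ in step~(iv) is unnecessary (though the negation argument via the type-(II) generator is also valid).
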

\begin{proof}
It is evident that the four trivial $2^h$-ary linear codes $\mathcal C_0, \mathcal C_0^{\perp}, \mathcal C_1$ and
$\mathcal C_1^{\perp}$ of length $2^m+1$ are invariant under $\PGL_2(\gf(2^m))$.

Let $\mathcal C$
be a $2^h$-ary linear code of
 length $q+1$ which is invariant
under $\PGL_2(\gf(q))$, which amounts to saying that $\mathcal C$ is invariant under $\mathrm{Stab}_{U_{q+1}}$ by Proposition \ref{prop:Stab-PGL}.
 By Part (I) of Proposition \ref{prop:three-types}, the translation $\pi(u)=u_0 u$ belongs to $\mathrm{Stab}_{U_{q+1}}$, where $u_0 \in U_{q+1}$.
 This clearly forces $\mathcal C$ to be a cyclic code. Let $E$ be the cyclicity-defining set of $\mathcal C$. We consider the following four cases for $E$.

 If $E= \emptyset $, then $\mathcal C= \mathcal C_0$

 If $E= \{ 0 \}$, then $\mathcal C= \mathcal C_1$

 If $\{0\} \subsetneq E$, then there exists an $e\in E \setminus \{0\}$. Applying Corollary \ref{cor:cyclic-extension},  the lifted  code  $\gf(q^{2h}) \otimes \mathcal C$
 to $\gf(q^{2h})$ is the cyclic code over $\gf(q^{2h})$ with respect to the cyclicity-defining set $E$.
 We see at once that $\gf(q^{2h}) \otimes \mathcal C$ also stays invariant under  $\mathrm{Stab}_{U_{q+1}}$ from the definition of  lifting of a cyclic code.
Combining Corollary \ref{cor:cyclic-extension} with Proposition \ref{prop:three-types} we obtain
 $ \left ( \left (\frac{u + c^{q+1}}{ cu +1} \right )^e \right )_{ u\in U_{q+1}} \in \gf(q^{2h}) \otimes \mathcal C$, where $c \in \gf(q^2)^* \setminus U_{q+1}$.
Applying Corollary \ref{cor:a_e-neq-0-E} and Lemma \ref{lem:a1-neq-0}  we can assert that $1\in E$. Thus
$ \left ( \frac{u + c^{q+1}}{ cu +1} \right )_{ u\in U_{q+1}} \in \gf(q^{2h}) \otimes \mathcal C$.
Combining Corollary \ref{cor:a_e-neq-0-E} and Lemma \ref{lem:frac-poly} we deduce $E=\{0,1, \cdots, q\}$.
 We thus get $\mathcal C= \gf(2^h)^{q+1}=\mathcal C_0^{\perp}$.

If $E \neq \emptyset$ and $ 0 \not \in E$, then there exists an $e\in E \setminus \{0\}$.
An analysis similar to that in the proof of the case of $\{0\} \subsetneq E$ shows
that $E=\{1, \cdots, q\}$ and $\mathcal C= \mathcal C_1^{\perp}$.
 This completes the proof.
\end{proof}

The remainder of this section will be devoted to determining the
$2$-rank of some special incidence structures.
Let $\bD = (X, \cB)$ be an incidence structure.
The points of $X$ are usually indexed with $p_1,p_2,\cdots,p_v$, and the
blocks of $\cB$
are normally denoted by $B_1, B_2, \cdots, B_b$.
 The {\em incidence matrix} $M_\bD=(m_{ij})$ of $\bD$
is a $b \times v$ matrix where $m_{ij}=1$ if  $p_j \in B_i$
and $m_{ij}=0$ otherwise.
The $p$-rank of an incidence structure  $\bD$ is defined as the rank of its incidence matrix
over a finite field of characteristic $p$ and denoted by $\mathrm{rank}_p(\bD)$.
 The binary matrix $M_{\bD}$ can be viewed as a matrix over $\gf(q)$ for any
prime power $q$, and its row vectors span a linear code of length $v$
over $\gf(q)$, which is
denoted by $\mathcal C_q(\bD)$ and called the \emph{code} of $\bD$ over $\gf(q)$.
The $p$-rank of incidence structures, i.e., the dimension of the corresponding codes,
can be used to classify incidence structures of certain type.
For example, the $2$-rank and $3$-rank of Steiner triple and quadruple
 systems were intensively studied and
employed for counting and classifying Steiner triple and quadruple systems
 \cite{JMTW},
 \cite{JT}, \cite{SXK}
 \cite{Z16}, \cite{ZZ12}, \cite{ZZ13}, \cite{ZZ13a}.

For any set $A$ and a positive integer $k$, recall that $\binom{A}{k}$ denotes the set of all $k$-subsets of $A$. 
The following theorem is an important corollary of Theorem 
\ref{thm:code-PGL-4}.

\begin{theorem}\label{thm;2-rank}
Let $\cB \subseteq \binom{\PG(1,2^m)}{k}$ such that $m\ge 2$, $1\le k \le 2^m$ and $\cB$ is invariant under the action of 
$\PGL_2(\gf(2^m))$.
Then the $2$-rank of the incidence structure $\bD=(\PG(1,2^m), \cB)$ is given by
\begin{eqnarray*}
\mathrm{rank}_2(\bD)=\left \{
\begin{array}{cl}
2^m, & \text{ if } k \text{ is even},\\
2^m+1, & \text{ if } k \text{ is odd}.
\end{array}
\right .
\end{eqnarray*}

\end{theorem}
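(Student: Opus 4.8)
The plan is to deduce Theorem~\ref{thm;2-rank} directly from Theorem~\ref{thm:code-PGL-4} by examining the binary code $\mathcal C_2(\bD)$ spanned by the rows of the incidence matrix $M_\bD$. First I would observe that since $\cB$ is invariant under $\PGL_2(\gf(2^m))$ and each row of $M_\bD$ is the characteristic vector $\chi_B$ of a block $B\in\cB$, the permutation action of any $\pi\in\PGL_2(\gf(2^m))$ on the coordinate positions $\PG(1,2^m)$ sends $\chi_B$ to $\chi_{\pi(B)}$, which is again a row of $M_\bD$. Hence $\mathcal C_2(\bD)$ is a linear code over $\gf(2)$ of length $2^m+1$ that is invariant under $\PGL_2(\gf(2^m))$, so by Theorem~\ref{thm:code-PGL-4} (applied with $h=1$) it must be one of $\mathcal C_0$, $\gf(2)^{q+1}$, $\mathcal C_1$, or $\mathcal C_1^{\perp}$, whose dimensions are $0$, $q+1$, $1$, and $q$ respectively.

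Next I would rule out the two small cases. Since $\cB\neq\emptyset$ and $k\ge 1$, $M_\bD$ has a nonzero row, so $\mathcal C_2(\bD)\neq\mathcal C_0$; and since $k\le 2^m<2^m+1$, no row is the all-ones vector, and in fact a single characteristic vector of a proper nonempty subset already fails to lie in $\mathcal C_1$, so $\mathcal C_2(\bD)\neq\mathcal C_1$ (one must check $k\ne 0$ and $k\ne q+1$, which is exactly the hypothesis $1\le k\le 2^m$). Therefore $\mathcal C_2(\bD)$ is either $\gf(2)^{q+1}$, of dimension $q+1$, or $\mathcal C_1^{\perp}$, of dimension $q$. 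The distinction between these two is governed by parity: every vector of $\mathcal C_1^{\perp}$ has even weight, whereas $\gf(2)^{q+1}$ contains odd-weight vectors. Since each generating row $\chi_B$ has weight $k$, if $k$ is even then every row lies in $\mathcal C_1^{\perp}$, forcing $\mathcal C_2(\bD)\subseteq\mathcal C_1^{\perp}$ and hence (being neither $\mathcal C_0$ nor $\mathcal C_1$) equal to $\mathcal C_1^{\perp}$, giving $\rank_2(\bD)=q=2^m$; if $k$ is odd then $\mathcal C_2(\bD)$ contains an odd-weight vector, so it cannot be contained in $\mathcal C_1^{\perp}$, leaving only $\mathcal C_2(\bD)=\gf(2)^{q+1}$ and $\rank_2(\bD)=q+1=2^m+1$.

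I do not anticipate a serious obstacle here: the argument is a clean case analysis once Theorem~\ref{thm:code-PGL-4} is in hand. The only point requiring a little care is the case $k$ odd: one must confirm that among the four trivial codes, $\gf(2)^{q+1}$ is genuinely the only one containing an odd-weight codeword and simultaneously the only one not already excluded, so that no alternative remains. This is immediate from the dimension and weight-parity descriptions of $\mathcal C_0,\mathcal C_1,\mathcal C_1^{\perp}$. It is also worth noting explicitly that the theorem's hypothesis $m\ge 2$ is inherited from Theorem~\ref{thm:code-PGL-4}, and that nothing about the block multiplicities or whether $\bD$ is actually a $t$-design is used — only the $\PGL_2(\gf(2^m))$-invariance of $\cB$ as a family of $k$-subsets, together with $1\le k\le 2^m$.
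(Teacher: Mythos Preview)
Your proposal is correct and follows essentially the same approach as the paper: invoke Theorem~\ref{thm:code-PGL-4} to conclude that $\mathcal C_2(\bD)$ is one of the four trivial codes, then eliminate $\mathcal C_0$ and $\mathcal C_1$ and distinguish $\mathcal C_1^{\perp}$ from $\gf(2)^{q+1}$ by the parity of $k$. The paper's proof is considerably terser---it simply asserts that $\mathcal C_2(\bD)$ equals $\mathcal C_1^{\perp}$ or $\gf(2)^{2^m+1}$ and that ``the desired conclusion then follows''---so your version spells out exactly the details the paper leaves to the reader.
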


\begin{proof}
Since $\cB$ is invariant under the action of $\PGL_2(\gf(2^m))$, then so is the code $\mathcal C_2(\bD)$ of $\bD$. It then follows  
from Theorem \ref{thm:code-PGL-4} that $\mathcal C_2(\bD)=\mathcal C_1^{\perp}$ or $\mathcal C_2(\bD)=\gf(2)^{2^m+1}$.
The desired conclusion then follows.
\end{proof}

\section{Linear codes of length $2^m+1$ with sets of  supports  invariant under 
$\PGL_2(\gf(2^m))$}\label{sec:supports-PGL}

Throughout this section, let $q=2^m$, and let $U_{q+1}$ be the set of all $(q+1)$-th roots of unity in $\gf(q^2)$, 
where  $m\ge 2$ is a positive integer. 
In this section, we describe two families of nontrivial  linear codes with the set of the supports of all codewords 
of any fixed weight being invariant under $\PGL_2(\gf(q))$.

We define a cyclic code over $\gf(q^2)$ of length $q+1$ by   
\begin{eqnarray}
\CE= \left \{
\begin{array}{c}
\left ( a_3 u^3 + a_{q-2} u^{q-2} +a_5 u^5 + a_{q-4} u^{q-4}  \right )_{u \in U_{q+1}} : \\
a_3,  a_{q-2}, a_5 ,  a_{q-4} \in \gf(q^2)
\end{array}
\right \}.
\end{eqnarray}
We index the coordinates of the codewords in $\CE$ and related codes with the elements in $U_{q+1}$. 
It is evident that the dual of $\CE$ is given as
\begin{eqnarray}
\CE^{\perp}= \left \{
\begin{array}{c}
(c_u)_{u \in U_{q+1}}\in \gf(q^2)^{q+1}: \sum_{u\in U_{q+1}} c_u \mathbf{h}_u =\mathbf{0}
\end{array}
\right \},
\end{eqnarray}
where $\mathbf{h}_u$ is the transpose of the row vector 
$(u^{-5}, u^{-3}, u^3, u^5)$.

 It is obvious that if $\left (c_u \right )_{u\in U_{q+1}} \in \CE $ (resp., $\left (c_u \right )_{u\in U_{q+1}} \in \CE^{\perp} $),
then  $\left (c_u^q \right )_{u\in U_{q+1}} \in \CE $ (resp., $\left (c_u^q \right )_{u\in U_{q+1}} \in \CE^{\perp} $).
From \cite[Lemma 7]{GP10}  we deduce that 
\begin{eqnarray}
\tr_{q^2/q}\left (\CE \right ) =  \left . \CE \right  |_{\gf(q)},
\end{eqnarray}
and
\begin{eqnarray}
\tr_{q^2/q}\left (\CE^{\perp} \right ) =  \left . \CE^{\perp} \right  |_{\gf(q)}.
\end{eqnarray}
In fact, $\C_{\{3,5\}}$ is the lifted code of $\tr_{q^2/q}\left (\CE \right )$ to $\gf(q^2)$ and has cyclicity-defining set $\{3,5,q-2, q-4\}$. 
Similarly, $\C_{\{3,5\}}^\perp$ is the lifted code of $\tr_{q^2/q}\left (\CE^\perp \right )$ to $\gf(q^2)$. The reader is referred to Theorem \ref{thm:cyclic-trance} and Corollary \ref{cor:cyclic-extension} for further clarification. 

In order to describe the  supports of the codewords of
 $\tr_{q^2/q}\left (\CE \right )$
and $\left . \CE^{\perp} \right  |_{\gf(q)}$,
we need to employ symmetric polynomials and elementary symmetric polynomials.
A polynomial $f$ is said to be symmetric if it is invariant under any permutation of its variables.
The \emph{elementary symmetric polynomial} (\emph{ESP}) of degree $\ell$ in 
$k$ variables $u_1, u_2, \cdots, u_k$, written  $\sigma_{k,\ell}$, is defined by
\begin{align}\label{eq:esp}
\sigma_{k,\ell}(u_1, \cdots, u_{k})= \sum_{I\subseteq [k], |I|=\ell}  \prod_{j\in I} u_j,
\end{align}
where $[k]= \{1,2, \cdots, k\}$. Already known to Newton, the fundamental theorem of symmetric polynomials asserts that
any symmetric polynomial is a polynomial in the elementary symmetric polynomials.
For any  $k$-variable symmetric polynomial $f$ with coefficients in $\gf(q^2)$, write
\begin{align}\label{eq:sp-B}
\cB_{f,q+1}=\left \{ \{u_1, \cdots, u_k\} \in \binom{U_{q+1}}{k} : f(u_1, \cdots, u_k)=0 \right \}.
\end{align}
In general, it is difficult to determine $| \cB_{f,q+1}|$. However, it was shown in \cite{TD20} that 
\begin{eqnarray}\label{eq:ESP-5-2}
| \cB_{\sigma_{5,2}, q+1}| = \left\{
\begin{array}{ll}
 \frac{1}{10} \binom{q+1}{3}, &  \text{ if $m$ is even, } \\
0, & \text{ if $m$ is odd. } 
\end{array}
\right.
\end{eqnarray}

To determine the parameters of $\tr_{q^2/q}\left (\CE \right )$
and $\left . \CE^{\perp} \right  |_{\gf(q)}$, we prove several lemmas below.
To simplify notation and expressions below, we use 
$\sigma_{k, \ell}$ to denote $\sigma_{k, \ell}(u_1, \ldots, u_\ell)$ for any 
 $\{u_1, \ldots, u_k\} \in  \binom{U_{q+1}}{k}$ whenever $\{u_1, \ldots, u_k\}$ is specified. 

\begin{lemma}\label{lem:sigmasigma}
Let $\sigma_{3,1}, \sigma_{3,2}, \sigma_{3,3}$ be the ESPs given by (\ref{eq:esp}) with $\{u_1,u_2,u_3\} \in  \binom{U_{q+1}}{3}$. Then
\begin{enumerate}
\item[(I)]  $\sigma_{3,1}\sigma_{3,2}+\sigma_{3,3}=(u_1+u_2)(u_2+u_3)(u_3+u_1)$; 
\item[(II)] $\sigma_{3,1}\sigma_{3,2}+\sigma_{3,3}\neq 0$; and 
\item[(III)] $\sigma_{3,2}^2+\sigma_{3,1}\sigma_{3,3}= \sigma_{3,3}^2\left (\sigma_{3,1}^2+ \sigma_{3,2} \right )^q$. 
\end{enumerate} 
\end{lemma}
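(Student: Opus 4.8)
The plan is to dispatch (I) and (II) by elementary symmetric-function algebra and to prove (III) using the fact that the Frobenius map acts on $U_{q+1}$ by inversion. For (I), I would start from the classical polynomial identity, valid over any commutative ring,
\[ \sigma_{3,1}\sigma_{3,2} - \sigma_{3,3} = (u_1+u_2)(u_2+u_3)(u_3+u_1), \]
which follows by a direct expansion of both sides (equivalently, one recognizes the right-hand side as $\prod_{1\le i<j\le 3}(u_i+u_j)$ expressed through the elementary symmetric functions). Since we work in characteristic $2$, where $-\sigma_{3,3}=\sigma_{3,3}$, this is exactly (I). For (II), combine this with (I): as $u_1,u_2,u_3$ are pairwise distinct elements of the field $\gf(q^2)$, each factor $u_i+u_j$ is nonzero, hence so is their product $\sigma_{3,1}\sigma_{3,2}+\sigma_{3,3}$.

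For (III), the crux is that every $u\in U_{q+1}$ satisfies $u^{q+1}=1$, hence $u^q=u^{-1}$; thus the map $\phi\colon x\mapsto x^q$ on $\gf(q^2)$ sends each $u_i$ to $u_i^{-1}$. Because $\phi$ is a field automorphism and $q$ is a power of $2$, $\phi$ is both additive and multiplicative, so applying $\phi$ to the definitions of $\sigma_{3,1},\sigma_{3,2},\sigma_{3,3}$ and substituting $u_i^q=u_i^{-1}$ yields the dual relations
\[ \sigma_{3,1}^q = u_1^{-1}+u_2^{-1}+u_3^{-1} = \frac{\sigma_{3,2}}{\sigma_{3,3}}, \qquad \sigma_{3,2}^q = \frac{\sigma_{3,1}}{\sigma_{3,3}}, \qquad \sigma_{3,3}^q = \frac{1}{\sigma_{3,3}}, \]
where the inversions are legitimate since $\sigma_{3,3}=u_1u_2u_3\neq 0$. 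Using additivity of $\phi$ once more,
\[ \sigma_{3,3}^2\bigl(\sigma_{3,1}^2+\sigma_{3,2}\bigr)^q = \sigma_{3,3}^2\Bigl((\sigma_{3,1}^q)^2+\sigma_{3,2}^q\Bigr) = \sigma_{3,3}^2\Bigl(\frac{\sigma_{3,2}^2}{\sigma_{3,3}^2}+\frac{\sigma_{3,1}}{\sigma_{3,3}}\Bigr) = \sigma_{3,2}^2+\sigma_{3,1}\sigma_{3,3}, \]
which is precisely the left-hand side of (III).

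I do not anticipate a genuine obstacle here. The single nontrivial observation is that $\phi$ acts by inversion on $U_{q+1}$, producing the swap $\sigma_{3,1}\leftrightarrow\sigma_{3,2}$ (up to the factor $\sigma_{3,3}$) under $q$-th powering; after that, (III) is a one-line substitution. The only care required is to track the characteristic-$2$ cancellations in (I) and to note $\sigma_{3,3}\neq 0$ so that the inverse relations are meaningful.
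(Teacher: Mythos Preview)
Your argument is correct in all three parts, and since the paper omits the proof as ``straightforward,'' your direct computation is exactly the kind of verification the authors had in mind. The one genuinely non-mechanical step---using $u^q=u^{-1}$ on $U_{q+1}$ to obtain the dual relations $\sigma_{3,1}^q=\sigma_{3,2}/\sigma_{3,3}$ and $\sigma_{3,2}^q=\sigma_{3,1}/\sigma_{3,3}$---is handled cleanly, and the rest is routine.
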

\begin{proof}
The proofs are straightforward and omitted.
\end{proof}

\begin{lemma}\label{lem:sigma3not=0}
Let $q=2^m$ with $m$ even. Let $\sigma_{3,1}, \sigma_{3,2}, \sigma_{3,3}$ be the ESPs given by (\ref{eq:esp}) with
$\{u_1,u_2,u_3\} \in  \binom{U_{q+1}}{3}$. Then
\begin{enumerate}
\item[(I)] $\sigma_{3,1}^2+ \sigma_{3,2}\neq 0$; and
\item[(II)] $\sigma_{3,2}^2+\sigma_{3,1}\sigma_{3,3} \neq 0$. 
\end{enumerate} 
\end{lemma}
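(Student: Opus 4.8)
The plan is to leverage the algebraic identity already provided in Lemma~\ref{lem:sigmasigma}(III), which rewrites $\sigma_{3,2}^2 + \sigma_{3,1}\sigma_{3,3}$ as $\sigma_{3,3}^2(\sigma_{3,1}^2 + \sigma_{3,2})^q$. Since $\{u_1,u_2,u_3\} \subseteq U_{q+1} \subseteq \gf(q^2)^*$, we have $\sigma_{3,3} = u_1 u_2 u_3 \neq 0$, so part (II) of the present lemma is equivalent to $(\sigma_{3,1}^2 + \sigma_{3,2})^q \neq 0$, which in turn is equivalent to $\sigma_{3,1}^2 + \sigma_{3,2} \neq 0$. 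Hence part (II) follows immediately from part (I), and the whole lemma reduces to proving (I): that $\sigma_{3,1}^2 + \sigma_{3,2} \neq 0$ whenever $m$ is even and $u_1, u_2, u_3$ are three distinct $(q+1)$-th roots of unity.

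To prove (I), I would argue by contradiction: suppose $\sigma_{3,1}^2 = \sigma_{3,2}$, i.e.\ $(u_1+u_2+u_3)^2 = u_1u_2 + u_2u_3 + u_3u_1$. Expanding the left side in characteristic $2$ gives $u_1^2 + u_2^2 + u_3^2 = u_1u_2 + u_2u_3 + u_3u_1$. The first step is to reduce to a two-variable condition by dehomogenizing: since $u_3 \neq 0$, divide through by $u_3^2$ and set $x = u_1/u_3$, $y = u_2/u_3$ (both nonzero, and $x \neq y$, $x \neq 1$, $y \neq 1$). The relation becomes $x^2 + y^2 + 1 = xy + x + y$, equivalently $x^2 + x(y+1) + (y^2+y+1) = 0$. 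Viewing this as a quadratic in $x$ over $\gf(q^2)$, a root exists only if the absolute-trace condition $\tr_{q^2/\F_2}\!\big((y^2+y+1)/(y+1)^2\big) = 0$ holds. The key observation to exploit is the membership constraint: $u_1, u_2, u_3 \in U_{q+1}$ forces $x$ and $y$ to lie in $U_{q+1}$ as well (ratios of $(q+1)$-th roots of unity), so $x^{q+1} = y^{q+1} = 1$, i.e.\ $x^q = x^{-1}$ and $y^q = y^{-1}$. The plan is to apply the Frobenius $z \mapsto z^q$ to the quadratic relation $x^2 + x(y+1) + (y^2+y+1)=0$, use $x^q = x^{-1}$, $y^q = y^{-1}$, clear denominators, and compare with the original relation; this should force an additional polynomial constraint on $x,y$ that, combined with the original one, is only satisfiable when two of the $u_i$ coincide — contradicting distinctness — unless $m$ is odd.

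The main obstacle I anticipate is handling the parity hypothesis correctly: the statement is false for $m$ odd (consistent with $|\cB_{\sigma_{5,2},q+1}| = 0$ in that case via \eqref{eq:ESP-5-2}), so the argument must genuinely use that $m$ is even somewhere. I expect the even-ness to enter through a trace or norm computation: for instance, $\omega \in \gf(4)$ with $\omega^2+\omega+1=0$ lies in $\gf(q^2)$ for all $m$, but the relevant condition $x^3 = 1$ or $x \in \gf(4)$ has solutions in $U_{q+1}$ precisely when $3 \mid q+1$, i.e.\ when $m$ is even. So the cleanest route may be: show that $\sigma_{3,1}^2 = \sigma_{3,2}$ together with $u_i \in U_{q+1}$ forces the pairwise ratios $u_i/u_j$ to be primitive cube roots of unity (elements of $\gf(4)^*$); for three distinct points this is impossible since the three ratios among three distinct elements cannot all be nontrivial cube roots of unity (their product is $1$ but $\omega \cdot \omega \cdot \omega = 1$ would require all equal to $\omega$, forcing $u_1/u_2 = u_2/u_3 = u_3/u_1 = \omega$, hence $\omega^3 = 1$ consistent, but then $u_1 = \omega u_2$, $u_2 = \omega u_3$, $u_1 = \omega u_3$ gives $\omega u_3 = \omega^2 u_3$, a contradiction). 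Thus no such triple exists, proving (I), and (II) follows as noted above. I would keep the write-up short, deriving the ratio condition carefully and then dispatching the combinatorial impossibility in one line.
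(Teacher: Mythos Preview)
The paper does not actually prove this lemma; it simply refers to \cite{TD20}. So there is no ``paper's approach'' to compare against, and your write-up would stand on its own.

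Your reduction of (II) to (I) via Lemma~\ref{lem:sigmasigma}(III) is clean and correct: since $\sigma_{3,3}=u_1u_2u_3\neq 0$, the identity $\sigma_{3,2}^2+\sigma_{3,1}\sigma_{3,3}=\sigma_{3,3}^2(\sigma_{3,1}^2+\sigma_{3,2})^q$ makes (II) equivalent to (I).

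For (I), however, the final part of your sketch breaks down. Your plan to ``show that the pairwise ratios $u_i/u_j$ are primitive cube roots of unity, and then argue this is combinatorially impossible for three distinct points'' has two problems. First, you never establish the cube-root claim; it does not follow from the quadratic relation alone, and your Frobenius idea is stated but not executed. Second, the combinatorial impossibility is simply false: the triple $\{1,\omega,\omega^2\}$ with $\omega\in\gf(4)^*$ has all pairwise ratios equal to primitive cube roots of unity and consists of three distinct elements. (Your chain $u_1=\omega u_2$, $u_2=\omega u_3$ gives $u_1=\omega^2 u_3$, not $u_1=\omega u_3$.) So the place where the hypothesis ``$m$ even'' enters cannot be that step.

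A route that actually closes the argument: over $\gf(4)$ one has the factorisation
\[
x^2+xy+y^2+x+y+1=(x+\omega y+\omega^2)(x+\omega^2 y+\omega),
\]
so without loss of generality $x=\omega y+\omega^2$ with $x=u_1/u_3$, $y=u_2/u_3\in U_{q+1}$. Now the parity of $m$ enters through $\omega^q$: for $m$ even, $\omega^q=\omega$. Raising $x=\omega y+\omega^2$ to the $q$th power and using $x^q=x^{-1}$, $y^q=y^{-1}$ yields $x^{-1}=\omega y^{-1}+\omega^2$; clearing denominators and eliminating $y$ with the original relation forces $x^2+1=0$, i.e.\ $x=1$, contradicting $u_1\neq u_3$. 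Alternatively, once you correctly deduce that $x\in\{\omega,\omega^2\}$, the contradiction is that $\omega\in U_{q+1}$ would require $3\mid q+1$, which fails for $m$ even. Either way, the parity hypothesis is used through $\omega^q=\omega$ (equivalently $3\nmid q+1$), not through a parity-free combinatorial obstruction.
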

\begin{proof}
The proof can be found in \cite{TD20}. 
\end{proof}

 For a positive integer $\ell \leq q+1$, define a $4\times  \ell$ matrix $M_{\ell}$ by
\begin{eqnarray}\label{eq:M}
\left [
\begin{array}{llll}
u_1^{-5} & u_2^{-5} & \cdots & u_{\ell}^{-5} \\
u_1^{-3} & u_2^{-3} & \cdots & u_{\ell}^{-3} \\
u_1^{+3} & u_2^{+3} & \cdots & u_{\ell}^{+3} \\
u_1^{+5} & u_2^{+5} & \cdots & u_{\ell}^{+5} \\
\end{array}
\right ],
\end{eqnarray}
where $u_1, \cdots,  u_{\ell} \in U_{q+1}$. For $r_1, \cdots, r_i \in \{ \pm5, \pm3\}$, let $M_{\ell}[r_1, \cdots, r_i]$
denote the submatrix of $M_{\ell}$  obtained by deleting   the rows
$(u_1^{r_1},  u_2^{r_1},  \cdots ,u_{\ell}^{r_1})$, $\cdots$ ,  $(u_1^{r_i} , u_2^{r_i} , \cdots , u_{\ell}^{r_i})$ of the matrix $M_{\ell}$, 
where $1 \leq i \leq 4$. 

\begin{lemma}\label{lem:sol-rank}
Let $M_{\ell}$ be the matrix given by (\ref{eq:M}) with  $\{u_1, \cdots,  u_{\ell}\} \in  \binom{U_{q+1}}{\ell}$.
Consider the system of homogeneous linear equations defined by
\begin{align}\label{eq:Mx=0}
M_{\ell} (x_1, \cdots,  x_{\ell})^{T}=0.
\end{align}
Then (\ref{eq:Mx=0}) has a nonzero  solution $(x_1, \cdots,  x_{\ell})$ in $\gf(q)^{\ell}$  if and only if
$\rank (M_{\ell})<\ell$, where $\rank (M_{\ell})$ denotes the rank of  the matrix $M_{\ell}$.
\end{lemma}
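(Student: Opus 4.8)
The ``only if'' direction is immediate: a nonzero solution lying in $\gf(q)^{\ell}$ is in particular a nonzero vector of $\ker M_{\ell}$ over $\gf(q^2)$, which forces $\rank(M_{\ell}) < \ell$. So the real content is the ``if'' direction: whenever the kernel of $M_{\ell}$ over $\gf(q^2)$ is nonzero, it already contains a nonzero vector all of whose coordinates lie in the subfield $\gf(q)$. I plan to prove this by exploiting the Frobenius symmetry that is built into the rows of $M_{\ell}$.

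First I would observe that, since $u_i \in U_{q+1}$, one has $u_i^{q} = u_i^{-1}$, hence $(u_i^{3})^{q} = u_i^{-3}$ and $(u_i^{5})^{q} = u_i^{-5}$. Therefore raising every entry of $M_{\ell}$ to the $q$-th power only interchanges the first row with the fourth and the second with the third; writing $M_{\ell}^{(q)}$ for this entrywise $q$-th power, we get $M_{\ell}^{(q)} = P\,M_{\ell}$ for a fixed permutation matrix $P$, and in particular $\ker M_{\ell}^{(q)} = \ker M_{\ell}$ over $\gf(q^2)$.

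Next I would show that the solution space $V := \{x \in \gf(q^2)^{\ell} : M_{\ell}x = 0\}$ is stable under the coordinatewise $q$-th power map $x \mapsto x^{(q)}$: if $M_{\ell}x = 0$, then applying the $q$-th power coordinatewise yields $M_{\ell}^{(q)}x^{(q)} = 0$, and since $\ker M_{\ell}^{(q)} = \ker M_{\ell}$ we obtain $x^{(q)} \in V$. Now assume $\rank(M_{\ell}) < \ell$, so $V \neq \{0\}$, and pick a nonzero $v \in V$. Because the characteristic is $2$, the vector $w := v + v^{(q)}$ lies in $V$ and is fixed by the $q$-th power map, so $w \in \gf(q)^{\ell}$. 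If $w \neq 0$ we are done; if $w = 0$, then $v^{(q)} = v$ (again by characteristic $2$), so $v$ itself lies in $\gf(q)^{\ell}$ and is the desired nonzero solution. In either case $M_{\ell}(x_1,\ldots,x_{\ell})^{T}=0$ has a nonzero solution in $\gf(q)^{\ell}$.

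I do not expect a genuine obstacle here: the argument is a small instance of Galois descent, streamlined by characteristic $2$, where $v + v^{(q)}$ plays the role of the trace of $v$ from $\gf(q^2)$ to $\gf(q)$. The only point that needs a little care is the bookkeeping in the second step, namely that $M_{\ell}^{(q)}$ is a row permutation of $M_{\ell}$, which relies precisely on the exponent set $\{-5,-3,3,5\}$ being closed under negation.
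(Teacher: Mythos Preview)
Your argument is correct. The paper itself omits the proof, referring the reader to \cite[Lemma~29]{TD20}; the Galois--descent argument you give (observe that $u_i^{q}=u_i^{-1}$ makes $M_{\ell}^{(q)}$ a row permutation of $M_{\ell}$, so $\ker M_{\ell}$ is stable under coordinatewise Frobenius, then apply the trace $v\mapsto v+v^{(q)}$) is exactly the standard route and is almost certainly what the cited reference does as well. One cosmetic remark: your appeal to characteristic~$2$ in the last step is convenient but not essential---if $v+v^{(q)}=0$ for some nonzero $v\in V$, then in any characteristic one may replace $v$ by $\alpha v$ with $\alpha\in\gf(q^{2})\setminus\gf(q)$ to obtain $\tr_{q^{2}/q}(\alpha v)=(\alpha-\alpha^{q})v\neq 0$; but since here $q=2^{m}$ your shortcut is perfectly fine.
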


\begin{proof}
The proof is similar to that in \cite[Lemma 29]{TD20} and thus omitted.
\end{proof}

\begin{lemma}\label{lem:rank44}
Let $m$ be an even positive integer and $M_{3}$ be the matrix given by (\ref{eq:M}) with  $\{u_1, u_2, u_3\} \in  \binom{U_{q+1}}{3}$.
Then   $\rank (M_{3})=3$.
\end{lemma}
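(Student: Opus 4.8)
The plan is to produce a $3\times 3$ minor of $M_3$ that does not vanish. Since $M_3$ has four rows and three columns, $\rank(M_3)\le 3$, so it suffices to exhibit a nonsingular $3\times 3$ submatrix; I will use $M_3[5]$, the submatrix obtained by deleting the row $\left(u_1^{5},u_2^{5},u_3^{5}\right)$ of fifth powers.

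First I would compute $\det M_3[5]$. Factoring $u_i^{-5}$ out of the $i$-th column (legitimate since each $u_i\neq 0$) and substituting $v_i=u_i^{2}$, the determinant becomes $\left(\prod_{i=1}^{3}u_i^{-5}\right)$ times the $3\times 3$ determinant whose $i$-th column is $\left(1,v_i,v_i^{4}\right)^{T}$. This last determinant has the exponent pattern $0,1,4$ of a generalized Vandermonde, and a short expansion (equivalently, the Schur-function formula for this pattern, with partition $(2,0,0)$) shows that, up to sign, it equals
\[
\prod_{1\le i<j\le 3}(v_i+v_j)\cdot\left(v_1^2+v_2^2+v_3^2+v_1v_2+v_1v_3+v_2v_3\right).
\]
Substituting back $v_i=u_i^{2}$ and using that $x\mapsto x^{2}$ is additive in characteristic $2$, the first factor becomes $\left(\prod_{i<j}(u_i+u_j)\right)^{2}$ and the second becomes $\sigma_{3,1}^{4}+\sigma_{3,2}^{2}=\left(\sigma_{3,1}^{2}+\sigma_{3,2}\right)^{2}$, where $\sigma_{3,1},\sigma_{3,2}$ are the ESPs of $u_1,u_2,u_3$.

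It then remains to verify that none of these factors vanishes. Since $u_1,u_2,u_3$ are pairwise distinct nonzero elements of $U_{q+1}$, the product $\prod_{i<j}(u_i+u_j)$ is nonzero and the scalar $\prod_i u_i^{-5}$ is nonzero; alternatively, the nonvanishing of $\prod_{i<j}(u_i+u_j)$ is Lemma~\ref{lem:sigmasigma}(II). The factor $\left(\sigma_{3,1}^{2}+\sigma_{3,2}\right)^{2}$ is nonzero by Lemma~\ref{lem:sigma3not=0}(I), and this is precisely where the hypothesis that $m$ is even enters. Hence $\det M_3[5]\neq 0$, and therefore $\rank(M_3)=3$.

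The only genuine computation is the evaluation of the generalized Vandermonde determinant in the second paragraph and its re-expression through $\sigma_{3,1},\sigma_{3,2}$; I expect this bookkeeping, rather than any conceptual step, to be the main obstacle. Once the factor $\sigma_{3,1}^{2}+\sigma_{3,2}$ is isolated, the conclusion follows immediately from the previously established nonvanishing lemma together with the distinctness of the $u_i$.
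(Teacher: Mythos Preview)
Your proof is correct and follows essentially the same approach as the paper: both compute $\det M_3[5]$, obtain the factorization $\sigma_{3,3}^{-5}\prod_{i<j}(u_i+u_j)^{2}\,(\sigma_{3,1}^{2}+\sigma_{3,2})^{2}$, and invoke Lemma~\ref{lem:sigma3not=0}(I) together with the distinctness of the $u_i$. The only difference is that the paper simply asserts the determinant formula, whereas you supply a derivation via the generalized Vandermonde/Schur-function identity, which is a welcome addition of detail rather than a different strategy.
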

\begin{proof}
Suppose that $\rank (M_{3})<3$. Then $\det (M_3[5])=\frac{\prod_{1\le i<j\le 3} (u_i+u_j)^2}{\sigma_{3,3}^5}   \left (\sigma_{3,1}^2+\sigma_{3,2} \right )^2$=0,
which is contrary to Lemma \ref{lem:sigma3not=0}. This completes the proof.
\end{proof}

\begin{lemma}\label{lem:rank65}
Let $m$ be an even positive integer and $M_{4}$ be the matrix given by (\ref{eq:M}) with  $\{u_1, \cdots,  u_4\} \in  \binom{U_{q+1}}{4}$.
Then   $\rank (M_{4})=3$ if and only if $\sigma_{4,2}^2+\sigma_{4,1}\sigma_{4,3} =0$.
\end{lemma}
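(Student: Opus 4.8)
The plan is to reduce the statement to a single determinant evaluation. Since $M_4$ is a square $4\times 4$ matrix, $\rank(M_4)\le 3$ is equivalent to $\det(M_4)=0$, so I first observe that $\rank(M_4)\ge 3$ holds unconditionally: the $4\times 3$ submatrix of $M_4$ formed by its first three columns is precisely the matrix $M_3$ attached to the triple $\{u_1,u_2,u_3\}\in\binom{U_{q+1}}{3}$, which has rank $3$ by Lemma~\ref{lem:rank44} (this is where the hypothesis that $m$ is even is used). Hence $\rank(M_4)=3$ if and only if $\det(M_4)=0$, and everything comes down to showing $\det(M_4)=0$ if and only if $\sigma_{4,2}^2+\sigma_{4,1}\sigma_{4,3}=0$.

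To evaluate $\det(M_4)$ I would factor $u_j^{-5}$ out of the $j$-th column and substitute $v_j=u_j^{2}$; the row exponents $-5,-3,3,5$ become $0,2,8,10$, that is $v_j^{0},v_j^{1},v_j^{4},v_j^{5}$, so the resulting matrix is the generalized Vandermonde (alternant) matrix in $v_1,\dots,v_4$ with row-exponent set $\{0,1,4,5\}$. By the bialternant formula, its determinant equals (up to a sign, immaterial in characteristic $2$) $s_{\mu}(v_1,\dots,v_4)\prod_{i<j}(v_i-v_j)$, where $\mu=(5,4,1,0)-(3,2,1,0)=(2,2)$ is obtained by subtracting the staircase $(3,2,1,0)$ from the sorted exponents. (Alternatively one can expand the $4\times 4$ determinant directly and recognise the factorisation, mirroring the computation of $\det(M_3[5])$ in the proof of Lemma~\ref{lem:rank44}; the alternant route is just more transparent.)

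It then remains to rewrite the two factors in terms of the elementary symmetric polynomials $\sigma_{4,1},\sigma_{4,2},\sigma_{4,3}$ of $u_1,\dots,u_4$. For the Vandermonde part, $\prod_{i<j}(v_i-v_j)=\prod_{i<j}(u_i^{2}-u_j^{2})=\prod_{i<j}(u_i+u_j)^{2}$ in characteristic $2$. For the Schur polynomial, the dual Jacobi--Trudi identity gives $s_{(2,2)}=e_2^{2}-e_1e_3$, and since squaring is the Frobenius on $\gf(q^2)$ one has $e_k(v_1,\dots,v_4)=e_k(u_1^{2},\dots,u_4^{2})=\sigma_{4,k}^{2}$ for each $k$; hence $s_{(2,2)}(v_1,\dots,v_4)=\sigma_{4,2}^{4}+\sigma_{4,1}^{2}\sigma_{4,3}^{2}=(\sigma_{4,2}^{2}+\sigma_{4,1}\sigma_{4,3})^{2}$. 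Combining the pieces,
\[
\det(M_4)=\left(\prod_{j=1}^{4}u_j^{-5}\right)\left(\prod_{1\le i<j\le 4}(u_i+u_j)^{2}\right)\bigl(\sigma_{4,2}^{2}+\sigma_{4,1}\sigma_{4,3}\bigr)^{2}.
\]
Since the $u_j\in U_{q+1}$ are nonzero and pairwise distinct, the first two factors are nonzero, so $\det(M_4)=0$ if and only if $\sigma_{4,2}^{2}+\sigma_{4,1}\sigma_{4,3}=0$; together with the first step this proves the lemma.

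The only genuinely delicate point is the determinant computation: one must identify the partition $(2,2)$ correctly and keep track of the characteristic-$2$ Frobenius identities $e_k(u^{2})=\sigma_{4,k}^{2}$ and $(a+b)^{2}=a^{2}+b^{2}$. Everything else is routine bookkeeping, and the symmetric-function argument can be replaced, if one prefers, by a direct cofactor expansion in the style of Lemma~\ref{lem:rank44}.
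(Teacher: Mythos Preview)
Your proof is correct and follows the same approach as the paper: both reduce the question to the factorisation $\det(M_4)=\sigma_{4,4}^{-5}\prod_{i<j}(u_i+u_j)^{2}\,(\sigma_{4,2}^{2}+\sigma_{4,1}\sigma_{4,3})^{2}$. The paper simply asserts this identity without derivation (and leaves the $\rank(M_4)\ge 3$ direction implicit), whereas you supply both, via Lemma~\ref{lem:rank44} and the bialternant/Jacobi--Trudi computation.
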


\begin{proof}
Note that
$$\det (M_4)= \frac{\prod_{1\le i<j\le 4} (u_i+u_j)^2}{\sigma_{4,4}^5} \left (\sigma_{4,2}^2+\sigma_{4,1}\sigma_{4,3} \right )^2,$$
which completes the proof.
\end{proof}

The following lemma is immediate from \cite[Lemmas 18 and 20]{TD20}.
\begin{lemma}\label{lem:u^2+au+b=0}
Let $q=2^m$ with $m$ even and $\{u_1, u_2, u_3\} \in \binom{U_{q+1}}{3}$.
Let $a=\sigma_{3,1}^2+ \sigma_{3,2}$, $b=\sigma_{3,1}\sigma_{3,2}+\sigma_{3,3}$
and $c=\sigma_{3,2}^2+\sigma_{3,1}\sigma_{3,3}$. Then
the quadratic polynomial $a u^2 + bu + c$ has exactly two roots $u_4, u_5$ in $U_{q+1}$ such that $\{u_1,u_2,u_3,u_4,u_5\} \in \binom{U_{q+1}}{5}$.
Moreover, $\{u_1, u_2, u_3, u_4\}$ satisfies $\sigma_{4,2}^2+\sigma_{4,1}\sigma_{4,3}=0$.
\end{lemma}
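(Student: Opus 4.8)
The plan is to establish the three assertions in turn — that $au^{2}+bu+c$ is a genuine quadratic with two distinct roots; that these roots lie in $U_{q+1}\setminus\{u_{1},u_{2},u_{3}\}$; and that the resulting $4$-subset satisfies $\sigma_{4,2}^{2}+\sigma_{4,1}\sigma_{4,3}=0$ — the first and third following from identities already at hand and the second reducing to one trace evaluation over $\gf(q)$. First I would note that Lemma~\ref{lem:sigmasigma}(II) gives $b\neq 0$, and Lemma~\ref{lem:sigma3not=0} gives $a\neq 0$ and $c\neq 0$; hence $au^{2}+bu+c$ has degree exactly $2$ and, $b$ being nonzero, two distinct roots $u_{4},u_{5}$, neither of them $0$. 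To check that none of $u_{1},u_{2},u_{3}$ is a root, substitute $u=u_{i}$ and use $u_{i}^{3}=\sigma_{3,1}u_{i}^{2}+\sigma_{3,2}u_{i}+\sigma_{3,3}$; a direct computation then collapses the expression to
\[
au_{i}^{2}+bu_{i}+c=u_{i}^{4}+\sigma_{3,2}^{2}=(u_{i}^{2}+\sigma_{3,2})^{2}=\bigl((u_{i}+u_{j})(u_{i}+u_{k})\bigr)^{2}\neq 0,\qquad \{i,j,k\}=\{1,2,3\},
\]
since $u_{1},u_{2},u_{3}$ are distinct. Consequently, once $u_{4},u_{5}\in U_{q+1}$ is proved, $\{u_{1},\dots,u_{5}\}\in\binom{U_{q+1}}{5}$.

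For the second assertion I would set $s=\sigma_{3,3}=u_{1}u_{2}u_{3}\in U_{q+1}$ and, using $u_{i}^{q}=u_{i}^{-1}$, record the Frobenius relations $a^{q}=c/s^{2}$, $b^{q}=b/s^{2}$, $c^{q}=a/s^{2}$ (the first is Lemma~\ref{lem:sigmasigma}(III), and the others are routine). Since $q+1$ is odd there is a unique $w\in U_{q+1}$ with $w^{2}=c/a$; the substitution $u=wv$ turns $au^{2}+bu+c=0$ into $v^{2}+\beta v+1=0$ with $\beta=b/(aw)$, and the relations above together with $w^{2}=c/a$ force $\beta^{q}=\beta$, i.e.\ $\beta\in\gf(q)^{*}$. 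Because $\beta\in\gf(q)$, $\Tr_{q^{2}/2}(\beta^{-2})=\Tr_{q/2}\bigl(\Tr_{q^{2}/q}(\beta^{-2})\bigr)=\Tr_{q/2}(0)=0$, so $v^{2}+\beta v+1$ splits over $\gf(q^{2})$; writing its roots as $v_{4},v_{5}$ we get $v_{4}v_{5}=1$ and $v_{4}\neq v_{5}$. If $v^{2}+\beta v+1$ is irreducible over $\gf(q)$, then $v_{5}=v_{4}^{q}$, so $v_{4}^{q+1}=v_{4}v_{5}=1$, whence $v_{4},v_{5}\in U_{q+1}$ and $u_{4}=wv_{4},\,u_{5}=wv_{5}\in U_{q+1}$; if instead it splits over $\gf(q)$, then $v_{4},v_{5}\in\gf(q)$ with $v_{4}v_{5}=1$, and $v_{4}\in U_{q+1}$ would give $v_{4}^{2}=1$, hence $v_{4}=v_{5}=1$, a contradiction. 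Thus the second assertion is equivalent to $v^{2}+\beta v+1$ being irreducible over $\gf(q)$, i.e.\ to $\Tr_{q/2}(\beta^{-2})=1$; and since $\beta^{-2}=a^{2}w^{2}/b^{2}=ac/b^{2}$, this is precisely the trace identity (valid for $m$ even) supplied by \cite[Lemmas 18 and 20]{TD20}. I expect this trace evaluation to be the main obstacle — everything else is routine characteristic-$2$ algebra. One could instead obtain it from \eqref{eq:ESP-5-2} and the $3$-homogeneity of $\PGL_{2}(\gf(2^{m}))$ on $U_{q+1}$, by a double count showing that each $3$-subset of $U_{q+1}$ extends to exactly one $5$-subset on which $\sigma_{5,2}$ vanishes, and matching those extensions with the roots of the quadratic.

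Finally, for the ``moreover'' part I would expand $\sigma_{4,1}=\sigma_{3,1}+u_{4}$, $\sigma_{4,2}=\sigma_{3,2}+\sigma_{3,1}u_{4}$, $\sigma_{4,3}=\sigma_{3,3}+\sigma_{3,2}u_{4}$, so that
\[
\sigma_{4,2}^{2}+\sigma_{4,1}\sigma_{4,3}=(\sigma_{3,1}^{2}+\sigma_{3,2})u_{4}^{2}+(\sigma_{3,1}\sigma_{3,2}+\sigma_{3,3})u_{4}+(\sigma_{3,2}^{2}+\sigma_{3,1}\sigma_{3,3})=au_{4}^{2}+bu_{4}+c=0
\]
because $u_{4}$ is a root of $au^{2}+bu+c$; the same computation with $u_{5}$ in place of $u_{4}$ gives the analogous relation for $\{u_{1},u_{2},u_{3},u_{5}\}$.
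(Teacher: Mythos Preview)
Your argument is correct and, at bottom, follows the same route as the paper: the paper gives no self-contained proof but simply declares the lemma ``immediate from \cite[Lemmas~18 and~20]{TD20}'', and your reduction lands on exactly that external input (the trace identity $\Tr_{q/2}(ac/b^{2})=1$ for $m$ even). What you have added is a careful unpacking of why the lemma \emph{is} immediate from that identity --- the Frobenius relations $a^{q}=c/s^{2}$, $b^{q}=b/s^{2}$, $c^{q}=a/s^{2}$, the normalisation $u=wv$ with $w^{2}=c/a$ (note you implicitly use $c/a\in U_{q+1}$, which follows from those relations), and the dichotomy on irreducibility of $v^{2}+\beta v+1$ over $\gf(q)$ --- together with the direct verifications that $u_{1},u_{2},u_{3}$ are not roots and that $\sigma_{4,2}^{2}+\sigma_{4,1}\sigma_{4,3}=au_{4}^{2}+bu_{4}+c$. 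All of these steps check out; your alternative suggestion via \eqref{eq:ESP-5-2} and the $3$-homogeneity of $\PGL_{2}(\gf(2^{m}))$ is also a valid way to bypass the trace computation.
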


\begin{lemma}\label{lem:wt6-sp}
Let $q=2^m$ with $m$ even and $M_{4}$ be the matrix given by (\ref{eq:M}) with  $\{u_1, \cdots,  u_4\} \in  \binom{U_{q+1}}{4}$.
If there exists a vector $(x_1, \cdots, x_4) \in \left ( \gf(q)^*\right )^4$ such that $M_4(x_1, \cdots, x_4)^T=0$,  then
 $\{u_1, \cdots, u_4\} \in \cB_{\sigma_{4,2}^2+\sigma_{4,1}\sigma_{4,3},q+1}$, where $\cB_{\sigma_{4,2}^2+\sigma_{4,1}\sigma_{4,3},q+1}$ is defined by (\ref{eq:sp-B}).
 \end{lemma}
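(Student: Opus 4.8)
The plan is to convert the existence of the prescribed solution into a rank deficiency of $M_4$ and then read off the desired polynomial identity from the determinant formula already available. First I would note that a vector $(x_1,\dots,x_4)\in(\gf(q)^*)^4$ annihilated by $M_4$ is, in particular, a nonzero vector in $\gf(q)^4$ with $M_4(x_1,\dots,x_4)^T=0$; so Lemma \ref{lem:sol-rank}, applied with $\ell=4$, yields $\rank(M_4)<4$.

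The second step pins the rank down to exactly $3$. Deleting any one column of $M_4$ leaves a $4\times 3$ matrix of the form (\ref{eq:M}) built from the $3$-subset of $U_{q+1}$ obtained by dropping the corresponding $u_i$; since $m$ is even, Lemma \ref{lem:rank44} tells us every such matrix has rank $3$. Hence $\rank(M_4)\ge 3$, and together with the first step, $\rank(M_4)=3$.

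The third step concludes. By Lemma \ref{lem:rank65}, for $M_4$ associated with a $4$-subset $\{u_1,\dots,u_4\}\in\binom{U_{q+1}}{4}$ and $m$ even, $\rank(M_4)=3$ is equivalent to $\sigma_{4,2}^2+\sigma_{4,1}\sigma_{4,3}=0$. Therefore $\sigma_{4,2}^2+\sigma_{4,1}\sigma_{4,3}=0$, which by the definition (\ref{eq:sp-B}) of $\cB_{f,q+1}$ means precisely that $\{u_1,\dots,u_4\}\in\cB_{\sigma_{4,2}^2+\sigma_{4,1}\sigma_{4,3},\,q+1}$, as claimed. As an alternative to the second and third steps one can work directly with the explicit determinant $\det(M_4)=\frac{\prod_{1\le i<j\le 4}(u_i+u_j)^2}{\sigma_{4,4}^5}\,(\sigma_{4,2}^2+\sigma_{4,1}\sigma_{4,3})^2$ recorded in the proof of Lemma \ref{lem:rank65}: since the $u_i$ are distinct and $\sigma_{4,4}=u_1u_2u_3u_4\ne 0$, we have $\det(M_4)=0$ if and only if $\sigma_{4,2}^2+\sigma_{4,1}\sigma_{4,3}=0$, and $\rank(M_4)<4$ forces $\det(M_4)=0$.

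I do not expect a genuine obstacle in this lemma: the argument is a short chain of invocations of Lemmas \ref{lem:sol-rank}, \ref{lem:rank44} and \ref{lem:rank65}. The only subtlety worth flagging is that Lemma \ref{lem:rank65} is stated as an equivalence with $\rank(M_4)=3$ rather than $\rank(M_4)\le 3$, so before applying it one must exclude $\rank(M_4)\le 2$; this is exactly the role of the column-deletion observation in the second step. Note also that the full strength of the hypothesis $(x_1,\dots,x_4)\in(\gf(q)^*)^4$, as opposed to the weaker requirement that the vector merely be nonzero, is not used here; presumably it is carried along for a subsequent application concerning the weight-$6$ codewords.
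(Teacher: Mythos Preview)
Your argument is correct and is essentially what the paper intends: the paper omits the proof here, referring to an analogous argument in \cite{TD20}, and your route via Lemmas \ref{lem:sol-rank}, \ref{lem:rank44} and \ref{lem:rank65} (or the direct determinant variant you mention) is exactly the natural way to fill it in from the surrounding results. One cosmetic remark: the subsequent application is to weight-$4$ codewords of $\left.\CE^{\perp}\right|_{\gf(q)}$ (cf.\ Theorems \ref{thm:dual-C-3-5} and \ref{thm:code-design-esp-odd}), not weight $6$; the lemma's label seems to be a relic.
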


\begin{proof}
The proof is similar to that in \cite[Lemma 34]{TD20} and thus omitted.

\end{proof}

The minimum-weight codewords in $\tr_{q^2/q}\left (\CE \right )$ are described in the following lemma.

\begin{lemma}\label{lem:trPoly}
Let $f(u)=\tr_{q^2/q} \left ( a u^5+ b u^3 \right )$ where  $(a, b) \in \gf(q^2)^2 \setminus  \{\mathbf 0\}$.
Define
$$\mathrm{zero}(f)=\left \{ u\in U_{q+1} : f(u)=0 \right \}.$$
Then $| \mathrm{zero}(f)  | \le 5$. Moreover, $| \mathrm{zero}(f)| = 5$
if and only if $a=\frac{\tau}{\sigma_{5,5}(u_1, \ldots, u_5)}$ and $b= \frac{\tau\sigma_{5,1}^2(u_1, \ldots, u_5)}{\sigma_{5,5}(u_1, \ldots, u_5)} $,
where $\{u_1, \cdots, u_5\} \in \cB_{\sigma_{5,2},q+1}$ and $\tau \in \gf(q)^*$.
In particular, the dimension of $\tr_{q^2/q}\left (\CE \right )$ equals $4$.
\end{lemma}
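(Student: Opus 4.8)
The plan is to translate the zero-set condition into a linear-algebra statement about the matrix $M_\ell$ of \eqref{eq:M}, then invoke the rank lemmas already established. Write $f(u) = \tr_{q^2/q}(au^5 + bu^3)$. Since for $u \in U_{q+1}$ one has $u^q = u^{-1}$, we get $f(u) = au^5 + a^q u^{-5} + bu^3 + b^q u^{-3}$, so the condition $f(u_1)=\cdots=f(u_\ell)=0$ for distinct $u_1,\dots,u_\ell \in U_{q+1}$ says exactly that the vector $(a^q, b^q, b, a)^T$ lies in the right kernel of $M_\ell^T$, equivalently that $(a^q,b^q,b,a)$ is a linear dependence among the columns of the transpose — i.e. the $4$ rows of $M_\ell$ satisfy a dependence with these coefficients. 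First I would observe that $(a,b)\ne(0,0)$ forces the coefficient vector $(a^q,b^q,b,a)$ to be nonzero, so if $\ell$ points are zeros of $f$ then $M_\ell$ has a nontrivial left-kernel vector, whence $\rank(M_\ell) \le 3$. Combined with Lemma \ref{lem:rank44} (which gives $\rank(M_3)=3$, hence $\rank(M_\ell)=\min(4,\ell)$ cannot drop below $3$ for $\ell\le 3$ and in fact $\rank(M_4)\le 3$ by Lemma \ref{lem:rank65} can happen), this already shows that $6$ distinct zeros are impossible: $\rank(M_6)\le 3 < 6$ would be needed, but any $3$ columns of $M_6$ already span a $3$-dimensional space by Lemma \ref{lem:rank44} applied to a $3$-subset, so $M_6$ (having a $4\times 3$ full-rank, wait — rank exactly $3$) can annihilate at most... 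Here I need to be a little more careful: the correct bound $|\mathrm{zero}(f)|\le 5$ follows because a nonzero vector in $\gf(q^{2})^4$ orthogonal to $\ell$ of the columns $\mathbf h_{u_i}=(u_i^{-5},u_i^{-3},u_i^{3},u_i^{5})^T$ forces those $\ell$ columns into a $3$-dimensional space, and I claim any $6$ of the vectors $\mathbf h_u$, $u\in U_{q+1}$, contain $4$ that are linearly independent — this is precisely the content of Lemmas \ref{lem:rank44} and \ref{lem:rank65}, since among $\{u_1,\dots,u_6\}$ either some $4$-subset has $\sigma_{4,2}^2+\sigma_{4,1}\sigma_{4,3}\ne 0$ (giving rank $4$), or all $4$-subsets vanish on that polynomial, which I would rule out by a counting/consistency argument using Lemma \ref{lem:u^2+au+b=0}.

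Second, for the equality case $|\mathrm{zero}(f)|=5$: suppose $u_1,\dots,u_5$ are distinct zeros. Then $(a^q,b^q,b,a)$ spans the (at most $1$-dimensional, and by Lemma \ref{lem:rank44} exactly $1$-dimensional) left kernel of $M_5$. I would compute this kernel explicitly. The condition is that $\det M_5[\,r\,]=0$ for each choice of deleted row; the adjugate/cofactor expansion gives the kernel vector up to scalar as the vector of signed $4\times 4$ minors $M_5[5], M_5[3], M_5[-3], M_5[-5]$ of the $4\times 5$ matrix $M_5$ — but $\rank M_5 = 3$, so I should instead take $\{u_1,u_2,u_3\}$, use that $M_5$'s column space is $3$-dimensional and spanned by $\mathbf h_{u_1},\mathbf h_{u_2},\mathbf h_{u_3}$, and extract the unique (up to scalar) relation. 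A cleaner route: the $5$ zeros of $f$ lie on $U_{q+1}$, and $f(u)u^5 = au^{10}+bu^8+b^qu^2+a^q$ (after clearing, using $u^{q+1}=1$) is a polynomial of degree $\le 10$ vanishing at $u_1,\dots,u_5$, so $u^{10}f(u)/$appropriate-normalization is divisible by $\prod_{i=1}^5(u-u_i)$; matching this against the palindromic structure (the polynomial $au^{10}+bu^8+b^qu^2+a^q$ is "$q$-self-reciprocal") forces $\{u_1,\dots,u_5\}$ to be closed under $u\mapsto$ nothing extra, and comparing coefficients of the degree-$10$ polynomial $a\prod_{i=1}^5(u-u_i)(u - u_i^{?})$ — actually $au^{10}+bu^8+0\cdot u^7+\cdots$: the vanishing of the $u^9,u^7,u^6,u^4,u^3,u^1$ coefficients is what pins down $a,b$ in terms of the $u_i$. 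I would equate: $u^{10}f(u) = a\,g(u)$ where $g(u)=\prod_{i=1}^5(u - u_i)\cdot(\text{degree-5 reciprocal partner})$; but since $\deg = 10$ and we know $5$ roots, write $u^{10}f(u)=a\prod_{i=1}^5(u-u_i)\cdot h(u)$ with $h$ monic of degree $5$, and then the reciprocity $u^{10}f(u) = u^{10}\overline{f}(1/u)\cdot(\text{const})$ forces $h(u) = \prod_{i=1}^5 u_i^{-1}\cdot u^5\overline{\left(\prod(1/u - u_i)\right)}$, i.e. $h$ is determined by the $u_i$ and their conjugates. Reading off the coefficient of $u^{10}$ gives $a = $ leading term relation, the coefficient of $u^9$ gives $\sigma_{5,1} + (\text{something}) = 0$, and crucially the vanishing of the $u^8$-coefficient beyond the $b$-term and the $u^7$-coefficient gives the relation $\sigma_{5,2}=0$, placing $\{u_1,\dots,u_5\}\in\cB_{\sigma_{5,2},q+1}$, and then back-substitution yields $a = \tau/\sigma_{5,5}$, $b = \tau\sigma_{5,1}^2/\sigma_{5,5}$ with $\tau\in\gf(q)^*$ (the constraint $a^{q}=$ the reciprocal coefficient forcing $\tau\in\gf(q)$). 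Conversely, for such $(a,b)$ one checks directly that $f$ vanishes on the prescribed $5$-set.

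Third, for the dimension statement: $\tr_{q^2/q}(\CE)$ is a subfield-type code; I would argue its dimension is at most $4$ because the map $(a_3,a_{q-2},a_5,a_{q-4}) \mapsto$ codeword followed by the trace, or rather the structure as $\tr_{q^2/q}$ of a $4$-dimensional $\gf(q^2)$-code spanned by $u^3,u^5$ and their $q$-conjugates $u^{q-2}=u^{-3},u^{q-4}=u^{-5}$, gives $\dim_{\gf(q)}\le 4$ trivially (it is a $\gf(q)$-span of at most $4$ vectors once we note $\tr_{q^2/q}(au^5+bu^3)$ ranges over a space of $\gf(q)$-dimension $\le 4$; indeed the functions $\tr(u^5),\tr(\xi u^5),\tr(u^3),\tr(\xi u^3)$ for a basis $\{1,\xi\}$ of $\gf(q^2)/\gf(q)$ span it). For the lower bound $\dim\ge 4$, it suffices to exhibit a weight-$5$ codeword, which the equality case just produced (provided $\cB_{\sigma_{5,2},q+1}\ne\emptyset$, which holds for $m$ even by \eqref{eq:ESP-5-2}), together with three more linearly independent codewords — or more simply, since a code of dimension $d$ has minimum weight $\le q+1-d+1$ wait, that's the wrong direction; instead I would note that $4$ distinct zeros are generically achievable (weight $q+1-4$ codewords exist) so $\dim\le 4$ is not yet forced to be equality that way — the honest argument for $\dim = 4$ is: the $\gf(q)$-linear map $\gf(q^2)^2\to \gf(q)^{q+1}$, $(a,b)\mapsto(f(u))_{u\in U_{q+1}}$ has image $\tr_{q^2/q}(\CE)$, and its kernel is $\{(a,b): au^5+b u^3 \in \ker\tr_{q^2/q}\ \forall u\}$; since $u^5,u^3$ take enough values and $\tr_{q^2/q}$ has a $1$-dimensional kernel-per-coordinate that cannot be hit simultaneously unless $a=b=0$ (here I'd use that $u^5$ and $u^3$ separately are nonconstant on $U_{q+1}$ and that $\gcd(5,q+1),\gcd(3,q+1)$ behave suitably for $m\ge 4$ even), the map is injective, so $\dim_{\gf(q)}\tr_{q^2/q}(\CE) = \dim_{\gf(q)}\gf(q^2)^2 = 4$.

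\medskip

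The main obstacle I expect is the equality-case computation: extracting from "$f$ has $5$ distinct zeros on $U_{q+1}$" the precise normalized formulas $a=\tau/\sigma_{5,5}$, $b=\tau\sigma_{5,1}^2/\sigma_{5,5}$ and, in particular, forcing $\sigma_{5,2}(u_1,\dots,u_5)=0$. The reciprocity trick (that $u^{10}f(u)$ is a $q$-self-reciprocal polynomial of degree $10$ with $5$ prescribed roots, hence its other $5$ roots and all its coefficients are determined) is the crux; turning the coefficient comparison into exactly the single relation $\sigma_{5,2}=0$ plus the scalar $\tau\in\gf(q)^*$ condition, cleanly and without sign/characteristic-$2$ slips, is where the real work lies. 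The bound $|\mathrm{zero}(f)|\le 5$, by contrast, should follow quickly from Lemmas \ref{lem:sol-rank}, \ref{lem:rank44}, \ref{lem:rank65}, and \ref{lem:u^2+au+b=0} once the kernel reformulation is in place, and the dimension claim is then essentially bookkeeping.
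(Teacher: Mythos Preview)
You are missing the single observation that makes this lemma short: in characteristic $2$, the polynomial $u^5 f(u) = au^{10}+bu^8+b^qu^2+a^q$ is a \emph{perfect square},
\[
u^5 f(u) \;=\; \bigl(\sqrt{a}\,u^5 + \sqrt{b}\,u^4 + \sqrt{b^q}\,u + \sqrt{a^q}\bigr)^2,
\]
since Frobenius is additive. This immediately gives $|\mathrm{zero}(f)|\le 5$ (the square-root polynomial has degree $5$), with no appeal to the rank lemmas at all, and it reduces the equality case to reading off Vieta's relations for a degree-$5$ polynomial: if the five roots are $u_1,\dots,u_5$ then the vanishing of the $u^3$ and $u^2$ coefficients forces $\sigma_{5,2}=\sigma_{5,3}=0$, the constant term gives $a^{q-1}=\sigma_{5,5}^2$ (whence $a=\tau/\sigma_{5,5}$ with $\tau\in\gf(q)^*$), and the $u^4$-coefficient gives $b=a\sigma_{5,1}^2$. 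Your ``$q$-self-reciprocal degree-$10$'' route is precisely this computation done the hard way: the ``other five roots'' of $u^5f(u)$ that you are trying to pin down via reciprocity \emph{coincide} with $u_1,\dots,u_5$, each with multiplicity $2$, which is why the coefficient comparison collapses to the degree-$5$ Vieta relations. Without spotting the square you are left, as you yourself say, with the real work still to do.

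Your matrix-rank approach to the bound $|\mathrm{zero}(f)|\le 5$ is also more fragile than you indicate. Lemma~\ref{lem:rank44} assumes $m$ even, whereas the present lemma carries no such hypothesis; the square-root argument gives the bound for all $m\ge 2$. And the step ``among six zeros some $4$-subset has $\rank M_4=4$'' is only sketched; it can be completed via Lemma~\ref{lem:u^2+au+b=0} (a fixed triple admits exactly two extensions to a rank-$3$ quadruple, so three distinct extensions $u_4,u_5,u_6$ are impossible), but this is far more work than the one-line degree bound. The dimension claim, finally, follows at once from $|\mathrm{zero}(f)|\le 5<q+1$ for $(a,b)\neq(0,0)$, so the $\gf(q)$-linear map $\gf(q^2)^2\to\gf(q)^{q+1}$ is injective; your discussion of this point is correct but more elaborate than needed.
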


\begin{proof}
When $u\in U_{q+1}$, one has
\begin{align}\label{eq:f-c-U}
f(u)=\frac{1}{u^5}\left ( \sqrt{a} u^5 + \sqrt{b} u^4 +\sqrt{b^q}u + \sqrt{a^q}  \right )^2.
\end{align}
Thus $| \mathrm{zero}(f) | \le 5$.

Assume that $| \mathrm{zero}(f) |= 5$. From  (\ref{eq:f-c-U}),
there exists $\{u_1, \cdots, u_5\} \in \binom{U_{q+1}}{5}$
such that $f(u)=\frac{a \prod_{i=1}^5 (u+u_i)^2}{u^5}$.
By Vieta's formula, $a\sigma_{5,1}^2=b$, $a\sigma_{5,2}^2=0$, $a\sigma_{5,3}^2=0$, $a\sigma_{5,4}^2=b^q$ and $a\sigma_{5,5}^2=a^q$.
One obtains $a= \frac{\tau}{\sigma_{5,5}}$ from $a^{q-1}= \sigma_{5,5}^2$, where $\tau \in \gf(q)^*$.
 Thus $b=\frac{\tau\sigma_{5,1}^2}{\sigma_{5,5}}$.

Conversely, assume that $a= \frac{\tau}{\sigma_{5,5}}$ and $b=\frac{\tau\sigma_{5,1}^2}{\sigma_{5,5}} $,
where $\{u_1, \cdots, u_5\} \in \cB_{\sigma_{5,2},q+1}$ and $\tau \in \gf(q)^*$. Then
$f(u)=\frac{a \prod_{i=1}^5(u+u_i)^2}{u^5}$. Consequently, $\mathrm{zero}(f)=\{u_1,  \cdots, u_5\}$ and $|\mathrm{zero}(f)|=5$.
This completes the proof.
\end{proof}

\begin{theorem}\label{thm:dual-C-3-5}
Let $q=2^m$ with $m\ge 4$ being an even integer. Then the subfield  subcode $\left . \CE^{\perp} \right  |_{\gf(q)}$
has parameters $[q+1, q-3, 4]_q$. 
\end{theorem}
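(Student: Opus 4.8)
The plan is to determine the three parameters of $\left . \CE^{\perp} \right|_{\gf(q)}$ separately: length is clear ($q+1$), so the work is in the dimension and the minimum distance.

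\textbf{Dimension.} By Delsarte's duality \eqref{eq:subfield-trace}, $\left(\tr_{q^2/q}\left(\CE\right)\right)^{\perp} = \left . \CE^{\perp} \right|_{\gf(q)}$, so $\dim_{\gf(q)} \left . \CE^{\perp} \right|_{\gf(q)} = (q+1) - \dim_{\gf(q)} \tr_{q^2/q}\left(\CE\right)$. Lemma \ref{lem:trPoly} already asserts that $\dim_{\gf(q)} \tr_{q^2/q}\left(\CE\right) = 4$ (the code is spanned by $u \mapsto \tr_{q^2/q}(au^5 + bu^3)$, and the generic fibre $\mathrm{zero}(f)$ has size $\le 5 < q+1$ for $m \ge 4$, so the four ``monomial'' functions $\tr_{q^2/q}(u^5), \tr_{q^2/q}(\omega u^5), \tr_{q^2/q}(u^3), \tr_{q^2/q}(\omega u^3)$ for a basis $\{1,\omega\}$ of $\gf(q^2)/\gf(q)$ are linearly independent over $\gf(q)$). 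Hence the dimension is $q+1-4 = q-3$.

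\textbf{Minimum distance.} A codeword of $\left . \CE^{\perp} \right|_{\gf(q)}$ of weight $w$ with support $\{u_1,\dots,u_w\} \subseteq U_{q+1}$ corresponds to a nonzero vector $(x_1,\dots,x_w) \in (\gf(q)^*)^w$ with $M_w(x_1,\dots,x_w)^T = 0$, where $M_w$ is the $4\times w$ matrix of \eqref{eq:M}. For the lower bound $d \ge 4$: if $w \le 3$ then $M_w$ has full column rank $w$ (for $w=3$ this is Lemma \ref{lem:rank44}; for $w \le 2$ any two columns $u_i^{\pm 3}, u_i^{\pm 5}$ are clearly independent since the $u_i$ are distinct nonzero), so by Lemma \ref{lem:sol-rank} no nonzero solution exists, hence there is no codeword of weight $1,2,3$. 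For the upper bound $d \le 4$: by Lemma \ref{lem:rank65}, $\rank(M_4) = 3 < 4$ exactly when $\sigma_{4,2}^2 + \sigma_{4,1}\sigma_{4,3} = 0$, and Lemma \ref{lem:u^2+au+b=0} guarantees that such $4$-subsets of $U_{q+1}$ exist (take any $\{u_1,u_2,u_3\} \in \binom{U_{q+1}}{3}$, which is nonempty since $q+1 \ge 5$; the attached $u_4$ gives $\sigma_{4,2}^2+\sigma_{4,1}\sigma_{4,3}=0$). For such a $4$-set, Lemma \ref{lem:sol-rank} yields a nonzero solution $(x_1,\dots,x_4) \in \gf(q)^4$; one must still check the solution has all coordinates nonzero, i.e. is supported on exactly $4$ positions — if some $x_i = 0$ we would get a weight-$\le 3$ codeword, contradicting $d \ge 4$ just proved. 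Therefore a weight-$4$ codeword exists and $d = 4$.

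\textbf{Main obstacle.} The routine-but-careful point is the full-rank claim $\rank(M_3) = 3$ and the nonvanishing $\sigma_{3,1}^2+\sigma_{3,2} \ne 0$ underpinning it, together with the determinant formula $\det(M_4) = \frac{\prod_{1\le i<j\le 4}(u_i+u_j)^2}{\sigma_{4,4}^5}(\sigma_{4,2}^2+\sigma_{4,1}\sigma_{4,3})^2$; but these are exactly Lemmas \ref{lem:rank44}, \ref{lem:sigma3not=0}, and \ref{lem:rank65}, which are already available. So the genuine remaining subtlety is purely bookkeeping: confirming that the weight-$4$ solution produced via Lemma \ref{lem:sol-rank} cannot collapse to a smaller support, which follows immediately from the established bound $d \ge 4$. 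I expect no serious difficulty beyond assembling these pieces in the right order.
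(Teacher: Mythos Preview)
Your proof is correct and follows essentially the same route as the paper: Delsarte duality plus Lemma~\ref{lem:trPoly} for the dimension, Lemma~\ref{lem:rank44} for $d\ge 4$, and Lemma~\ref{lem:u^2+au+b=0} to produce a $4$-subset with $\sigma_{4,2}^2+\sigma_{4,1}\sigma_{4,3}=0$ and hence a weight-$4$ codeword. Your explicit ``no collapse'' argument (a putative zero coordinate would force a weight~$\le 3$ codeword, already excluded) is a clean way to finish; the paper's terse citation of Lemmas~\ref{lem:u^2+au+b=0} and~\ref{lem:wt6-sp} amounts to the same reasoning.
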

\begin{proof}
Recall that (\ref{eq:subfield-trace}) says that
\begin{eqnarray*}
\left . \CE^{\perp}\right |_{\gf(q)}=\left (  \tr_{q^2/q} (\CE) \right )^{\perp}.
\end{eqnarray*}
Thus $\left . \CE^{\perp}\right |_{\gf(q)}$ has dimension $q-3$ by Lemma \ref{lem:trPoly}.

Since $m$ is even, we have $d\ge 4$ by Lemma \ref{lem:rank44}.
Applying Lemmas \ref{lem:u^2+au+b=0} and \ref{lem:wt6-sp}, we assert that
there must exist a codeword of weight $4$. Consequently, $d=4$. 
\end{proof}

As Theorem \ref{thm:dual-C-3-5} showed, 
the subfield  subcode $\left . \CE^{\perp} \right  |_{\gf(q)}$ almost meets the Griesmer bound.

\begin{theorem}\label{thm:C-3-5}
Let $q=2^m$ with $m\ge 4$ being even.
Then the trace code $\tr_{q^2/q}\left (\CE \right )$  has parameters $[q+1, 4, q-4]_q$. 
\end{theorem}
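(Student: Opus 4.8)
\textbf{Proof proposal for Theorem \ref{thm:C-3-5}.}
The plan is to exploit the Delsarte duality already invoked in the excerpt, namely that $\tr_{q^2/q}(\CE)$ is the dual of the subfield subcode $\left.\CE^{\perp}\right|_{\gf(q)}$, together with the structural information about $\CE$ itself. The dimension is essentially free: by Lemma \ref{lem:trPoly} the dimension of $\tr_{q^2/q}(\CE)$ equals $4$, so only the minimum distance $d=q-4$ requires work. I would split this into the two inequalities $d \ge q-4$ and $d \le q-4$.

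For the upper bound $d \le q-4$, I would produce an explicit codeword of weight $q-4$. Every codeword of $\tr_{q^2/q}(\CE)$ has the form $\left(\tr_{q^2/q}(a_3 u^3 + a_5 u^5)\right)_{u\in U_{q+1}}$ for some $(a_3,a_5)\in\gf(q^2)^2$ (using that $u^{q}=u^{-1}$ on $U_{q+1}$, so the $u^{q-2}$ and $u^{q-4}$ terms are the $q$-power conjugates of the $u^3$ and $u^5$ terms). By Lemma \ref{lem:trPoly}, there is a choice of $(a_3,a_5)$, coming from a $5$-subset $\{u_1,\dots,u_5\}\in\cB_{\sigma_{5,2},q+1}$, for which the zero set of the associated function $f$ has size exactly $5$; such a $5$-subset exists because $m$ is even and $|\cB_{\sigma_{5,2},q+1}| = \frac{1}{10}\binom{q+1}{3} > 0$ by \eqref{eq:ESP-5-2}. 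The corresponding codeword is nonzero (the pair $(a_3,a_5)$ is nonzero) and vanishes at exactly $5$ coordinates, hence has weight $(q+1)-5 = q-4$.

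For the lower bound $d \ge q-4$, equivalently every nonzero codeword has at most $5$ zeros, I would again use the representation $f(u)=\tr_{q^2/q}(a_5 u^5 + a_3 u^3)$ and the identity \eqref{eq:f-c-U} from the proof of Lemma \ref{lem:trPoly}: on $U_{q+1}$,
\[
f(u) = \frac{1}{u^5}\left(\sqrt{a_5}\,u^5 + \sqrt{a_3}\,u^4 + \sqrt{a_3^{\,q}}\,u + \sqrt{a_5^{\,q}}\right)^2,
\]
so the zeros of $f$ in $U_{q+1}$ are the roots of a polynomial of degree at most $5$, whence $|\mathrm{zero}(f)| \le 5$ for every nonzero $(a_3,a_5)$. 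This is exactly the content of the first assertion of Lemma \ref{lem:trPoly}, so $d \ge (q+1) - 5 = q-4$. Combining the two bounds gives $d = q-4$, and together with $\dim = 4$ this establishes the parameters $[q+1,4,q-4]_q$.

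The only subtle point — and the step I expect to be the main obstacle — is making sure the weight-$(q-4)$ codeword produced above is a genuine nonzero codeword of $\tr_{q^2/q}(\CE)$ rather than of some overcode, i.e. that the polynomial identification is exact and the support count is not accidentally smaller (it cannot be larger). This is handled by the converse direction of Lemma \ref{lem:trPoly}, which guarantees $\mathrm{zero}(f) = \{u_1,\dots,u_5\}$ with equality, so there is nothing genuinely new to prove; the argument is essentially a repackaging of Lemma \ref{lem:trPoly} plus the nonvanishing of $\cB_{\sigma_{5,2},q+1}$ for even $m$.
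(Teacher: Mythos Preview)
Your proposal is correct and follows essentially the same route as the paper: the dimension comes from Lemma \ref{lem:trPoly}, the lower bound $d\ge q-4$ from the ``at most five zeros'' part of that lemma, and the upper bound $d\le q-4$ from the existence of a $5$-subset in $\cB_{\sigma_{5,2},q+1}$ via \eqref{eq:ESP-5-2}. The opening mention of Delsarte duality is not actually used here (it is needed for Theorem \ref{thm:dual-C-3-5}, not this one), but the substance of your argument matches the paper's proof exactly, just unpacked in more detail.
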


\begin{proof}
Note that any codeword of $\tr_{q^2/q} \left ( \CE \right )$  can be written as
 \[ \mathbf{c}(a_3,a_5)= \tr_{q^2/q}(a_3 u^3+a_5 u^5). \]
Then the dimension of $\tr_{q^2/q}\left (\CE \right )$ is equal to $4$  by Lemma \ref{lem:trPoly}.
The desired conclusion on the minimum weight for even $m$ then follows from (\ref{eq:ESP-5-2}) and Lemma \ref{lem:trPoly}. 
This completes the proof.
\end{proof}

The invariance of the set of the supports of all the codewords of any fixed weight in $\tr_{q^2/q} \left ( \CE \right )$ 
under the action of $\PGL_2(\gf(2^m))$ 
is established by the following theorem.

\begin{theorem}\label{thm:trace-supp-inv}
Let $q=2^m$ with $m\ge 2$. Let $k$ be an integer with $1 \le k \le q+1$ and
$A_k \left (\tr_{q^2/q} \left ( \CE \right )  \right )  >0$.
Then $\mathcal B_k \left (\tr_{q^2/q} \left ( \CE \right )  \right )$ is invariant under the action of 
$\mathrm{Stab}_{U_{q+1}}$.
In particular, the incidence structure $\left ( U_{q+1},  \mathcal B_k \left (\tr_{q^2/q} \left ( \CE \right )  \right )\right )$
is a $3$-design when $k>3$.
\end{theorem}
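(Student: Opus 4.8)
The plan is to show that every generator of $\mathrm{Stab}_{U_{q+1}}$ maps the support of a weight-$k$ codeword of $\tr_{q^2/q}(\CE)$ to the support of another weight-$k$ codeword. By Corollary \ref{cor:generators-three-types}, it suffices to treat the three types of linear fractional transformations listed there. First I would fix a codeword $\bc = \left(\tr_{q^2/q}(a_3 u^3 + a_5 u^5 + a_{q-2}u^{q-2} + a_{q-4}u^{q-4})\right)_{u\in U_{q+1}}$; since the exponents $q-2 \equiv -3$ and $q-4 \equiv -5$ modulo $q+1$, and since the trace map is $\gf(q)$-linear and commutes with the Frobenius $u \mapsto u^q$, the entries can be rewritten purely in terms of $u^{\pm 3}, u^{\pm 5}$, which is precisely the structure exploited via the rows of $M_\ell$ in (\ref{eq:M}). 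For type (I), $u \mapsto u_0 u$ with $u_0 \in U_{q+1}$: applying this permutation to the coordinates sends $\bc$ to the evaluation of $\tr_{q^2/q}(a_3 u_0^3 u^3 + a_5 u_0^5 u^5 + \cdots)$, which is again a codeword (absorb $u_0^i$ into $a_i$), and it has the same weight; its support is the image of $\mathrm{Supp}(\bc)$ under this map — this is just the cyclicity of the code. For type (II), $u \mapsto u^{-1}$: this sends $u^3 \mapsto u^{-3} = u^{q-2}$ etc., so the coordinate-permuted word is the evaluation of $\tr_{q^2/q}(a_3 u^{q-2} + a_5 u^{q-4} + a_{q-2}u^3 + a_{q-4}u^5)$, again a codeword in $\CE$, with the same weight.

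The substantive case is type (III), $u \mapsto \frac{u + c^q}{cu + 1}$ with $c \in \gf(q^2)^* \setminus U_{q+1}$. Here I would argue as in the proof of Theorem \ref{thm:code-PGL-4}: work in the lifted code $\gf(q^2)\otimes \CE$ (equivalently $\CE$ itself over $\gf(q^2)$), whose cyclicity-defining set is $\{3,5,q-2,q-4\}$, and which is invariant under $\mathrm{Stab}_{U_{q+1}}$ by the definition of lifting together with Proposition \ref{prop:three-types}. The key point is that composing a power function $u^e$ with the transformation $\frac{u+c^q}{cu+1}$ and expanding via Lemmas \ref{lem:frac-poly} and \ref{lem:a1-neq-0} stays supported on the exponent set — more precisely, I would show that if $E = \{3,5,q-2,q-4\}$ is $\mathrm{Stab}_{U_{q+1}}$-invariant as a cyclicity-defining set, then the coordinate-permuted image of any codeword of $\CE$ is again in $\CE$. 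Actually the cleanest route is more direct: since $\CE$ over $\gf(q^2)$ is already invariant under $\mathrm{Stab}_{U_{q+1}}$ (one checks this the same way the repetition code and its dual were checked in Theorem \ref{thm:code-PGL-4}, using that $E$ is a union of $q$-cyclotomic and inversion-stable classes plus the type-(III) computation), the trace code $\tr_{q^2/q}(\CE) = \left.\CE\right|_{\gf(q)}$ inherits this invariance: a permutation of coordinates commutes with the entrywise trace map, so it carries $\left.\CE\right|_{\gf(q)} = \CE \cap \gf(q)^{q+1}$ into itself. Hence $\mathcal B_k(\tr_{q^2/q}(\CE))$ is $\mathrm{Stab}_{U_{q+1}}$-invariant for every $k$.

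The main obstacle is verifying that $\CE$ over $\gf(q^2)$ is itself invariant under the type-(III) transformations; this is where one must reprise the interpolation machinery of Section \ref{sec:codes-PGL}. Concretely, for $\bc \in \CE$ one writes the coordinate-permuted word as a function $g(u) = \bc\!\left(\frac{u+c^q}{cu+1}\right)$ on $U_{q+1}$, uses Lemma \ref{lem:frac-poly} to substitute, and must check that the univariate polynomial expansion of $g$ is supported on $\{0\} \cup \{3,5,q-2,q-4\}$ — but in fact one needs it supported on $\{3,5,q-2,q-4\}$ with no constant term, and here Lemma \ref{lem:a1-neq-0}'s vanishing of the $u^0$-coefficient ($a_0 = 0$) and the closure of the exponent set under the relevant substitutions does the job; the four exponents $\{3,5,q-2,q-4\}$ together with $0$ must be shown stable, which is a finite computation with the $\sigma_{3,\ell}$-type identities of Lemma \ref{lem:sigmasigma}. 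Once invariance of $\mathcal B_k$ is in hand, the final sentence is immediate: $\mathrm{Stab}_{U_{q+1}}$ acts $3$-transitively on $U_{q+1}$ by Proposition \ref{prop:Stab-PGL}, so by the standard orbit argument recalled in Section 2 (see \cite[Proposition 4.6]{BJL}), the pair $(U_{q+1}, \mathcal B_k)$ is a $3$-$(q+1,k,\lambda)$ design whenever $k > 3$ and $A_k(\tr_{q^2/q}(\CE)) > 0$.
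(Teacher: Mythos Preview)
Your plan for types (I) and (II) is fine and matches the paper. The gap is in type (III), and it is not a technical nuisance but a genuine error of strategy: the code $\CE$ over $\gf(q^2)$ is \emph{not} invariant under $\mathrm{Stab}_{U_{q+1}}$. Indeed, Theorem \ref{thm:code-PGL-4} (which you yourself cite) says precisely that the only $\PGL_2(\gf(2^m))$-invariant linear codes of length $q+1$ over $\gf(2^h)$ are the four trivial ones; since $\CE$ has dimension $4$, it cannot be among them. Equivalently, the polynomial expansion of $\left(\frac{u+c^q}{cu+1}\right)^3$ or $\left(\frac{u+c^q}{cu+1}\right)^5$ over $U_{q+1}$ is \emph{not} supported on $\{3,5,q-2,q-4\}$: by Lemma \ref{lem:frac-poly} one has $\frac{u+c^q}{cu+1}=\sum_{i=1}^{q} c^{i-1}u^i$, and cubing or fifth-powering this produces essentially all exponents. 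So the ``closure of the exponent set'' you hope for simply fails, and the proposed route via invariance of $\CE$ (hence of its trace code) collapses.

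What the paper actually does for type (III) is a direct computation with the concrete trace expression $f(u)=\tr_{q^2/q}(a_3u^3+a_5u^5)$. Writing $A=cu+1$ and using $u+c^q=uA^q$ together with $u^{q+1}=1$, one shows
\[
f\!\left(\frac{u+c^q}{cu+1}\right)=\frac{1}{A^{5}A^{5q}}\,\tr_{q^2/q}\!\left(a_3'u^3+a_5'u^5\right)
\]
for explicit $a_3',a_5'\in\gf(q^2)$. The crucial point is that the prefactor $A^{5}A^{5q}=(cu+1)^{5(q+1)}$ is a nowhere-zero $\gf(q)$-valued function on $U_{q+1}$ (since $c\notin U_{q+1}$), so the permuted word is a coordinatewise nonzero scalar multiple of a genuine codeword of $\tr_{q^2/q}(\CE)$. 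Hence the \emph{supports} agree even though the permuted word is not itself in the code. This ``twist by a nonvanishing diagonal'' is the missing idea in your proposal; once you see it, the $3$-design conclusion follows from $3$-transitivity exactly as you say.
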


\begin{proof}
We only need to show that if $\mathbf{c} \in \tr_{q^2/q} \left ( \CE \right )$
and $\pi $ is a linear fractional transformation listed in Corollary \ref{cor:generators-three-types},
then there exists a codeword $\mathbf{c'} \in \tr_{q^2/q} \left ( \CE \right )$
such that $\mathrm{Supp}(\pi (\mathbf{c}))=\mathrm{Supp}(\mathbf{c'})$.
Denote by $\mathbf{c}(a_3,a_5)$ the codeword $\left ( \tr_{q^2/q} (a_3 u^3+ a_5 u^5) \right )_{u \in U_{q+1}}$ of $\tr_{q^2/q} \left ( \CE \right )$,
where $a_3, a_5 \in \gf(q^2)$.  We investigate the following three cases for $\pi$.

If $\pi$ is the transformation given by $u \mapsto u_0 u$, where $u_0 \in U_{q+1}$, then
it is clear that $\pi ( \mathbf{c}(a_3,a_5) )=  \mathbf{c}(a_3 u_0^3,a_5 u_0^5) $.
Thus $\mathrm{Supp} \left ( \pi ( \mathbf{c}(a_3,a_5) ) \right )=  \mathrm{Supp} \left ( \mathbf{c}(a_3 u_0^3,a_5 u_0^5)  \right )$.

If $\pi$ is the transformation given by $u \mapsto  u^{-1}$,  then
it is obvious that $\pi ( \mathbf{c}(a_3,a_5) )=  \mathbf{c}(a_3 ,a_5 ) $.
Thus $\mathrm{Supp} \left ( \pi ( \mathbf{c}(a_3,a_5) ) \right )=  \mathrm{Supp} \left ( \mathbf{c}(a_3 ,a_5 )  \right )$.

Let $\pi$ be the translation given by $u \mapsto \frac{u+c^q}{cu+1}$ where $c \in \gf(q^2)^* \setminus U_{q+1}$.
Write $f(u)= \tr_{q^2/q} (a_3 u^3+ a_5 u^5)$ and $A=cu+1$. Then $u+c^q=uA^q.$
A standard computation gives
\begin{eqnarray}\label{eq:anction-f}
\begin{array}{rl}
\lefteqn{ f\left (\frac{u+c^q}{cu+1} \right ) } \\
&=\tr_{q^2/q}\left ( a_3 \left (\frac{u+c^q}{cu+1} \right )^3 + a_5 \left (\frac{u+c^q}{cu+1} \right )^5 \right )\\
&= \tr_{q^2/q}\left ( \frac{ a_3 (u+c^q)^3(cu+1)^2 + a_5 (u+c^q)^5}{(cu+1)^5} \right )\\
&= \tr_{q^2/q}\left ( \frac{ a_3 A^{3q}A^2u^3 + a_5 A^{5q} u^5}{A^5} \right )\\
&= \frac{ a_3 A^{3q}A^2u^3 + a_5 A^{5q} u^5}{A^5} + \frac{ a_3^q A^{3}A^{2q}u^{3q} + a_5^q A^{5} u^{5q}}{A^{5q}}\\
&= \frac{a_3 A^{8q}A^2u^3 + a_5 A^{10q} u^5 +  a_3^q A^{8}A^{2q}u^{3q} + a_5^q A^{10} u^{5q}}{A^5 A^{5q}}\\
&= \frac{a_3 A^{8q}A^2u^3 + a_5 A^{10q} u^5 +  \left ( a_3 A^{8q}A^2u^3 + a_5 A^{10q} u^5 \right )^q}{A^5 A^{5q}}\\
&= \frac{1}{A^5 A^{5q}} \tr_{q^2/q} \left ( a_3 A^{8q}A^2u^3 + a_5 A^{10q} u^5 \right ).
\end{array}
\end{eqnarray}
Expanding $a_3 A^{8q}A^2u^3$ yields
\begin{equation}\label{eq:expanding-2}
\begin{array}{rl}
\lefteqn{ a_3 A^{8q}A^2u^3} \\
&= a_3 (c^{8q} u^{8q}+1)(c^2u^2+1)u^3\\
&= a_3 u^3 (c^{8q+2} u^{8q+2}+ c^{8q} u^{8q}+ c^2u^2+1 )\\
&= a_3 (u^3+ c^{8q+2} u^{-3} + c^2u^{5}  + c^{8q} u^{-5}).
\end{array}
\end{equation}
Expanding $a_5 A^{10q} u^5$ yields
\begin{equation}\label{eq:expanding-1}
\begin{array}{rl}
\lefteqn{ a_5 A^{10q} u^5} \\
&= a_5 (c^{10q} u^{10q}+1)u^5\\
&= a_5 ( u^{5}  + c^{10q} u^{-5}).
\end{array}
\end{equation}
Combining (\ref{eq:expanding-2}) and (\ref{eq:expanding-1}) gives
\begin{eqnarray}\label{eq:tr-tr}
\begin{array}{rl}
\lefteqn{ \tr_{q^2/q} \left ( a_3 A^{8q}A^2u^3 + a_5 A^{10q} u^5 \right )} \\
&= \tr_{q^2/q} \left ( \left ( a_3+a_3^q c^{8+2q}\right ) u^3 + \left(a_5  + a_5^q c^{10}+ a_3c^2+ a_3^q c_8\right )u^5 \right ).
\end{array}
\end{eqnarray}
Plugging (\ref{eq:tr-tr}) into (\ref{eq:anction-f}) yields
\[f\left (\frac{u+c^q}{cu+1} \right )= \frac{1}{A^5 A^{5q}} \tr_{q^2/q} \left ( a_3' u^3 + a_5' u^5 \right ),\]
where $a_3'=a_3+a_3^q c^{8+2q}$ and $a_5'=a_5  + a_5^q c^{10}+ a_3c^2+ a_3^q c_8$. This clearly forces
$\mathrm{Supp} \left ( \pi (\mathbf{c}(a_3,a_5) ) \right )= \mathrm{Supp} \left ( \mathbf{c}(a_3',a_5')  \right )$.
The desired conclusion then follows.

\end{proof}

The proof of Theorem \ref{thm:trace-supp-inv} gives more, namely
\begin{eqnarray}\label{eq:functions-modular}
\begin{array}{rl}
 \lefteqn{ \tr_{q^2/q} \left ( a_3 \left ( \frac{u+c^q}{cu +1} \right )^3 +a_5 \left ( \frac{u+c^q}{cu +1} \right )^5 \right ) } \\
 &= \frac{1}{(cu+1)^5 (cu+1)^{5q}} \tr_{q^2/q} \left ( a_3' u^3 + a_5' u^5 \right ),
 \end{array}
\end{eqnarray}
 where $a_3, a_5 \in \gf(q^2)$, $c \in \gf(q^2) \setminus U_{q+1}$,
 $a_3'=a_3+a_3^q c^{8+2q}$ and $a_5'=a_5  + a_5^q c^{10}+ a_3c^2+ a_3^q c_8$.

The following theorem shows the invariance of the set of the supports of 
all the codewords of any fixed weight in $ \left . \CE^{\perp}\right |_{\gf(q)}$ under the action of $\PGL_2(\gf(2^m))$. 

\begin{theorem}\label{thm:sub-supp-inv}
Let $q=2^m$ with $m\ge 2$. Let $k$ be any integer with $1 \le k \le q+1$ and
$A_k \left (  \left . \CE^{\perp}\right |_{\gf(q)}\right )  >0$.
Then $\mathcal B_k \left ( \left . \CE^{\perp}\right |_{\gf(q)} \right )$ is invariant under the action of $\mathrm{Stab}_{U_{q+1}}$.
In particular, the incidence structure $\left ( U_{q+1},  \left . \CE^{\perp}\right |_{\gf(q)} \right )$
is a $3$-design when $k>3$.
\end{theorem}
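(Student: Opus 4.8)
The plan is to run the argument for Theorem~\ref{thm:trace-supp-inv} on the dual side. Write $D=\tr_{q^2/q}(\CE)$, so that Delsarte's identity $(\ref{eq:subfield-trace})$ gives $\left.\CE^{\perp}\right|_{\gf(q)}=D^{\perp}$. By Corollary~\ref{cor:generators-three-types} it suffices to show, for each of the three generator types $\pi$ of $\mathrm{Stab}_{U_{q+1}}$ and each weight-$k$ codeword $\mathbf{c}\in D^{\perp}$, that $\mathrm{Supp}(\pi(\mathbf{c}))$ equals the support of some weight-$k$ codeword of $D^{\perp}$; passing from the generators to the whole group is purely formal, since a coordinate permutation preserves Hamming weights and permutes supports bijectively, so the set of $\pi\in\mathrm{Stab}_{U_{q+1}}$ fixing $\mathcal B_k(D^{\perp})$ is a subgroup (closed under composition and, by finiteness, under inverses).

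The engine I would isolate is the claim that every generator $\pi$ satisfies $\pi(D)=\mathbf d_\pi\cdot D$ for some vector $\mathbf d_\pi\in(\gf(q)^{*})^{q+1}$. For the generators $u\mapsto u_0u$ and $u\mapsto u^{-1}$ this holds with $\mathbf d_\pi=\mathbf 1$: $\CE$ is cyclic, hence so are $D$ and $D^{\perp}$, which handles the first type, and the cyclicity-defining set $\{3,5,q-2,q-4\}$ is closed under negation modulo $q+1$, so $D$ is also invariant under the coordinate involution $u\mapsto u^{-1}$. For a generator $\pi\colon u\mapsto\frac{u+c^{q}}{cu+1}$ of type (III), identity $(\ref{eq:functions-modular})$, extracted in the proof of Theorem~\ref{thm:trace-supp-inv}, says exactly that $\pi(\mathbf{c}(a_3,a_5))=\mathbf d_\pi\cdot\mathbf{c}(a_3',a_5')$ with $\mathbf d_\pi=\bigl((cu+1)^{-5}(cu+1)^{-5q}\bigr)_{u\in U_{q+1}}$. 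Since $(cu+1)(cu+1)^{q}=\Norm_{q^2/q}(cu+1)\in\gf(q)$ and $cu+1\ne 0$ for all $u\in U_{q+1}$ (because $u^{-1}\in U_{q+1}$ while $c\notin U_{q+1}$), we get $\mathbf d_\pi=\bigl(\Norm_{q^2/q}(cu+1)^{-5}\bigr)_{u\in U_{q+1}}\in(\gf(q)^{*})^{q+1}$. This yields $\pi(D)\subseteq\mathbf d_\pi\cdot D$, and because a coordinate permutation and coordinatewise multiplication by a vector with no zero entry are both $\gf(q)$-linear and dimension-preserving, $\dim_{\gf(q)}\pi(D)=\dim_{\gf(q)}D=\dim_{\gf(q)}(\mathbf d_\pi\cdot D)$ forces the inclusion to be an equality, $\pi(D)=\mathbf d_\pi\cdot D$.

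Granting this, the rest is immediate. A coordinate permutation commutes with taking duals, so $\pi(D^{\perp})=(\pi(D))^{\perp}=(\mathbf d_\pi\cdot D)^{\perp}=\mathbf d_\pi^{-1}\cdot D^{\perp}$ by $(\ref{eq:a-code-dual})$. Thus for $\mathbf{c}\in D^{\perp}$ of weight $k$ one writes $\pi(\mathbf{c})=\mathbf d_\pi^{-1}\cdot\mathbf{c}''$ with $\mathbf{c}''\in D^{\perp}=\left.\CE^{\perp}\right|_{\gf(q)}$, and since $\mathbf d_\pi^{-1}$ has all coordinates nonzero, $\mathrm{Supp}(\pi(\mathbf{c}))=\mathrm{Supp}(\mathbf{c}'')$ while $\wt(\mathbf{c}'')=\wt(\pi(\mathbf{c}))=k$. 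Hence $\mathcal B_k(\left.\CE^{\perp}\right|_{\gf(q)})$ is invariant under $\mathrm{Stab}_{U_{q+1}}$; since this group acts $3$-transitively on $U_{q+1}$ by Proposition~\ref{prop:Stab-PGL} and $\mathcal B_k$ is nonempty whenever $A_k>0$, the pair $\bigl(U_{q+1},\mathcal B_k(\left.\CE^{\perp}\right|_{\gf(q)})\bigr)$ is a $3$-design for $k>3$ by the general principle recalled in Section~2.1. I expect the only genuinely delicate ingredient to be the computational identity $(\ref{eq:functions-modular})$ itself, which is already in hand from Theorem~\ref{thm:trace-supp-inv}, together with the short verification that the ``modular factor'' $(cu+1)^{5}(cu+1)^{5q}$ is a nonzero element of $\gf(q)$; the dimension count and the duality bookkeeping are then routine.
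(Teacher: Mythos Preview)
Your proof is correct and follows essentially the same route as the paper: reduce to $D^{\perp}$ via Delsarte, use the modular identity \eqref{eq:functions-modular} to obtain $\pi(D)=\mathbf d_\pi\cdot D$ for each generator, and then pass to the dual using \eqref{eq:a-code-dual}. You are somewhat more explicit than the paper in verifying that the modular factor $(cu+1)^{5(q+1)}=\Norm_{q^2/q}(cu+1)^5$ lies in $\gf(q)^{*}$ and in justifying equality in $\pi(D)=\mathbf d_\pi\cdot D$ by a dimension count, but these are refinements of the same argument rather than a different approach.
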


\begin{proof}
Recall that  by (\ref{eq:subfield-trace}) we have
\begin{eqnarray*}
\left . \CE^{\perp}\right |_{\gf(q)}=\left (  \tr_{q^2/q} (\CE) \right )^{\perp}.
\end{eqnarray*}
Let $\mathbf{w}$ be any codeword of $ \left . \CE^{\perp}\right |_{\gf(q)}= \left (  \tr_{q^2/q} (\CE) \right )^{\perp}$
and $\pi$ be any linear fractional translations listed in Corollary \ref{cor:generators-three-types}.
It is easily seen that if $\pi$ is a transformation given by $u \mapsto u_0 u$ or $u \mapsto 1/u$, where $u_0 \in U_{q+1}$, then
\begin{eqnarray}\label{eq:dual-1-2}
\pi (\mathbf{w} ) \in \left . \CE^{\perp}\right |_{\gf(q)}.
\end{eqnarray}
Assume $\pi$ is a translation given by $u \mapsto  \frac{u+c^q}{cu+1}$ where $c \in \gf(q^2)^* \setminus U_{q+1}$.
It is obvious that  $\pi (\mathbf{w}) \in \left ( \pi\left ( \tr_{q^2/q} (\CE)  \right )  \right )^{\perp}$.
From (\ref{eq:functions-modular}) we conclude that 
\[ \pi\left ( \tr_{q^2/q} (\CE)  \right ) =\left  (\frac{1}{(cu+1)^{5q+5}} \right )_{u \in U_{q+1}} \cdot  \tr_{q^2/q} (\CE).\]
By (\ref{eq:a-code-dual}) we have that 
$$ \left ( \pi\left ( \tr_{q^2/q} (\CE)  \right )  \right )^{\perp}=
\left  ((cu+1)^{5q+5} \right )_{u \in U_{q+1}} \cdot  \left ( \tr_{q^2/q} (\CE) \right )^{\perp}.
$$ 
Consequently,
\begin{eqnarray}\label{eq:dual-3}
\pi (\mathbf{w}) \in \left  ((cu+1)^{5q+5} \right )_{u \in U_{q+1}} \cdot  \left ( \tr_{q^2/q} (\CE) \right )^{\perp}.
\end{eqnarray}
Combining (\ref{eq:dual-1-2}) and (\ref{eq:dual-3}) with Corollary \ref{cor:generators-three-types}
we can assert that the set of all the supports of $ \left . \CE^{\perp}\right |_{\gf(q)}$
stays invariant under $\mathrm{Stab}_{U_{q+1}}$.
This completes the proof. 
\end{proof}

The remainder of this section is devoted to determining the parameters of certain $3$-designs held 
in the subfield subcodes $\left . \CE^{\perp}\right |_{\gf(q)}$ and the trace codes
$  \tr_{q^2/q} (\CE)  $.

\begin{theorem}\label{thm:10-5}
Let $q=2^m$ with $m \geq 4$ even. Then the incidence structure 
$$\left ( U_{q+1}, \cB_{q-4} \left (\tr_{q^2/q} \left ( \CE \right )   \right )  \right )$$
is a $3$-$(q+1, q-4, \lambda)$ design with 
$$ 
\lambda=\frac{(q-4)(q-5)(q-6)}{60}, 
$$ 
and its complementary incidence structure  is a $3$-$(q+1,5,1)$ design.
\end{theorem}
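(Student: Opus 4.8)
The plan is to identify the complementary structure first, since $3$-$(q+1,5,1)$ designs are rigid enough to pin down the parameters, and then transfer back. By Theorem~\ref{thm:C-3-5} the code $\tr_{q^2/q}(\CE)$ has minimum weight $q-4$, so a codeword of weight $q-4$ has a zero-set of size exactly $5$; by Lemma~\ref{lem:trPoly} (applied with the roles of weight and zeros complementary) these zero-sets are precisely the sets $\{u_1,\dots,u_5\}\in\cB_{\sigma_{5,2},q+1}$, i.e.\ the $5$-subsets of $U_{q+1}$ on which the second elementary symmetric polynomial vanishes. Hence
\[
\cB_{q-4}\!\left(\tr_{q^2/q}(\CE)\right)=\left\{\,U_{q+1}\setminus Z : Z\in\cB_{\sigma_{5,2},q+1}\,\right\},
\]
so the complementary incidence structure of $(U_{q+1},\cB_{q-4}(\tr_{q^2/q}(\CE)))$ is exactly $(U_{q+1},\cB_{\sigma_{5,2},q+1})$, a set of $5$-subsets. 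First I would invoke Theorem~\ref{thm:trace-supp-inv}: since $q-4>3$ for $m\ge 4$, the set $\cB_{q-4}(\tr_{q^2/q}(\CE))$ is invariant under $\mathrm{Stab}_{U_{q+1}}\cong\PGL_2(\gf(2^m))$, which acts $3$-transitively on $U_{q+1}$; taking complements, $\cB_{\sigma_{5,2},q+1}$ is also invariant, so $(U_{q+1},\cB_{\sigma_{5,2},q+1})$ is a $3$-$(q+1,5,\mu)$ design for the appropriate $\mu$.

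Next I would compute the two parameters by counting. The number of blocks of the complementary design is $b=|\cB_{\sigma_{5,2},q+1}|=\tfrac{1}{10}\binom{q+1}{3}$ by~(\ref{eq:ESP-5-2}) (using that $m$ is even). Then the general identity $b=\binom{v}{t}\mu/\binom{k}{t}$ with $v=q+1$, $t=3$, $k=5$ gives
\[
\mu=\frac{b\binom{5}{3}}{\binom{q+1}{3}}=\frac{\tfrac{1}{10}\binom{q+1}{3}\cdot 10}{\binom{q+1}{3}}=1,
\]
so the complement is a $3$-$(q+1,5,1)$ design, as claimed. Finally, to get $\lambda$ for the original design, I would use the standard complementation formula: if $(X,\cB)$ is a $t$-$(v,k,\lambda)$ design then $(X,\{X\setminus B:B\in\cB\})$ is a $t$-$(v,v-k,\lambda')$ design with
\[
\lambda'=\lambda\,\frac{\binom{v-k}{t}}{\binom{k}{t}}\Big/ \text{(the corresponding ratio)}
\]
— more cleanly, both designs have the same number of blocks $b$, so from $b=\binom{v}{t}\lambda/\binom{v-k}{t}$ with $v-k=q-4$, $t=3$ and $b=\tfrac1{10}\binom{q+1}{3}$ one solves
\[
\lambda=\frac{b\binom{q-4}{3}}{\binom{q+1}{3}}=\frac{1}{10}\binom{q-4}{3}=\frac{(q-4)(q-5)(q-6)}{60}.
\]
This matches the stated $\lambda$.

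The routine parts are the two binomial manipulations; the only place needing care is justifying that $\cB_{q-4}(\tr_{q^2/q}(\CE))$ really is the full complement of $\cB_{\sigma_{5,2},q+1}$ rather than a proper subset — i.e.\ that \emph{every} $Z\in\cB_{\sigma_{5,2},q+1}$ arises as the zero-set of a weight-$(q-4)$ codeword, and that distinct such $Z$ give distinct supports. The first point is the forward direction of the ``if and only if'' in Lemma~\ref{lem:trPoly}: given $\{u_1,\dots,u_5\}$ with $\sigma_{5,2}=0$, the explicit choice $a=\tau/\sigma_{5,5}$, $b=\tau\sigma_{5,1}^2/\sigma_{5,5}$ (any $\tau\in\gf(q)^*$) produces $f$ with $\mathrm{zero}(f)=\{u_1,\dots,u_5\}$, whence $\wt=q-4$ and the support is $U_{q+1}\setminus\{u_1,\dots,u_5\}$; the map $Z\mapsto U_{q+1}\setminus Z$ is a bijection, so injectivity is automatic. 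With these two facts in hand the identification of the block sets is exact, and the counting above is then forced. I expect the main obstacle to be purely bookkeeping: making sure the "complementary design" convention is applied with $t=3$ consistently and that the value of $|\cB_{\sigma_{5,2},q+1}|$ from~(\ref{eq:ESP-5-2}) is plugged in with the even-$m$ branch, since the odd-$m$ branch would give an empty (and hence degenerate) structure.
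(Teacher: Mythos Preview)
Your proof is correct and follows essentially the same approach as the paper: identify the complement of $\cB_{q-4}(\tr_{q^2/q}(\CE))$ with $\cB_{\sigma_{5,2},q+1}$ via Lemma~\ref{lem:trPoly}, use the $\PGL_2$-invariance from Theorem~\ref{thm:trace-supp-inv} to get a $3$-design, and then read off $\lambda$ by complementation. The only cosmetic difference is that the paper cites \cite[Theorem~5]{TD20} directly for the fact that $(U_{q+1},\cB_{\sigma_{5,2},q+1})$ is a $3$-$(q+1,5,1)$ design, whereas you recover $\mu=1$ yourself from the block count~(\ref{eq:ESP-5-2}); your extra care in checking that the correspondence between weight-$(q-4)$ supports and elements of $\cB_{\sigma_{5,2},q+1}$ is a genuine bijection is exactly what the paper packs into its appeal to the ``if and only if'' in Lemma~\ref{lem:trPoly}.
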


\begin{proof} 
By Theorems \ref{thm:trace-supp-inv} and \ref{thm:C-3-5},  $\left ( U_{q+1}, \cB_{q-4} \left (\tr_{q^2/q} \left ( \CE \right )   \right )  \right )$ is a $3$-$(q+1, q-4, \lambda)$ design. To determine the value of $\lambda$, we consider its complementary design.  
Lemma \ref{lem:trPoly} shows that 
the complementary  incidence structure of   $$\left ( U_{q+1}, \cB_{q-4} \left (\tr_{q^2/q} \left ( \CE \right )   \right )  \right )$$ is isomorphic to
the  incidence structure of $(U_{q+1}, \cB_{\sigma_{5,2}, q+1})$, which is a $3$-$(2^m+1,5,1)$ design by \cite[Theorem 5]{TD20}.
It the follows that 
$$ 
\lambda = \frac{\binom{q+1-3}{5}}{\binom{q+1-3}{5-3}}=
\frac{(q-4)(q-5)(q-6)}{60}. 
$$
This completes the proof.
\end{proof}
The complementary design $\bar{\bD}$
of the 3-design $\bD$ from Theorem \ref{thm:10-5}
has parameters 3-$(2^m +1,5,1)$. 
These parameters correspond to a spherical geometry design \cite[Volume I, page 193]{BJL}.
If a linear code $\C$ supports a $t$-design $\bD$, it is in general an open
question  how to construct a linear code $\C'$  that supports the complementary design of 
$\bD$. 
We note that an isomorphic version of the complementary $3$-$(2^m+1,5,1)$ design 
$\bar{\bD}$
of the design $\bD$ from Theorem \ref{thm:10-5}   is supported by a linear code 
described in \cite{TD20}. 
According to Magma experiments, $\bar{\bD}$ is isomorphic to a spherical 
geometry design with the same parameters when $m \in \{4, 6\}$.
 The following theorem asserts that the $3$-$(2^m+1,5,1)$ design $\bar{\bD}$ is isomorphic to the spherical geometry 
 design\footnote{See \cite[Volume I, 6.9 and 6.10, page 193]{BJL} for a short description of the spherical geometry designs found by Witt \cite{W}.}
  found by Witt  \cite{W} in general. 

\begin{theorem}
\label{t26}
Let $q=2^m$ with $m \geq 4$ even. Then the complementary design $\bar{\bD}$
of the $3$-design $\bD$ from Theorem \ref{thm:10-5} is isomorphic to the 
Witt spherical geometry design \cite{W} with parameters $3$-$(2^m+1,5,1)$.
\end{theorem}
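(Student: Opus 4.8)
The plan is to identify the design $\bar{\bD}$ with a known model of the Witt spherical geometry $3$-$(2^m+1,5,1)$ design and then invoke the uniqueness of this design. By Theorem~\ref{thm:10-5} and Lemma~\ref{lem:trPoly}, the complementary design $\bar{\bD}$ is isomorphic to $(U_{q+1}, \cB_{\sigma_{5,2},q+1})$, i.e. the blocks are precisely the $5$-subsets $\{u_1,\dots,u_5\}$ of $U_{q+1}$ with $\sigma_{5,2}(u_1,\dots,u_5)=0$. The key observation is that this condition is invariant under $\mathrm{Stab}_{U_{q+1}}$ (a consequence of Theorem~\ref{thm:sub-supp-inv}, or checked directly using Corollary~\ref{cor:generators-three-types}), and by Proposition~\ref{prop:Stab-PGL} the action of $\mathrm{Stab}_{U_{q+1}}$ on $U_{q+1}$ is equivalent to the action of $\PGL_2(\gf(2^m))$ on $\PG(1,2^m)$. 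So $\bar{\bD}$ is a $3$-$(q+1,5,1)$ design admitting $\PGL_2(\gf(2^m))$ as a group of automorphisms.

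First I would recall that the Witt spherical geometry $3$-$(q+1,5,1)$ design $W$ (see \cite[Volume I, 6.9--6.10, page 193]{BJL}) is, by construction, the orbit of a suitable $5$-subset of $\PG(1,q)$ under $\PGL_2(\gf(q))$ --- indeed, the classical description realizes the blocks as the images of a fixed conic-related $5$-set, equivalently the $5$-subsets on which a certain $\PGL_2$-invariant condition vanishes. Since $\PGL_2(\gf(2^m))$ acts $3$-transitively on $\PG(1,q)$ and a $3$-$(q+1,5,1)$ design has each $3$-subset in exactly one block, a single block orbit under a $3$-transitive group already fills out all $\binom{q+1}{3}/\binom{5}{3}$ blocks; hence such a design is completely determined by the $\PGL_2$-orbit of any one of its blocks. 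Therefore it suffices to exhibit one block of $\bar{\bD}$ that lies in the block set of the Witt design, or equivalently to show that the stabilizer of a block in $\bar{\bD}$ has the same order (namely $|{\rm Stab}_B| = (q+1)q(q-1)/b = 60$, forcing the block stabilizer to be $A_5$ by the subgroup classification in Table~\ref{tab:subgroups}) and that both designs arise as the unique $\PGL_2$-invariant such design.

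Concretely, the main step I would carry out is: (i) verify that $\bar{\bD}$ has block stabilizer isomorphic to $A_5$, using $|\bar{\bD}|$ blocks $= \binom{q+1}{3}/\binom{5}{3} = (q+1)q(q-1)/60$ from the design parameters and the orbit-stabilizer formula $|{\rm Stab}_B| = |\PGL_2(\gf(2^m))|/|\Orb_B| = 60$, then observe via Table~\ref{tab:subgroups} that the only subgroup of $\PGL_2(\gf(2^m))$ of order $60$ (when $2\mid m$) is $A_5$; (ii) recall that the Witt spherical geometry design is likewise the $\PGL_2(\gf(2^m))$-orbit of a $5$-set stabilized by an $A_5 < \PGL_2(\gf(2^m))$; (iii) observe that all subgroups isomorphic to $A_5$ in $\PGL_2(\gf(2^m))$ form a single conjugacy class (again Table~\ref{tab:subgroups}), so the $A_5$-orbit yielding a block of $\bar{\bD}$ is $\PGL_2$-conjugate to the one yielding a block of $W$; hence the two block sets are related by an element of $\PGL_2(\gf(2^m)) \le \Sym(U_{q+1})$, giving the desired isomorphism. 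The main obstacle I anticipate is step (ii)--(iii): pinning down precisely that the Witt spherical geometry design is the $A_5$-orbit design and that the relevant $A_5$ is unique up to conjugacy. The conjugacy uniqueness is immediate from Table~\ref{tab:subgroups} (one conjugacy class of $A_5$), but matching the classical Witt construction to ``the $A_5$-orbit of a block'' requires either citing the known fact that a $3$-$(q+1,5,1)$ design with a $3$-transitive automorphism group containing $\PGL_2$ is unique, or explicitly checking that the fixed $5$-set defining a block of $\bar{\bD}$ (a solution set of $\sigma_{5,2}=0$) is stabilized by a copy of $A_5$ and invoking that the Witt design is characterized among $3$-$(q+1,5,1)$ designs by admitting $\PGL_2(\gf(q))$. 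Either route completes the proof; I would prefer the uniqueness route, stated as: a $3$-$(q+1,5,1)$ design on $\PG(1,q)$ invariant under $\PGL_2(\gf(q))$ is unique up to isomorphism, since its block set is forced to be a single $\PGL_2$-orbit and all candidate block-orbits (orbits of $5$-sets with full block-stabilizer $A_5$, unique up to conjugacy) are equivalent.
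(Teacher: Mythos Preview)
Your overall strategy matches the paper's: transport $\bar{\bD}$ to $\PG(1,2^m)$ via Proposition~\ref{prop:Stab-PGL}, observe it is $\PGL_2(\gf(2^m))$-invariant, and then argue that a $\PGL_2$-invariant $3$-$(q+1,5,1)$ design is uniquely determined as the orbit $\Orb_{\PG(1,4)}$, which is the Witt design by definition. Your $\lambda=1$ observation (any block orbit under a $3$-transitive group already covers every $3$-subset, hence equals the whole block set) is in fact cleaner than the paper's route: the paper instead cites Huber to get $|\Stab_B|\in\{1,4,60\}$ for every $5$-subset $B$, notes there is exactly one orbit with stabilizer of order $60$, and then uses $|\cB|=|\PGL_2|/60$ together with the possible orbit sizes to force $\cB$ to be that single orbit.

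There is, however, a genuine gap in your step~(iii). Conjugacy of all $A_5$ subgroups in $\PGL_2(\gf(2^m))$ does \emph{not} by itself imply that all $5$-subsets with stabilizer isomorphic to $A_5$ lie in a single $\PGL_2$-orbit: after conjugating so that two such $5$-sets share the same stabilizer $H\cong A_5$, you still need that $H$ fixes (setwise) only one $5$-subset of $\PG(1,2^m)$. The paper closes this by citing Huber: the $5$-subsets with $|\Stab_B|=60$ form exactly one $\PGL_2$-orbit, namely $\Orb_{\PG(1,4)}$. You can also prove it directly: realize $H$ as $\PGL_2(\gf(4))$; a size-$5$ $H$-orbit would correspond to a point $p$ with $\Stab_H(p)\cong A_4$, hence $\Stab_H(p)=\Stab_H(v)$ for some $v\in\PG(1,4)$; but the Klein four-subgroup of this $A_4$ consists of involutions of $\PGL_2(\gf(2^m))$, each with a unique fixed point on $\PG(1,2^m)$, and their common fixed point is $v$, forcing $p=v\in\PG(1,4)$. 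Either way, the inference ``$A_5$ unique up to conjugacy $\Rightarrow$ unique orbit of $5$-sets with stabilizer $A_5$'' is not valid as stated and must be supplemented.
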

\begin{proof}
By definition, $\bar{\bD}=\left ( U_{q+1}, \bar{\cB} \right )$,
where $\bar{\cB}= \left \{ B\in \binom{U_{q+1}}{5}: \left (  U_{q+1} \setminus B \right )  \in  \cB_{q-4} \left (\tr_{q^2/q} \left ( \CE \right )   \right ) \right \}$.
Theorem \ref{thm:trace-supp-inv} now implies that $\bar{\cB}$ is invariant under  $\mathrm{Stab}_{U_{q+1}}$. 
Applying Proposition \ref{prop:Stab-PGL} we conclude that $\bar{\bD}$ is isomorphic to 
a $3$-$(2^m+1,5,1)$ design $\left ( \PG(1,2^m)  , \cB \right )$ with $\cB$ being invariant under $\PGL_2(\gf(2^m))$.
Hence we can write $\cB$ as $\cB=\dot\cup_{i=1}^{\ell} \mathrm{Orb}_{B_i}$, where $B_i\in \binom{\PG(1,2^m)}{5}$
and $\mathrm{Orb}_{B_i}$ is the $\PGL_2(\gf(2^m))$-orbit of $B_i$. 
Let $\mathrm{Stab}_{B_i}$ denote 
the stabilizer of $B_i$ under the action of $\PGL_2(\gf(2^m))$ on $\binom{\PG(1,2^m)}{5}$. Let $B$ be any $5$-subset of $\PG(1,2^m)$.
Let us recall that $\left | \mathrm{Stab}_{B} \right | \in \{1,4,60\}$ and all $5$-subsets $B$ of $\PG(1,2^m)$ with $\left | \mathrm{Stab}_{B} \right |=60$
form exactly one $\PGL_2(\gf(2^m))$-orbit
 (see {Huber05}). 
 A trivial verification shows that the $5$-subset $\PG(1,4)=\gf(4) \cup  \left \{ \infty  \right \} $ of $\PG(1,2^m)$  is stabilized by
$\PGL_2(\gf(4))$. As the cardinality of the group $\PGL_2(\gf(4))$ is $60$ we have $\left | \mathrm{Orb}_{\PG(1,4)} \right |
=\left | \PGL_2(\gf(2^m)) \right | /60$ and $\left \{ B\in \binom{\PG(1,2^m)}{5} : \left | \mathrm{Stab}_{B} \right |=60 \right \}=\mathrm{Orb}_{\PG(1,4)}$.
Let us observe that $\left | \mathrm{Orb}_{B_i} \right |=\left | \PGL_2(\gf(2^m)) \right | /60, \left | \PGL_2(\gf(2^m)) \right | /4, \mbox{or} 
\left | \PGL_2(\gf(2^m)) \right |$ and 
there is exactly one orbit $\mathrm{Orb}_{\PG(1,4)}$ with size $=\left | \PGL_2(\gf(2^m)) \right | /60$.
It follows that $\ell =1$ and $\cB=\mathrm{Orb}_{\PG(1,4)}$ from $\left | \cB \right |= \left | \PGL_2(\gf(2^m)) \right | /60$.
The desired conclusion then follows from the definition of spherical geometry designs (see for instance \cite[Volume I, page 193]{BJL}).

\end{proof}

\begin{theorem}\label{thm:code-design-esp-odd}
Let $q=2^m$ with $m \geq 4$ being even. Then, the incidence structure $$\left (U_{q+1}, \cB_4\left ( \left . \CE^{\perp}\right |_{\gf(q)}  \right )  \right )$$
supported by
 the minimum-weight codewords in $ \left . \CE^{\perp}\right |_{\gf(q)} $ is a $3$-$(q+1,4,2)$ design.
\end{theorem}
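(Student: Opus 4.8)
The plan is to lean on what has already been established in this section. By Theorem~\ref{thm:dual-C-3-5} the code $\CE^{\perp}|_{\gf(q)}$ has parameters $[q+1,q-3,4]_q$, so it possesses codewords of weight $4$ and $\cB_4(\CE^{\perp}|_{\gf(q)})\neq\emptyset$; by Theorem~\ref{thm:sub-supp-inv} together with Proposition~\ref{prop:Stab-PGL}, this set of blocks is invariant under a group acting $3$-transitively on $U_{q+1}$, hence $\bigl(U_{q+1},\cB_4(\CE^{\perp}|_{\gf(q)})\bigr)$ is already known to be a $3$-$(q+1,4,\lambda)$ design for some $\lambda\ge1$. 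Everything therefore reduces to showing $\lambda=2$, which I will do by describing the blocks explicitly and then counting the blocks through a fixed triple.

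First I would identify the blocks: a $4$-subset $B=\{u_1,u_2,u_3,u_4\}$ of $U_{q+1}$ lies in $\cB_4(\CE^{\perp}|_{\gf(q)})$ if and only if $\sigma_{4,2}^2+\sigma_{4,1}\sigma_{4,3}=0$. Indeed, $B$ is the support of a weight-$4$ codeword of $\CE^{\perp}|_{\gf(q)}$ exactly when the system $M_4(x_1,\dots,x_4)^T=0$, with $M_4$ the matrix in (\ref{eq:M}) whose columns are indexed by the elements of $B$, has a solution $(x_1,\dots,x_4)\in(\gf(q)^*)^4$. If it does, Lemma~\ref{lem:wt6-sp} gives $B\in\cB_{\sigma_{4,2}^2+\sigma_{4,1}\sigma_{4,3},\,q+1}$. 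Conversely, if $\sigma_{4,2}^2+\sigma_{4,1}\sigma_{4,3}=0$, then $\rank(M_4)=3$ by Lemma~\ref{lem:rank65}, so Lemma~\ref{lem:sol-rank} yields a nonzero $x\in\gf(q)^4$ with $M_4 x^T=0$; extending $x$ by zeros produces a nonzero codeword of $\CE^{\perp}|_{\gf(q)}$ of weight at most $4$, hence exactly $4$ since the minimum distance is $4$, so in fact $x\in(\gf(q)^*)^4$ and $B\in\cB_4(\CE^{\perp}|_{\gf(q)})$.

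Next I would count. Fix a triple $T=\{u_1,u_2,u_3\}\in\binom{U_{q+1}}{3}$ with elementary symmetric polynomials $\sigma_{3,1},\sigma_{3,2},\sigma_{3,3}$. Regarding a fourth point $u\in U_{q+1}$ as a variable and using $\sigma_{4,1}=\sigma_{3,1}+u$, $\sigma_{4,2}=\sigma_{3,2}+u\sigma_{3,1}$, $\sigma_{4,3}=\sigma_{3,3}+u\sigma_{3,2}$, a direct expansion gives
\[
\sigma_{4,2}^2+\sigma_{4,1}\sigma_{4,3}=au^2+bu+c,\qquad a=\sigma_{3,1}^2+\sigma_{3,2},\quad b=\sigma_{3,1}\sigma_{3,2}+\sigma_{3,3},\quad c=\sigma_{3,2}^2+\sigma_{3,1}\sigma_{3,3}.
\]
Since $a\neq0$ by Lemma~\ref{lem:sigma3not=0}, this is a genuine quadratic with at most two roots in $U_{q+1}$, and by Lemma~\ref{lem:u^2+au+b=0} it has exactly two, say $u_4$ and $u_5$, with $\{u_1,\dots,u_5\}$ of cardinality $5$; in particular $u_4,u_5\notin T$. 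By the block description above, $T\cup\{v\}\in\cB_4(\CE^{\perp}|_{\gf(q)})$ precisely when $v\in\{u_4,u_5\}$. Hence every triple lies in exactly two blocks, so $\lambda=2$ and $\bigl(U_{q+1},\cB_4(\CE^{\perp}|_{\gf(q)})\bigr)$ is a $3$-$(q+1,4,2)$ design.

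I expect the main obstacle to be the block-identification step, specifically the converse direction, where one must check that the one-dimensional kernel of $M_4$ (in the case $\rank(M_4)=3$) actually produces a codeword of full weight $4$ rather than one of smaller support; the genuinely delicate fact that \emph{both} roots of $au^2+bu+c$ lie in $U_{q+1}$ is already supplied by Lemma~\ref{lem:u^2+au+b=0}, so everything else is routine linear algebra and symmetric-function bookkeeping.
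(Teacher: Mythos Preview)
Your proposal is correct and follows essentially the same approach as the paper: both first invoke Theorems~\ref{thm:dual-C-3-5} and~\ref{thm:sub-supp-inv} to get a $3$-$(q+1,4,\lambda)$ design, and then use Lemmas~\ref{lem:wt6-sp} and~\ref{lem:u^2+au+b=0} to pin down $\lambda=2$. The paper's own proof is a one-sentence sketch (``combining Lemmas~\ref{lem:wt6-sp} and~\ref{lem:u^2+au+b=0} yields directly\ldots''); you have simply unpacked that sentence, supplying the block identification $\cB_4(\CE^{\perp}|_{\gf(q)})=\cB_{\sigma_{4,2}^2+\sigma_{4,1}\sigma_{4,3},\,q+1}$ via Lemmas~\ref{lem:rank65} and~\ref{lem:sol-rank} together with the minimum-distance argument, and the explicit quadratic in the fourth point---all of which are the intended ingredients.
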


\begin{proof} 
By Theorems \ref{thm:sub-supp-inv} and \ref{thm:dual-C-3-5},   $\left (U_{q+1}, \cB_4\left ( \left . \CE^{\perp}\right |_{\gf(q)}  \right )  \right )$ is a $3$-$(q+1, 4, \lambda)$ design. It remains to determine the value of $\lambda$. But combining Lemmas \ref{lem:wt6-sp} and \ref{lem:u^2+au+b=0} yields directly that it is a  $3$-$(q+1, 4, 2)$ design. 
\end{proof}

It would be interesting to determine parameters for more $3$-designs held in $\tr_{q^2/q} \left ( \CE \right ) $ and $ \left . \CE^{\perp}\right |_{\gf(q)}$. 
To the best knowledge of the authors, Theorem \ref{thm:code-design-esp-odd} documents the first infinite family of 
linear codes supporting an infinite family of $3$-$(v, 4, 2)$ designs. According to \cite[Table 4.37, page 83]{CRC},  a class
of $3$-$(q+1,4,2)$  designs with $q\equiv 1 \pmod 3$ were found by
Hughes \cite{Hu}. We checked with Magma
\cite{magma}
 that in the cases $m=4$ and $m=6$,
the 3-$(17,4,2)$ design from Theorem \ref{thm:code-design-esp-odd} 
 and the design with these parameters
found in \cite{Hu} are isomorphic. In case the two $3$-$(q+1,4,2)$ designs are isomorphic 
for every even $m \geq 4$, 
the contribution of Theorem \ref{thm:code-design-esp-odd} will be a coding-theoretic 
construction of the $3$-$(q+1,4,2)$ designs.

\begin{example} 
Let $q=2^4$. Then $\tr_{q^2/q}(\C_{\{3,5\}})$ has parameters $[17,4,12]_{16}$ and weight enumerator 
$$ 
1 + 1020z^{12} + 24480z^{15} + 15555z^{16} + 24480z^{17},  
$$ 
and $$\left ( U_{q+1}, \cB_{q-4} \left (\tr_{q^2/q} \left ( \CE \right )   \right )  \right )$$
is a $3$-$(17, 12, 22)$ design. 

The code  $ \left . \CE^{\perp}\right |_{\gf(q)} $ has parameters $[17,13,4]_{16}$ and weight enumerator 
\begin{eqnarray*}
1 + 5100z^4 + 42840z^5 + 2244000z^6 +  50669520z^7 +  949969350z^8 + \\ 
14262976200z^9 + 171117027840z^{10} + 1633451574240z^{11} + \\
12250821846060z^{12} + 70677865367400z^{13} + 302905113919200z^{14} + \\ 
 908715349415760z^{15} + 1703841278658465z^{16} + 1503389363654520z^{17}, 
\end{eqnarray*} 
and $\left (U_{q+1}, \cB_4\left ( \left . \CE^{\perp}\right |_{\gf(q)}  \right )  \right )$ is a $3$-$(17, 4,2)$ design. 
\end{example} 

\section{On the $q$-dimension of $3$-$(q+1,q-4,(q-4)(q-5)(q-6)/60)$
and $3$-$(q+1,4,2)$ designs}
\label{Sec6}

In this section, we discuss the $q$-dimension of the 3-designs documented in Section 
\ref{sec:supports-PGL}. 
Recall the $q$-dimension of $t$-designs introduced in \cite{Tdim} and the introduction
of this paper.
An obvious upper bound on the $q$-dimension is the dimension of the supporting code.
We will use the following lemma to derive a lower bound.

\begin{lemma}
\label{fx}
Let $f(x)=x^3 -60x^2 -61x -60$. Then $f(x)>0$ for every $x\ge 62$, and $f(x)<0$ for 
$0 \le x\le 61$.
\end{lemma}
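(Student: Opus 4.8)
The plan is to treat the two ranges by separate elementary estimates, with no appeal to calculus. For $0 \le x \le 61$ I would use the inequality $x^3 \le 61 x^2$, which holds throughout this interval since $0 \le x \le 61$, to obtain
\[
f(x) = x^3 - 60 x^2 - 61 x - 60 \le 61 x^2 - 60 x^2 - 61 x - 60 = x(x - 61) - 60 .
\]
Because $x \ge 0$ and $x - 61 \le 0$ on this range, the product $x(x-61)$ is $\le 0$, so $f(x) \le -60 < 0$. This settles the lower range in one line; the bound is in fact attained at $x = 0$ and $x = 61$, where $f = -60$, which is a useful sanity check.

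For $x \ge 62$ I would instead write $f(x) = x^2(x - 60) - 61 x - 60$ and use $x - 60 \ge 2$, which gives
\[
f(x) \ge 2 x^2 - 61 x - 60 = x(2x - 61) - 60 .
\]
On $x \ge 62$ both factors $x$ and $2x - 61$ are positive and increasing, so $x(2x - 61) \ge 62 \cdot 63 = 3906$, and hence $f(x) \ge 3846 > 0$. Again no differentiation is required.

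There is no genuine obstacle here; the argument is routine. The only point needing a little care is the choice of intermediate inequalities so that the estimates stay tight right across the crossover between $61$ and $62$: replacing $x^3$ by $61 x^2$ in the lower range, and bounding $x - 60$ below by $2$ in the upper range, are exactly what make these crude-looking bounds cover every real number in the respective ranges with no gap. As endpoint checks one may compute directly $f(61) = 61^2(61 - 60 - 1) - 60 = -60$ and $f(62) = 2 \cdot 62^2 - 61 \cdot 62 - 60 = 3846$.
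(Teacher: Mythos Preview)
Your proof is correct and takes a genuinely different route from the paper's. The paper argues via calculus: it computes $f'(x)=3x^2-120x-61$, locates the critical points $20\pm\sqrt{3783}/3\approx 20\pm 20.5$, concludes that $f$ is decreasing on roughly $(0,40.5)$ and increasing thereafter, and then finishes by evaluating $f(0)=f(61)=-60$ and $f(62)=3846$. Your argument instead avoids derivatives entirely by means of two well-chosen algebraic bounds, $x^3\le 61x^2$ on $[0,61]$ and $x-60\ge 2$ on $[62,\infty)$, each of which collapses the cubic to a quadratic expression whose sign is transparent. What your approach buys is that it is completely elementary and self-contained, with no need to approximate irrational critical points or worry about the small gap between the rounded intervals $(0,40)$ and $(41,\infty)$ that the paper's presentation glosses over; what the calculus approach buys is a more systematic template that would generalise to other cubics without having to spot the right ad hoc inequality. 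Both reach the same endpoint values $f(61)=-60$ and $f(62)=3846$, which is a reassuring cross-check.
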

\begin{proof}
The derivative $f'(x)=3x^2 -120x-61$ has roots $20\pm \sqrt{3783}/3$.
It follows that $f(x)$ is decreasing on the interval $(20-\sqrt{3783}/3, 20+\sqrt{3783}/3)$,
and increasing on the interval $(20+\sqrt{3783}/3, \infty)$. Note that 
$\sqrt{3783}/3 \approx 20.5$. Hence, $f(x)$ is decreasing on the interval $(0, 40)$,
and increasing on the interval $(41, \infty)$.  Since
\[ f(0) = f(61)=-60, \ f(62)=3846, \]
the lemma follows.
\end{proof}

\begin{theorem}
\label{bound}
Suppose that $\bD$ is a 3-$(q+1,q-4,(q-4)(q-5)(q-6)/60)$ design, where
$q$ is a prime power. If $q>63$ then
\[ \dim_{q}\bD \ge 4, \]
where $\dim_{q}\bD$ is the dimension of $\bD$ over the finite field $\gf(q)$ of order $q$.
\end{theorem}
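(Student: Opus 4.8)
The plan is to argue by contradiction, combining the standard formula for the number of blocks with the elementary remark that proportional codewords have the same support. Suppose $\dim_{q}\bD\le 3$. By the definition of the $q$-dimension recalled in the introduction, there is a generalized $\gf(q)$-incidence matrix $M$ of $\bD$ with $\rank_{q}M=\dim_{q}\bD\le 3$. Let $\C$ be the linear code over $\gf(q)$ of length $q+1$ spanned by the rows of $M$, so that $\dim\C=\rank_{q}M\le 3$. By construction, the row of $M$ indexed by a block $B$ is a codeword of $\C$ whose support is exactly $B$, hence a codeword of Hamming weight $q-4$; since $\bD$ is simple, distinct blocks give distinct supports, so $\C$ contains at least $b$ codewords of weight $q-4$ with pairwise distinct supports, where $b$ is the number of blocks of $\bD$.

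First I would compute $b$. From $b=\binom{q+1}{3}\lambda/\binom{q-4}{3}$ with $\lambda=(q-4)(q-5)(q-6)/60$ and $\binom{q-4}{3}=(q-4)(q-5)(q-6)/6$ one gets $b=\binom{q+1}{3}/10=(q+1)q(q-1)/60=(q^{3}-q)/60$. Next I would bound how many distinct supports $\C$ can possibly offer. Two nonzero codewords that are scalar multiples of each other have the same support, so the number of distinct supports of nonzero codewords of $\C$ is at most the number of one-dimensional subspaces of $\C$, namely $(q^{\dim\C}-1)/(q-1)\le(q^{3}-1)/(q-1)=q^{2}+q+1$. Combining this with the previous step yields $b\le q^{2}+q+1$, i.e. $(q^{3}-q)/60\le q^{2}+q+1$, which rearranges to $q^{3}-60q^{2}-61q-60\le 0$, that is $f(q)\le 0$ for the polynomial $f$ of Lemma \ref{fx}. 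But Lemma \ref{fx} gives $f(x)>0$ for every $x\ge 62$, hence for every prime power $q>63$. This contradiction forces $\dim_{q}\bD\ge 4$.

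I expect no genuine obstacle here: the argument is nothing more than the value of $b$, the triviality that scalar multiples share a support, and Lemma \ref{fx}, which was set up precisely so that the resulting inequality $b>q^{2}+q+1$ holds in the stated range. The one point deserving a little care is to invoke the exact definition of $\dim_{q}\bD$ as the minimum rank over all generalized $\gf(q)$-incidence matrices, so that a minimizing matrix really does exhibit $b$ weight-$(q-4)$ codewords having the blocks as supports; once that is in place the estimate is pure bookkeeping, and it shows in passing that the $3$-design supported by the minimum-weight codewords of the $4$-dimensional code $\tr_{q^2/q}(\CE)$ of Theorem \ref{thm:C-3-5} meets this lower bound with equality.
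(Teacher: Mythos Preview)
Your proof is correct and is essentially the same argument as the paper's. The only cosmetic difference is that the paper counts codewords directly---each of the $b$ distinct supports contributes at least $q-1$ nonzero codewords, forcing $b(q-1)>q^{3}-1$---whereas you phrase the identical count in terms of one-dimensional subspaces; both routes reduce to the same inequality $q^{3}-60q^{2}-61q-60>0$ and the same appeal to Lemma~\ref{fx}.
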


\begin{proof} The number of blocks of $\bD$ is
\begin{equation}
\label{bb}
 b=\frac{(q+1)q(q-1)}{60} =\frac{q^3 -q}{60}. 
 \end{equation}

If $\C$ is a linear code over $\gf(q)$ of length $q+1$, such that every block of $\bD$ is
the support of a codeword of weight $q-4$, $\C$ must contain at least $b(q-1)$
codewords of weight $q-4$.  It is sufficient to show that
\begin{equation}
\label{bq}
b(q-1) > q^3 -1, 
\end{equation}
which would imply that $|\C| \ge q^4$, hence, the dimension of $\C$ is greater that or equal to 4.
Substituting $b$ in (\ref{bq}) by the right-hand side of eq. (\ref{bb}) implies that
the inequality (\ref{bq}) is equivalent to
\begin{equation}
\label{le}
  q^3 -60q^2 -61q -60 > 0. 
 \end{equation}
  Since $q>63$, the  inequality (\ref{le}) holds by Lemma \ref{fx}.
 \end{proof}
  As a corollary of Theorem \ref{bound}, we have the following.
  \begin{theorem}
  \label{qdim}
  The 3-$(q+1,q-4,(q-4)(q-5)(q-6)/60)$ design $\bD$ from Theorem \ref{thm:10-5}
has dimension 4 over $\gf(2^m)$ for every even $m\ge 6$.
  \end{theorem}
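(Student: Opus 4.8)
The plan is to sandwich $\dim_{q}\bD$ between $4$ and $4$. The lower bound is already available: since $m\ge 6$ is even, $q=2^m\ge 64>63$, so Theorem \ref{bound} applies verbatim and gives $\dim_{q}\bD\ge 4$. Thus the only thing left to prove is the matching upper bound $\dim_{q}\bD\le 4$.

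For the upper bound I would argue directly from the definition of the $q$-dimension of a design recalled in the Introduction. By Theorem \ref{thm:10-5} the design $\bD$ is $\left(U_{q+1},\cB_{q-4}\!\left(\tr_{q^2/q}(\CE)\right)\right)$, and by Theorem \ref{thm:C-3-5} the trace code $\tr_{q^2/q}(\CE)$ has minimum weight $q-4$ and dimension $4$ over $\gf(q)$. Hence, by construction, every block $B$ of $\bD$ is the support of some codeword $\bc_B\in\tr_{q^2/q}(\CE)$ of weight $q-4$. Fixing one such $\bc_B$ for each block and stacking these vectors as rows yields a matrix $M$ whose zero pattern is exactly the block--point incidence pattern of $\bD$ and whose nonzero entries lie in $\gf(q)^{*}$; that is, $M$ is a legitimate generalized $\gf(q)$-incidence matrix of $\bD$. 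Since every row of $M$ lies in the $4$-dimensional code $\tr_{q^2/q}(\CE)$, we get $\rank_{q}M\le 4$, and therefore $\dim_{q}\bD=\min_{M'}\rank_{q}M'\le\rank_{q}M\le 4$.

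Combining the two inequalities gives $\dim_{q}\bD=4$, which is the assertion. There is no genuinely hard step here: the statement is really a corollary of Theorems \ref{thm:10-5}, \ref{thm:C-3-5} and \ref{bound}. The only point worth spelling out is the definitional one in the previous paragraph, namely that a judicious choice of nonzero codewords with the prescribed supports is an admissible generalized incidence matrix, so that the dimension of the supporting code is an upper bound on the $q$-dimension of the design; together with the elementary numerical check $2^m>63$ for $m\ge 6$ (which is exactly why the hypothesis excludes $m=4$), this completes the argument.
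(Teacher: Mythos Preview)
Your proof is correct and follows essentially the same approach as the paper: sandwich $\dim_q\bD$ between the lower bound $4$ from Theorem~\ref{bound} (using $q=2^m\ge 64$ for $m\ge 6$) and the upper bound $4$ coming from the fact that the blocks of $\bD$ are supports of minimum-weight codewords in the $4$-dimensional code $\tr_{q^2/q}(\CE)$ of Theorem~\ref{thm:C-3-5}. Your write-up is actually more explicit than the paper's on why the supporting code's dimension furnishes an upper bound for $\dim_q\bD$, but the argument is the same.
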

  \begin{proof}
  The blocks of the design $\bD$ are supprts  of minimum weight codwords in the
  $[q+1, 4, q-4]_q$ code $\C$ with $q=2^m$, $m\ge 4$ even, from Theorem \ref{thm:C-3-5}.
Since the dimension of $\C$ is 4, it follows that $\dim_{q}\bD \le 4$.
On the other hand, according to Theorem \ref{bound},
 $\dim_{q}\bD \ge 4$ for $q=2^m \ge 64$, that is, for every even 
$m\ge 6$.
\end{proof}

In the smallest case, $m=4$,  the 3-$(17,12,22)$ design $\bD$
supported by the
$[17,4,12]_{2^4}$ code $\C$ from Theorem  \ref{thm:C-3-5}
 does not
satisfy the hypothesis of Theorem \ref{bound}, thus, we only have
$\dim_{16}\bD \le \dim{\C}=4$.

It turns out that the subfield subcode
$\C' = \C|_{\gf(4)}$ of the $[17,4,12]_{2^4}$ code $\C$ is a $[17,4,12]_4$
code with weight distribution
\[  A_0 = 1, \ A_{12} = 204, \  A_{16} = 51. \]
The 68 distinct supports of codewords of weight 12 in $\C'$ are the blocks
of a 3-$(17,12,22)$ design $\bD'$ identical with the design $\bD$  supported by $\C$.
Since
\[ 3\cdot 68 > 3\cdot 4^3, \]
it follows that
\[ \dim_{4}\bD' = 4. \]

A lower bound 4 on the $q$-dimension of a 3-$(q+1,4,2)$ design for any prime power $q>26$
can be proved as in Theorem \ref{bound}.  However, this bound is far below the upper bound
provided by the dimension $q-3$ of the supporting code from Theorem \ref{thm:dual-C-3-5}.
 The following analysis of the
   $3$-$(17,4,2)$ design, the smallest design in the infinite family of 
  3-designs from Theorem \ref{thm:code-design-esp-odd}, suggests that
  the $q$-dimension is likely to be equal to
  the dimension of the supporting code.
 
The $[17,13,4]_{16}$ code  $\C$ from Theorem \ref{thm:dual-C-3-5} is a cyclic code
with generator polynomial $x^4 + x^3 + \beta^{10}x^2 + x + 1$, where $\beta$
is a primitive element of $\gf(16)$. The following vector is a codeword of weight 4:
\[ u=(1, 0, \beta^5, \beta^5, 0, 1, 0, 0, 0, 0, 0, 0, 0, 0, 0, 0, 0). \]
The twelve cyclic shifts of $u$ form a $12 \times 17$ matrix $M$
of rank 12 in echelon form. Clearly,  replacing  the nonzero entries of $M$  
by arbitrary nonzero elements of $\gf(16)$ changes $M$ to another matrix of rank 12.
It follows that the rank of every generalized $\gf(16)$-incidence matrix of the
$3$-$(17,4,2)$ design $\bD$ from Theorem \ref{thm:code-design-esp-odd} is greater than
or equal to 12. Thus, we have the following.
\begin{theorem}
\label{1742} 
Let $\bD$ be the $3$-$(17, 4, 2)$ design before. Then 
\[ 12 \le \dim_{16}\bD \le 13. \]
\end{theorem}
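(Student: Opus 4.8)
The plan is to prove the two inequalities separately, the upper bound being immediate and the lower bound being the substantive part. For the upper bound $\dim_{16}\bD \le 13$, I would simply recall that by Theorem \ref{thm:dual-C-3-5} the code $\left.\CE^{\perp}\right|_{\gf(q)}$ with $q=16$ has dimension $q-3=13$, and that by Theorem \ref{thm:code-design-esp-odd} (via Theorem \ref{thm:sub-supp-inv}) the supports of its weight-$4$ codewords are exactly the blocks of $\bD$. Choosing, for each of the $b$ blocks, one weight-$4$ codeword with that support and stacking these codewords produces a generalized $\gf(16)$-incidence matrix of $\bD$ whose rows lie in a $13$-dimensional code; hence its rank is at most $13$, and $\dim_{16}\bD\le 13$ follows from the definition of the $q$-dimension.

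For the lower bound $\dim_{16}\bD\ge 12$ I would exploit the cyclicity of $\C=\left.\CE^{\perp}\right|_{\gf(16)}$. First I would identify $\C$ concretely as the cyclic code with generator polynomial $g(x)=x^4+x^3+\beta^{10}x^2+x+1$ for a primitive $\beta\in\gf(16)^{*}$, and exhibit the weight-$4$ codeword $u=(1,0,\beta^{5},\beta^{5},0,1,0,\dots,0)$, whose support is $\{0,2,3,5\}$. Because $\C$ is cyclic, the twelve cyclic shifts $u^{(0)},u^{(1)},\dots,u^{(11)}$ are again codewords of weight $4$, so each $\mathrm{Supp}(u^{(i)})$ is a block of $\bD$. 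The shift $u^{(i)}$ has its first nonzero coordinate in position $i$ and its other nonzero coordinates in positions $i+2,i+3,i+5$, all of which are $\le 16$ for $0\le i\le 11$; consequently the $12\times 17$ matrix $M$ whose rows are these shifts, when restricted to columns $0,1,\dots,11$, is upper triangular with nonzero diagonal, hence of rank $12$.

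The key observation is that this triangular pattern is preserved under the operation defining generalized incidence matrices. In any generalized $\gf(16)$-incidence matrix $M'$ of $\bD$, the twelve rows indexed by the blocks $\mathrm{Supp}(u^{(i)})$ have exactly the same zero/nonzero pattern as the rows of $M$, only the nonzero entries being replaced by other nonzero scalars of $\gf(16)$; restricting $M'$ to columns $0,\dots,11$ therefore still gives an upper triangular $12\times 12$ matrix with nonzero diagonal, hence a nonsingular one, so $\rank_{16}M'\ge 12$. Since this holds for every generalized incidence matrix of $\bD$, we get $\dim_{16}\bD\ge 12$, and combining with the upper bound yields $12\le\dim_{16}\bD\le 13$. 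The only computational step is verifying that $g(x)$ is the generator polynomial of $\C$ and that the specified vector $u$ indeed lies in $\C$ — a finite check in $\gf(16)[x]/(x^{17}-1)$ that I would perform directly or cite from Magma; the remaining arguments are elementary linear algebra on the support pattern and involve no field-specific computation, so the main obstacle is purely that bookkeeping rather than anything conceptual.
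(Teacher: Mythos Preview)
Your proposal is correct and follows essentially the same route as the paper: the upper bound comes from the dimension~$13$ of the supporting code, and the lower bound is obtained by exhibiting the specific weight-$4$ codeword $u=(1,0,\beta^{5},\beta^{5},0,1,0,\dots,0)$ in the cyclic $[17,13,4]_{16}$ code, taking its first twelve cyclic shifts, and observing that the resulting $12\times 17$ matrix is in echelon form so that any replacement of its nonzero entries by other nonzero elements of~$\gf(16)$ preserves rank~$12$. Your write-up is in fact more explicit than the paper's (you spell out the triangular submatrix on columns $0,\dots,11$ and the fact that the twelve supports are distinct blocks), but the idea is identical.
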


\section{Summary and concluding remarks}

The main contributions of this paper are the following:
\begin{itemize}
\item A complete classification of linear codes over $\gf(2^h)$ of length $2^m+1$ that are
invariant under the action of the projective general group $\PGL_2(\gf(2^m))$ is
established 
in Theorem \ref{thm:code-PGL-4}.
\item The $2$-ranks of $3$-$(2^m+1,k, \lambda)$ designs that are invariant under the action of $\PGL_2(\gf(2^m))$ are determined
in Theorem \ref{thm;2-rank}.
\item A family of trace codes and a family of subfield subcodes, such that the set of the supports of all codewords of any fixed weight being invariant
under $\PGL_2(\gf(2^m))$, are constructed  in Theorems \ref{thm:trace-supp-inv} and \ref{thm:sub-supp-inv}. 
\item The parameters of the  $3$-designs  supported by the codewords
of minimum weight in these linear codes are presented 
in Theorem \ref{thm:10-5} and Theorem \ref{thm:code-design-esp-odd}.

\item It is proved in Theorem \ref{t26} that 
 the complementary design $\bar{\bD}$
of the $3$-design $\bD$ from Theorem \ref{thm:10-5} is isomorphic to the 
Witt spherical geometry with parameters $3$-$(2^m+1,5,1)$  \cite{W}.

\item A lower bound on the $q$-dimension of
of 3-designs with parameters 3-$(q+1,(q-4),(q-4)(q-5)(q-6)/60)$, 
$q>63$,  is derived in Theorem \ref{bound}, and it  is shown
that  an infinite family of 3-designs  described in Theorem \ref{thm:10-5} meet this bound.

\end{itemize}

We remark that the methodology of this paper may be extended to codes of length $p^m+1$ over $\gf(p)$, where $p$ is 
an odd prime. New linear codes supporting new $t$-designs may be found.




\begin{thebibliography}{99}

\bibitem{AM69} E. F. Assmus Jr., H. F. Mattson Jr., New 5-designs, J. Comb. Theory
6, 122--151, 1969.


\bibitem{BJL}
T. Beth, D. Jungnickel, H. Lenz, Design Theory, Cambridge University
Press, Cambridge, 1999.

\bibitem{magma}
W. Bosma, J. Cannon, Handbook of Magma Functions, School of Mathematics 
and Statistics, University of Sydney, Sydney, 1999.


\bibitem{CRC} C. J. Colbourn, J. F. Dinitz, Handbook of Combinatorial Designs,
Second Edition, Chapman \& Hall/CRC, Boca Raton, 2007.

\bibitem{Dickson01} 
L. E. Dickson. Linear groups: with an exposition of the Galois field theory, Teubner, Leipzig, 1901.

\bibitem{Del75} 
P. Delsarte, On subfield subcodes of modified Reed-Solomon codes, IEEE Trans. Information Theory  21(5), 575--576, 1975.

\bibitem{Dingbook18}
C. Ding, Designs from Linear Codes, World Scientific, Singapore, 2018.

\bibitem{DingTang19}
C. Ding, C. Tang, Infinite families of near MDS codes holding $t$-designs, IEEE Trans. 
Information Theory  66(9),  5419--5428, 2020. 

\bibitem{DWF} 
X. Du, R. Wang, C. Fan, Infinite families of $2$-designs from a class of cyclic codes,  Journal of Combinatorial Designs 28(3), 157--170, 2020.

\bibitem{GP10} 
M. Giorgetti, A. Previtali,  Galois invariance, trace codes and subfield subcodes, 
Finite Fields and Their Applications 16(2), 96--99, 2010. 

\bibitem{Huber05} 
M. Huber, The classification of flag-transitive Steiner 3-designs, 
Advances in Geometry 5(2), 195--221, 2005.  

\bibitem{HP03}
W. C. Huffman, V. Pless, Fundamentals of Error-Correcting Codes,
Cambridge University Press, Cambridge, 2003.

\bibitem{Hu} 
D. R. Hughes, On $t$-designs and groups,
American J. Math. 87(4), 761--778, 1965.

\bibitem{JMTW}
D. Jungnickel, S. S. Magliveras, V. D. Tonchev, A. Wassermann,
The classification of Steiner triple systems on 27 points with 3-rank 24, 
Designs, Codes, and Cryptography  87, 831--839, 2019.

\bibitem{JT13} 
D. Jungnickel and V. D. Tonchev, New invariants for incidence structures,
Designs, Codes and Cryptography 68, 163--177, 2013.

\bibitem{JT}
D. Jungnickel, V. D. Tonchev,
Counting Steiner triple systems with classical parameters and prescribed rank, 
J. Combin. Theory Ser. A  162, 10--33, 2019.

\bibitem{Passman68} 
D. S. Passman, Permutation groups, Benjamin, New York, 1968.

\bibitem{SXK} 
M. Shi, L. Xu, D. S. Krotov, The number of the non-full-rank Steiner triple systems,
J. Comb. Des. 27(10), 571--585, 2019. 


\bibitem{Tang}
C. Tang, Infinite families of $3$-designs from APN functions, 
Journal of Combinatorial Designs 28(2), 97--117, 2020.


\bibitem{TD20}
C. Tang, C. Ding,  An infinite family of linear codes supporting $4$-designs, arXiv:2001.00158, 
accepted for publication in IEEE Trans. Information Theory. 

\bibitem{Tdim} 
V. D. Tonchev, Linear perfect codes and a characterization of the classical designs, 
Designs, Codes and Cryptography 17, 121--128, 1999.

\bibitem{Tonchevhb}
V. D. Tonchev, Codes, in:
Handbook of Combinatorial Designs, 2nd Edition, C. J. Colbourn, and J. H. Dinitz, (Editors), CRC Press, New York, 2007, pp.677--701.

\bibitem{W} E. Witt,  \"Uber Steinersche Systeme,
{\it Abh. Math. Sem. Hamburg}, {\bf 12} (1938), 265 -- 275.

\bibitem{XLW}
C. Xiang, X. Ling, Q. Wang,  Combinatorial $t$-designs from quadratic functions, 
Designs, Codes and Cryptography 88(3), 553-565, 2020.

\bibitem{Z16}
D. V. Zinoviev, The number of Steiner triple systems $S(2^m - 1, 3, 2)$ of rank $2^m -m+2$ over 
$\mathbb{F}_2$, Discrete Math.  339, 2727--2736, 2016.

\bibitem{ZZ12}
V. A. Zinoviev, D. V. Zinoviev, Steiner triple systems $S(2^m - 1, 3, 2)$ of rank $2^m -m+1$ 
over $\mathbb{F}_2$, 
Problems of Information Transmission 48, 102--126, 2012.

\bibitem{ZZ13}
V. A. Zinoviev, D. V. Zinoviev, Structure of Steiner triple systems $S(2^m - 1, 3, 2)$ of rank 
$2^m -m+2$ over $\mathbb{F}_2$, 
Problems of Information Transmission 49, 232--248, 2013.

\bibitem{ZZ13a}
V. A. Zinoviev, D. V. Zinoviev, Remark on ``Steiner triple systems $S(2^m - 1, 3, 2)$ of rank 
$2^m -m+1$ over $\mathbb{F}_2$ published
 in \emph{Probl. Peredachi Inf.}, 2012, no. 2." Problems of Information Transmission 49, 107--111, 2013.








\end{thebibliography}
\end{document}